
\documentclass[11pt,twoside,a4paper,notitlepage]{article}

\usepackage{ifthen}

\newboolean{conferenceversion}
\setboolean{conferenceversion}{false}

\usepackage[a4paper,margin=2.5cm]{geometry}
\usepackage{times}

\usepackage[english]{babel}

\usepackage{fancyhdr}

\usepackage[sf,SF]{subfigure}

\usepackage{bussproofs}

\usepackage{tikz}
\usepackage{tikz-3dplot}
\usetikzlibrary{shapes}

\makeatletter{}%
\usepackage{ifthen}

\makeatletter{}%

\usepackage{ifthen}

\newboolean{detectedSTOC}
\newboolean{detectedFOCS}
\newboolean{detectedElsevier}
\newboolean{detectedNOW}
\newboolean{detectedLMCS}
\newboolean{detectedPoster}
\newboolean{detectedSIAM}
\newboolean{detectedIEEE}
\newboolean{detectedLNCS}
\newboolean{detectedACM}
\newboolean{detectedSigplanconf}
\newboolean{detectedToC}
\newboolean{detectedLIPIcs}
\newboolean{detectedAAAI}
\newboolean{detectedCompCplx}
\newboolean{detectedArticle}
\newboolean{detectedReport}
\newboolean{detectedThesis}

\makeatletter

\@ifclassloaded{sig-alternate}
{\setboolean{detectedSTOC}{true}}
{\setboolean{detectedSTOC}{false}}

\@ifclassloaded{elsarticle}
{\setboolean{detectedElsevier}{true}}
{\setboolean{detectedElsevier}{false}}

\@ifclassloaded{now}
{\setboolean{detectedNOW}{true}}
{\setboolean{detectedNOW}{false}}

\@ifclassloaded{lmcs}
{\setboolean{detectedLMCS}{true}}
{\setboolean{detectedLMCS}{false}}

\@ifclassloaded{IEEEtran} {
  \setboolean{detectedIEEE}{true}
  \ifCLASSOPTIONconference {
    \setboolean{detectedFOCS}{true}
  }
  \else {
    \setboolean{detectedFOCS}{false}
  }
  \fi
}
{
  \setboolean{detectedFOCS}{false}
  \setboolean{detectedIEEE}{false}
}

\@ifclassloaded{siamltex1213} 
{\setboolean{detectedSIAM}{true}}
{\setboolean{detectedSIAM}{false}}

\@ifclassloaded{llncs}
{\setboolean{detectedLNCS}{true}}
{\setboolean{detectedLNCS}{false}}

\@ifclassloaded{acmsmall}
{\setboolean{detectedACM}{true}}
{\setboolean{detectedACM}{false}}

\@ifclassloaded{sigplanconf}
{\setboolean{detectedSigplanconf}{true}}
{\setboolean{detectedSigplanconf}{false}}

\@ifclassloaded{toc}
{\setboolean{detectedToC}{true}}
{\setboolean{detectedToC}{false}}

\@ifclassloaded{lipics}
{\setboolean{detectedLIPIcs}{true}}
{\@ifclassloaded{lipics-v2016}
  {\setboolean{detectedLIPIcs}{true}}
  {\setboolean{detectedLIPIcs}{false}}
}

\@ifclassloaded{cc}
{\setboolean{detectedCompCplx}{true}}
{\setboolean{detectedCompCplx}{false}}

\@ifpackageloaded{aaai}
{\setboolean{detectedAAAI}{true}}
{\setboolean{detectedAAAI}{false}}

\@ifclassloaded{sciposter}
{\setboolean{detectedPoster}{true}}
{\setboolean{detectedPoster}{false}}

\@ifclassloaded{article}
{\setboolean{detectedArticle}{true}}
{\setboolean{detectedArticle}{false}}

\makeatother

\ifthenelse{\not \isundefined{\examen} 
  \and \not \isundefined{\disputationsdatum} 
  \and \not \isundefined{\disputationslokal}}   
  {\setboolean{detectedThesis}{true}}
  {\setboolean{detectedThesis}{false}}

\ifthenelse{\boolean{detectedArticle} \or \boolean{detectedThesis}
  \or \boolean{detectedSTOC} \or \boolean{detectedFOCS}
  \or \boolean{detectedSIAM} \or \boolean{detectedIEEE}
  \or \boolean{detectedPoster}}
{\setboolean{detectedReport}{false}}
{\setboolean{detectedReport}{true}}

\ifthenelse{\boolean{detectedLIPIcs}}
{}
{\usepackage[utf8]{inputenc}}

\ifthenelse{\boolean{detectedIEEE}}
{\usepackage[cmex10]{amsmath}
 \interdisplaylinepenalty=2500}
{\usepackage{amsmath}}
\usepackage{amssymb}
\usepackage{amsfonts}

\usepackage{mathtools}

\ifthenelse
{\boolean{detectedElsevier}}
{}
{\usepackage{varioref}}

\usepackage{xspace}

\ifthenelse    
{\boolean{detectedSTOC}      \or \boolean{detectedSIAM}         \or 
  \boolean{detectedLMCS}     \or \boolean{detectedNOW}          \or 
  \boolean{detectedLNCS}     \or \boolean{detectedACM}          \or
  \boolean{detectedToC}      \or \boolean{detectedLIPIcs}       \or
  \boolean{detectedCompCplx} \or \boolean{detectedSigplanconf}}
{}
{
  \usepackage{amsthm}
  \usepackage{amsthm-boldfont}
}

\ifthenelse        
{\boolean{detectedElsevier}  \or \boolean{detectedFOCS}         \or 
  \boolean{detectedSTOC}     \or \boolean{detectedPoster}       \or
  \boolean{detectedSIAM}     \or \boolean{detectedLMCS}         \or
  \boolean{detectedIEEE}     \or \boolean{detectedNOW}          \or
  \boolean{detectedToC}      \or \boolean{detectedThesis}       \or
  \boolean{detectedLNCS}     \or \boolean{detectedACM}          \or 
  \boolean{detectedLIPIcs}   \or \boolean{detectedAAAI}         \or
  \boolean{detectedCompCplx} \or \boolean{detectedSigplanconf}}
{}
{\usepackage{jncaptionsheadings}}

\makeatletter{}%

\usepackage{ifthen}
\usepackage{xspace}
\ifthenelse
{\boolean{detectedElsevier}}
{}
{\usepackage{varioref}}

\DeclareMathAlphabet{\mathsfsl}{OT1}{cmss}{m}{sl}

\newcommand{\formatfunctiontoset}[1]{\mathit{#1}}

\newcommand{\introduceterm}[1]{{\emph{#1}}}

\newcommand{\eqperiod}{\enspace .}
\newcommand{\eqcomma}{\enspace ,}

\newcommand{\wrt}{with respect to\xspace}

\ifthenelse{\boolean{detectedToC}}{}
  { %
    }

\ifthenelse{\boolean{detectedToC}}{}{\newcommand{\ie}{i.e.,\ }}

\ifthenelse{\boolean{detectedLIPIcs}}
{}
{}

\newcommand{\etal}{et al.\@\xspace}

\ifthenelse{\boolean{detectedIEEE}}{}{}

\newcommand{\bigoh}[1]{\mathrm{O} ( #1 )}

\newcommand{\littleoh}[1]{\mathrm{o} ( #1 )}

\newcommand{\bigtheta}[1]{\Theta ( #1 )}
\newcommand{\BIGOMEGA}[1]{\Omega \left( #1 \right)}
\newcommand{\Bigomega}[1]{\Omega \bigl( #1 \bigr)}
\newcommand{\bigomega}[1]{\Omega ( #1 )}

\newcommand{\complclassformat}[1]%
        {\textrm{\upshape{\textsf{#1}}}\xspace}

\newcommand{\cocomplclass}[1]%
        {\textrm{\upshape{\textsf{co#1}}}\xspace}

\newcommand{\DTIMEadviceclass}[2]%
    {\ensuremath{\complclassformat{DTIME}\bigl(#1\bigr)/{#2}}}
\newcommand{\Pclass}{\complclassformat{P}}
\newcommand{\NP}{\complclassformat{NP}}

\newcommand{\coNP}{\cocomplclass{NP}}

\newcommand{\PSPACE}{\complclassformat{PSPACE}}

\newcommand{\EXPTIME}{\complclassformat{EXPTIME}}

\newcommand{\refsec}[1]{Section~\ref{#1}}

\newcommand{\refapp}[1]{Appendix~\ref{#1}}

\newcommand{\refth}[1]{Theorem~\ref{#1}}

\newcommand{\refthm}[1]{Theorem~\ref{#1}}

\newcommand{\reflem}[1]{Lemma~\ref{#1}}
\newcommand{\reftwolems}[2]{Lemmas~\ref{#1} and~\ref{#2}}

\newcommand{\refdef}[1]{Definition~\ref{#1}}

\newcommand{\refobs}[1]{Observation~\ref{#1}}

\ifthenelse
{\isundefined{\refeq}}
{\newcommand{\refeq}[1]{\eqref{#1}}}
{\renewcommand{\refeq}[1]{\eqref{#1}}}

\newcommand{\Floor}[1]{\bigl \lfloor #1 \bigr \rfloor}

\ifthenelse{\boolean{detectedToC}}{}
{

  \newcommand{\Nplus}     {\mathbb{N}^{+}}

}

\newcommand{\MAXOFEXPR}[2][]{\max_{#1} \left\{ #2 \right\}}
\newcommand{\MINOFEXPR}[2][]{\min_{#1} \left\{ #2 \right\}}
\newcommand{\Maxofexpr}[2][]{\max_{#1} \bigl\{ #2 \bigr\}}
\newcommand{\Minofexpr}[2][]{\min_{#1} \bigl\{ #2 \bigr\}}
\newcommand{\maxofexpr}[2][]{\max_{#1} \{ #2 \}}
\newcommand{\minofexpr}[2][]{\min_{#1} \{ #2 \}}

\newcommand{\MAXOFSET}[3][:]{\max \left\{ #2 #1 #3 \right\}}
\newcommand{\MINOFSET}[3][:]{\min \left\{ #2 #1 #3 \right\}}
\newcommand{\Maxofset}[3][:]{\max \bigl\{ #2 #1 #3 \bigr\}}
\newcommand{\Minofset}[3][:]{\min \bigl\{ #2 #1 #3 \bigr\}}

\renewcommand{\MAXOFSET}[3][:]%
     {\ifthenelse{\equal{#1}{;}}%
     {\MAXOFEXPR{ #2 \,;\, #3 }}
     {\ifthenelse{\equal{#1}{:}}%
     {\MAXOFEXPR{ #2 \,:\, #3 }}
     {\max \twincommandJN{\left\{}{#2}{\left#1}{\right}{\,#3}{\right\}}}}}
\renewcommand{\MINOFSET}[3][:]%
     {\ifthenelse{\equal{#1}{;}}%
     {\MINOFEXPR{ #2 \,;\, #3 }}
     {\ifthenelse{\equal{#1}{:}}%
     {\MINOFEXPR{ #2 \,:\, #3 }}
     {\min \twincommandJN{\left\{}{#2}{\left#1}{\right}{\,#3}{\right\}}}}}

\renewcommand{\Maxofset}[3][:]%
     {\ifthenelse{\equal{#1}{;}}%
     {\Maxofexpr{ #2 \,;\, #3 }}
     {\ifthenelse{\equal{#1}{:}}%
     {\Maxofexpr{ #2 \,:\, #3 }}
     {\max \twincommandJN{\bigl\{}{#2}{\bigl#1}{\bigr}{\,#3}{\bigr\}}}}}
\renewcommand{\Minofset}[3][:]%
     {\ifthenelse{\equal{#1}{;}}%
     {\Minofexpr{ #2 \,;\, #3 }}
     {\ifthenelse{\equal{#1}{:}}%
     {\Minofexpr{ #2 \,:\, #3 }}
     {\min \twincommandJN{\bigl\{}{#2}{\bigl#1}{\bigr}{\,#3}{\bigr\}}}}}

\DeclareMathOperator{\Expop}{E}

\ifthenelse{\boolean{detectedLMCS}}
{}
{}

\newcommand{\twincommandJN}[6]%
    {#1#2#3\vphantom{#2#5}\mspace{-2.05mu}#4.#5#6}

\newcommand{\CondExp}[2]%
    {\Expop\twincommandJN{\bigl[}{#1}{\bigl|}{\bigr}{\,#2}{\bigr]}}
\newcommand{\CONDEXP}[2]%
     {\Expop\twincommandJN{\left[}{#1}{\left|}{\right}{\,#2}{\right]}}

\newcommand{\CondProb}[3][]%
    {\Pr_{#1}\twincommandJN{\bigl[}{#2}{\bigl|}{\bigr}{\,#3}{\bigr]}}
\newcommand{\CONDPROB}[3][]%
    {\Pr_{#1}\twincommandJN{\left[}{#2}{\left|}{\right}{\,#3}{\right]}}

\newcommand{\isdistras}[2]{\ensuremath{#1} \sim \ensuremath{#2}}

\newcommand{\descname}{\formatfunctiontoset{desc}}

\ifthenelse{\isundefined{\descnode}}
{\newcommand{\descnode}[2][]{\descname_{#1}(#2)}}
{\renewcommand{\descnode}[2][]{\descname_{#1}(#2)}}

\newcommand{\setcompact}[1]{{\ensuremath{\bigl\{ #1 \bigr\}}}}

\newcommand{\setdescrcompact}[3][\mid]{{\setcompact{ #2 #1 #3 }}}

\newcommand{\setsizesmall}[1]{\ensuremath{\lvert#1\rvert}}

\newcommand{\set}[1]{\{ #1 \}}
\newcommand{\Set}[1]{\bigl\{ #1 \bigr\}}

\newcommand{\setdescr}[3][\mid]{\set{ #2 #1 #3 }}
\newcommand{\Setdescr}[3][|]%
     {\ifthenelse{\equal{#1}{;}}%
     {\Set{ #2 \,;\, #3 }}
     {\ifthenelse{\equal{#1}{:}}%
     {\Set{ #2 \,:\, #3 }}
     {\twincommandJN{\bigl\{}{#2\,}{\bigl#1}{\bigr}{\,#3}{\bigr\}}}}}
\newcommand{\SETDESCR}[3][|]%
     {\twincommandJN{\left\{}{#2\,}{\left#1}{\right}{\,#3}{\right\}}}

\newcommand{\Setdescrbrackets}[3][|]%
     {\twincommandJN{\bigl[}{#2}{\bigl#1}{\bigr}{\,#3}{\bigr]}}
\newcommand{\SETDESCRBRACKETS}[3][|]%
     {\twincommandJN{\left[}{#2}{\left#1}{\right}{\,#3}{\right]}}

\newcommand{\Setsize}[1]{\bigl\lvert#1\bigr\rvert}
\newcommand{\setsize}[1]{\lvert#1\rvert}

\newcommand{\intersection}{\cap}

\newcommand{\union}{\cup}
\newcommand{\Union}{\bigcup}

\newcommand{\unionSP}{\, \union \, }

\newcommand{\disjointunion}{\overset{.}{\cup}}

\newcommand{\DisjointunionInText}%
    {{\smash{\overset{\mbox{\boldmath{.}}}{\bigcup}}}\vphantom{\bigcup}}
\newcommand{\intnfirst}[1]{[{#1}]}

\newcommand{\Lor}{\bigvee}
\newcommand{\Land}{\bigwedge}

\newcommand{\olnot}[1]{\overline{#1}}
\newcommand{\stdnot}[1]{\olnot{#1}}

\newcommand{\cnfform}{\cnfshort for\-mu\-la\xspace}

\newcommand{\cnfshort}{CNF\xspace}

\newcommand{\xcnf}[1]{\mbox{\ensuremath{#1}-CNF}\xspace}

\newcommand{\xcnfform}[1]{\mbox{\ensuremath{#1}-}\cnfform}
\newcommand{\kcnfform}{\xcnfform{\clwidth}}
\newcommand{\xclause}[1]{\mbox{\ensuremath{#1}-clause}\xspace}

\newcommand{\nvar}{n}
\newcommand{\nvars}{\nvar}
\newcommand{\nclause}{m}

\newcommand{\clwidth}{k}

\newcommand{\randkcnfnclwrepl}[3][\clwidth]%
        {\ensuremath{\mathcal{F}^{#2, #3}_{#1}}}
\newcommand{\randkcnfnclwreplstd}%
        {\randkcnfnclwrepl{\clwidth}{\nvar}{\nclause}}

\newcommand{\israndkcnfnclwrepl}[4]%
  {\isdistras{#1}{\randkcnfnclwrepl[#2]{#3}{#4}}}

\newcommand{\randkcnfprobcl}[3]%
        {\ensuremath{\mathcal{F}^{#2}_{#1} \bigl(#3 \bigr)}}

\newcommand{\pcfor}[4][to]{for #2 := #3 #1 #4 do}
\newcommand{\pcformath}[4][to]%
    {\pcfor[#1]{\ensuremath{#2}}{\ensuremath{#3}}{\ensuremath{#4}}}

\newcommand{\pcassigncompact}[2]{#1 := #2}
\newcommand{\pcassignmathcompact}[2]%
        {\pcassigncompact{\ensuremath{#1}}{\ensuremath{#2}}}

\newcommand{\inductionformat}[1]{\textit{#1}}

\newcommand{\BASE}[1][]
        {\inductionformat
                {%
                        \ifthenelse{\equal{#1}{}}%
                                {Base case: }%
                                {Base case (#1):}%
                }%
        }

\ifthenelse{\isundefined{\DONOTLOADHYPERREFPACKAGE}
  \and \not \boolean{detectedSTOC}     \and \not \boolean{detectedFOCS}
  \and \not \boolean{detectedPoster}   \and \not \boolean{detectedElsevier} 
  \and \not \boolean{detectedSIAM}     \and \not \boolean{detectedACM}
  \and \not \boolean{detectedIEEE}     \and \not \boolean{detectedNOW}
  \and \not \boolean{detectedToC}      \and \not \boolean{detectedThesis}
  \and \not \boolean{detectedLNCS}     \and \not \boolean{detectedLIPIcs}
  \and \not \boolean{detectedAAAI}     \and \not \boolean{detectedSigplanconf}
  \and \not \boolean{detectedCompCplx}}
{\usepackage[colorlinks=true,citecolor=blue]{hyperref}}
{}

\ifthenelse{\boolean{detectedSTOC}}                  
  {%
\makeatletter{}%

\newtheorem{theorem}{Theorem}
\newtheorem{lemma}[theorem]{Lemma}
\newtheorem{proposition}[theorem]{Proposition}
\newtheorem{corollary}[theorem]{Corollary}
\newtheorem{observation}[theorem]{Observation}

\newtheorem{definition}[theorem]{Definition}
\newtheorem{claim}[theorem]{Claim}

\newtheorem{conjecture}{Conjecture}
\newtheorem{openproblem}[conjecture]{Open Problem}

\newdef{remark}{Remark}
\newdef{example}{Example}

\newcounter{unnumber}

}
  {\ifthenelse{\boolean{detectedSIAM}}
    {%
\makeatletter{}%

\newtheorem{observation}[theorem]{Observation}
\newtheorem{claim}[theorem]{Claim}

\newtheorem{conjecture}{Conjecture}
\newtheorem{openquestion}{Open Question}

\newtheorem{remarkinner}[theorem]{Remark}
\newtheorem{exampleinner}[theorem]{Example}

\newcommand{\exampleendmarker}{\qquad$\Diamond$}
\newcommand{\remarkendmarker}{\qquad$\Diamond$}

\newenvironment{example}                        
    {\begin{exampleinner} \rm}
    {\exampleendmarker\end{exampleinner}}

\newenvironment{remark}                        
    {\begin{remarkinner} \rm}
    {\remarkendmarker\end{remarkinner}}

\newcounter{unnumber}

}
    {\ifthenelse{\boolean{detectedNOW}}
      {%
\makeatletter{}%
\newtheorem{standardlocalcounter}{Dummy}[chapter]
\newtheorem{standardglobalcounter}{Dummy}

\newtheorem{theorem}[standardlocalcounter]{Theorem}
\newtheorem{lemma}[standardlocalcounter]{Lemma}
\newtheorem{proposition}[standardlocalcounter]{Proposition}
\newtheorem{corollary}[standardlocalcounter]{Corollary}
\newtheorem{observation}[standardlocalcounter]{Observation}
\newtheorem{fact}[standardlocalcounter]{Fact}

\newtheorem{conjecturelocalcounter}[standardlocalcounter]{Conjecture}
\newtheorem{conjectureglobalcounter}[standardglobalcounter]{Conjecture}
\newtheorem{conjecture}[standardglobalcounter]{Conjecture}
\newtheorem{openquestion}[standardglobalcounter]{Open Question}
\newtheorem{openproblem}[standardglobalcounter]{Open Problem}
\newtheorem{problem}{Problem}

\newtheorem{property}[standardlocalcounter]{Property}
\newtheorem{definition}[standardlocalcounter]{Definition}
\newtheorem{claim}[standardlocalcounter]{Claim}

\newtheorem{algorithm}[standardlocalcounter]{Algorithm}

\newtheorem{remark}[standardlocalcounter]{Remark}
\newtheorem{example}[standardlocalcounter]{Example}

\renewenvironment{proof}[1][Proof]{\par\trivlist
   \item[\hskip \labelsep{\itshape {#1}.}]\prooffont}
   {\hspace*{0pt plus1fill}\fboxsep2.5pt\fboxrule.5pt\raise3pt\hbox{\fbox{}}\endtrivlist}
}
      {\ifthenelse{\boolean{detectedLMCS}}
        {%
\makeatletter{}%

\theoremstyle{plain}    

\newtheorem{theorem}[thm]{Theorem}
\newtheorem{lemma}[thm]{Lemma}
\newtheorem{proposition}[thm]{Proposition}
\newtheorem{corollary}[thm]{Corollary}
\newtheorem{observation}[thm]{Observation}

\newtheorem{conjecture}[thm]{Conjecture}
\newtheorem{problem}[thm]{Problem}

\newtheorem{openquestion}{Open Question}
\newtheorem{openproblem}{Open Problem}

\theoremstyle{definition}

\newtheorem{property}[thm]{Property}
\newtheorem{definition}[thm]{Definition}
\newtheorem{claim}[thm]{Claim}
\newtheorem{remark}[thm]{Remark}
\newtheorem{example}[thm]{Example}

}
        {\ifthenelse{\boolean{detectedIEEE}}
          {%
\makeatletter{}%
\newtheorem{standardlocalcounter}{Dummy}[section]
\newtheorem{standardglobalcounter}{Dummy}

\theoremstyle{plain}    

\newtheorem{theorem}[standardglobalcounter]{Theorem}
\newtheorem{lemma}[standardglobalcounter]{Lemma}
\newtheorem{proposition}[standardglobalcounter]{Proposition}
\newtheorem{corollary}[standardglobalcounter]{Corollary}
\newtheorem{observation}[standardglobalcounter]{Observation}
\newtheorem{fact}[standardglobalcounter]{Fact}

\newtheorem{conjecture}[standardglobalcounter]{Conjecture}
\newtheorem{openquestion}{Open Question}
\newtheorem{openproblem}{Open Problem}
\newtheorem{problem}{Problem}

\theoremstyle{definition}

\newtheorem{property}[standardglobalcounter]{Property}
\newtheorem{definition}[standardglobalcounter]{Definition}
\newtheorem{claim}[standardglobalcounter]{Claim}

\theoremstyle{remark}
\newtheorem{remark}[standardglobalcounter]{Remark}
\newtheorem{example}[standardglobalcounter]{Example}

\newtheoremstyle{meta}%
  {3pt}%
  {3pt}%
  {\scshape \small }%
  {}%
  {\scshape \small }%
  {:}%
  { }%
  {}%

\theoremstyle{meta}
\newtheorem{meta}{Meta comment}

\newtheoremstyle{questions}%
  {3pt}%
  {3pt}%
  {\sffamily \slshape}%
  {}%
  {\bfseries \sffamily \slshape}%
  {:}%
  { }%
  {}%

\theoremstyle{questions}
\newtheorem{questions}{Open questions}

}
          {\ifthenelse{\boolean{detectedLNCS}}
            {%
\makeatletter{}%

\spnewtheorem*{proofsketch}{Proof sketch}{\itshape}{\rmfamily}

\spnewtheorem{observation}{Observation}{\bfseries}{\itshape}
\spnewtheorem{fact}{Fact}{\bfseries}{\itshape}

}
            {\ifthenelse{\boolean{detectedACM}}
              {%
\makeatletter{}%

\newtheorem{observation}[theorem]{Observation}
\newtheorem{fact}[theorem]{Fact}
\newtheorem{claim}[theorem]{Claim}
\newtheorem{openquestion}{Open Question}
\newtheorem{openproblem}{Open Problem}

\newcounter{unnumber}

}
              {\ifthenelse{\boolean{detectedToC}}
                {%
\makeatletter{}%

\theoremstyle{plain}
\newtheorem{observation}[theorem]{Observation}
\newtheorem{openproblem}[theorem]{Open Problem}

\theoremstyle{definition}
\newtheorem{property}[theorem]{Property}

\renewcommand{\refth}[1]{\expref{Theorem}{#1}}
\renewcommand{\reflem}[1]{\expref{Lemma}{#1}}

\renewcommand{\refdef}[1]{\expref{Definition}{#1}}

\renewcommand{\refobs}[1]{\expref{Observation}{#1}}

\renewcommand{\refsec}[1]{\expref{Section}{#1}}

\renewcommand{\refapp}[1]{\expref{Appendix}{#1}}

}
                {\ifthenelse{\boolean{detectedLIPIcs}}
                  {%
\makeatletter{}%

\theoremstyle{plain}    

\newtheorem{fact}[theorem]{Fact}
\newtheorem{proposition}[theorem]{Proposition}
\newtheorem{observation}[theorem]{Observation}
\newtheorem{claim}[theorem]{Claim}
}
                  {\ifthenelse{\boolean{detectedCompCplx}}
                    {%
\makeatletter{}%
}
                    {\ifthenelse{\boolean{detectedSigplanconf}}
                      {%
\makeatletter{}%

\usepackage{amsthm}          
\usepackage{amsthm-boldfont}
\usepackage{url}

\newtheorem{standardlocalcounter}{Dummy}[section]
\newtheorem{standardglobalcounter}{Dummy}

\theoremstyle{plain}    

\newtheorem{theorem}[standardlocalcounter]{Theorem}
\newtheorem{lemma}[standardlocalcounter]{Lemma}
\newtheorem{proposition}[standardlocalcounter]{Proposition}
\newtheorem{corollary}[standardlocalcounter]{Corollary}
\newtheorem{observation}[standardlocalcounter]{Observation}
\newtheorem{fact}[standardlocalcounter]{Fact}

\newtheorem{conjecturelocalcounter}[standardlocalcounter]{Conjecture}
\newtheorem{conjectureglobalcounter}[standardglobalcounter]{Conjecture}
\newtheorem{conjecture}[standardglobalcounter]{Conjecture}
\newtheorem{openquestion}[standardglobalcounter]{Open Question}
\newtheorem{openproblem}[standardglobalcounter]{Open Problem}
\newtheorem{problem}[standardglobalcounter]{Problem}
\newtheorem{question}[standardglobalcounter]{Question}

\theoremstyle{definition}

\newtheorem{property}[standardlocalcounter]{Property}
\newtheorem{definition}[standardlocalcounter]{Definition}
\newtheorem{claim}[standardlocalcounter]{Claim}
\newtheorem{subclaim}[standardlocalcounter]{Subclaim}

\newtheorem{algorithm}[standardlocalcounter]{Algorithm}

\theoremstyle{remark}
\newtheorem{remark}[standardlocalcounter]{Remark}
\newtheorem{example}[standardlocalcounter]{Example}

}
                      {\ifthenelse{\boolean{detectedArticle} 
                          \or \boolean{detectedElsevier}}
                        {%
\makeatletter{}%
\newtheorem{standardlocalcounter}{Dummy}[section]
\newtheorem{standardglobalcounter}{Dummy}

\theoremstyle{plain}    

\newtheorem{theorem}[standardlocalcounter]{Theorem}
\newtheorem{lemma}[standardlocalcounter]{Lemma}
\newtheorem{proposition}[standardlocalcounter]{Proposition}
\newtheorem{corollary}[standardlocalcounter]{Corollary}
\newtheorem{observation}[standardlocalcounter]{Observation}

\newtheorem{conjecturelocalcounter}[standardlocalcounter]{Conjecture}
\newtheorem{conjectureglobalcounter}[standardglobalcounter]{Conjecture}
\newtheorem{conjecture}[standardglobalcounter]{Conjecture}
\newtheorem{openquestion}[standardglobalcounter]{Open Question}
\newtheorem{openproblem}[standardglobalcounter]{Open Problem}
\newtheorem{problem}[standardglobalcounter]{Problem}

\theoremstyle{definition}

\newtheorem{property}[standardlocalcounter]{Property}
\newtheorem{definition}[standardlocalcounter]{Definition}
\newtheorem{claim}[standardlocalcounter]{Claim}

\theoremstyle{remark}
\newtheorem{remark}[standardlocalcounter]{Remark}
\newtheorem{example}[standardlocalcounter]{Example}

%
%
%
%
%
%
%
%
%
%
%
%
%
%
%
%
%
%
%
%
%
%
%
%
%
%
%
%
%
%
%
%
%
%
%
%
}
                        {%
\makeatletter{}%
\newtheorem{standardlocalcounter}{Dummy}[chapter]
\newtheorem{standardglobalcounter}{Dummy}

\theoremstyle{plain}    

\newtheorem{theorem}[standardlocalcounter]{Theorem}
\newtheorem{lemma}[standardlocalcounter]{Lemma}

\newtheorem{observation}[standardlocalcounter]{Observation}

\theoremstyle{definition}

\newtheorem{definition}[standardlocalcounter]{Definition}
\newtheorem{claim}[standardlocalcounter]{Claim}

\theoremstyle{remark}

\newtheoremstyle{meta}%
  {3pt}%
  {3pt}%
  {\scshape \small }%
  {}%
  {\scshape \small }%
  {:}%
  { }%
  {}%

\theoremstyle{meta}

\newtheoremstyle{questions}%
  {3pt}%
  {3pt}%
  {\sffamily \slshape}%
  {}%
  {\bfseries \sffamily \slshape}%
  {:}%
  { }%
  {}%

\theoremstyle{questions}

}}}}}}}}}}}}

\ifthenelse{\boolean{detectedThesis}}
{}
{\usepackage{ifpdf}}

\ifthenelse{\boolean{detectedToC}}
{}
{\usepackage{graphicx}}

\ifpdf         
\DeclareGraphicsRule{*}{mps}{*}{}
\fi

\ifthenelse
{\boolean{detectedSTOC}  \or \boolean{detectedThesis} \or 
 \boolean{detectedLNCS}  \or \boolean{detectedToC}    \or 
 \boolean{detectedLIPIcs} \or \boolean{detectedAAAI}}
{}
{
  \setcounter{secnumdepth}{3}
  \setcounter{tocdepth}{3}
}

\newcount\hours
\newcount\minutes
\def\SetTime{\hours=\time
\global\divide\hours by 60
\minutes=\hours
\multiply\minutes by 60
\advance\minutes by-\time
\global\multiply\minutes by-1 }
\SetTime
\def\now{\number\hours:\ifnum\minutes<10 0\fi\number\minutes}

\makeatletter{}%
\newcommand{\formuladots}{\cdots}

\newcommand{\boundary}[1]{\ensuremath{\partial #1}}

\newcommand{\proofstd}{\ensuremath{\pi}}

\newcommand{\derivrulebinary}[3]{\frac{#1 \quad #2}{#3}}
\newcommand{\derivruleunary}[2]{\frac{#1}{#2}}

\DeclareFontFamily{OT1}{pzc}{}
\DeclareFontShape{OT1}{pzc}{m}{it}{<-> s * [1.200] pzcmi7t}{}
\DeclareMathAlphabet{\mathpzc}{OT1}{pzc}{m}{it}

\newcommand{\derivof}[4][\derives]
        {{\ensuremath{{#2} : {#3} \, {#1}\, {#4}}}}

\newcommand{\refof}[2]{\derivof{#1}{#2}{\falsenum}}
\renewcommand{\refof}[2]{\derivof{#1}{#2}{\emptycl}}
\renewcommand{\refof}[2]{\derivof{#1}{#2}{\bot}}

\newcommand{\deriveswithall}%
        {\vdash_{\!\!\!{\scriptscriptstyle \forall}}} 
\newcommand{\notderiveswithall}%
        {\nvdash_{\!\!\!{\scriptscriptstyle \forall}}} 

\newcommand{\resrule}[3]{\derivrulebinary{#1}{#2}{#3}}

\newcommand{\weakrule}[2]{\derivruleunary{#1}{#2}}

\newcommand{\clcfgtransitioncrammed}[2]%
        {\ensuremath{#1 \!\rightsquigarrow\! #2}}

\newcommand{\tvastd}{{\ensuremath{\alpha}}}

\newcommand{\fstd}{{\ensuremath{F}}}
\newcommand{\fvar}{{\ensuremath{G}}}
\newcommand{\falt}{\fvar}

\newcommand{\formf}{\ensuremath{F}}

\newcommand{\emptycl}{\bot}

\newcommand{\varx}{\ensuremath{x}}
\newcommand{\vary}{\ensuremath{y}}

\newcommand{\lita}{\ensuremath{a}}

\newcommand{\cla}{\ensuremath{A}}
\newcommand{\clb}{\ensuremath{B}}
\newcommand{\clc}{\ensuremath{C}}
\newcommand{\cld}{\ensuremath{D}}

\newcommand{\clausesetformat}[1]{\ensuremath{\mathbb{#1}}}
\newcommand{\clsc}{\clausesetformat{C}}
\newcommand{\clsd}{\clausesetformat{D}}

\newcommand{\setsofvarsorlitlarge}[2]%
        {\mathit{#1}\left({#2}\right)}
\newcommand{\setsofvarsorlit}[2]%
        {\mathit{#1}({#2})}
\newcommand{\setsofvarsorlitcompact}[2]%
        {\mathit{#1}\bigl({#2}\bigr)}

\newcommand{\setsofvarsorlitsup}[3]%
        {\mathit{#1}^{#2}({#3})}
\newcommand{\setsofvarsorlitsuplarge}[3]%
        {\mathit{#1}^{#2}\left({#3}\right)}
\newcommand{\setsofvarsorlitsupcompact}[3]%
        {\mathit{#1}^{#2}\bigl({#3}\bigr)}

\newcommand{\derivabbrev}[2]{\bigl( #1 \vdash #2 \bigr)}
\newcommand{\derivabbrevsmall}[2]{( #1 \vdash #2 )}
\newcommand{\derivabbrevcompact}[2]{\bigl( #1 \vdash #2 \bigr)}

\newcommand{\refutabbrevsmall}[1]{\derivabbrevsmall{#1}{\falsenum}}
\newcommand{\refutabbrevcompact}[1]{\derivabbrevcompact{#1}{\falsenum}}

\renewcommand{\refutabbrevsmall}[1]{\derivabbrevsmall{#1}{\!\emptycl}}
\renewcommand{\refutabbrevcompact}[1]{\derivabbrevcompact{#1}{\!\emptycl}}

\renewcommand{\refutabbrevsmall}[1]{\derivabbrevsmall{#1}{\!\bot}}
\renewcommand{\refutabbrevcompact}[1]{\derivabbrevcompact{#1}{\!\bot}}

\newcommand{\genericmeasure}[2]{{\mathit{#1}}_{#2}}

\newcommand{\genericformsmall}[2]{\mathit{#1}( #2 )}

\newcommand{\genericrefsmall}[3]%
    {{\mathit{#1}}_{#2}\refutabbrevsmall{#3}}
\newcommand{\genericrefcompact}[3]%
    {{\mathit{#1}}_{#2}\refutabbrevcompact{#3}}
\newcommand{\genericderiv}[4]%
    {{\mathit{#1}}_{#2}\derivabbrev{#3}{#4}}
\newcommand{\genericderivsmall}[4]%
    {{\mathit{#1}}_{#2}\derivabbrevsmall{#3}{#4}}
\newcommand{\genericderivcompact}[4]%
    {{\mathit{#1}}_{#2}\derivabbrevcompact{#3}{#4}}
\newcommand{\generictaut}[3]%
    {{\mathit{#1}}_{#2}\derivabbrev{}{#3}}
\newcommand{\generictautcompact}[3]%
    {{\mathit{#1}}_{#2}\derivabbrevcompact{}{#3}}
\newcommand{\generictautsmall}[3]%
    {{\mathit{#1}}_{#2}\derivabbrevsmall{}{#3}}
\newcommand{\length}[1][]{\genericmeasure{L}{#1}}

\newcommand{\lengthstd}{\length}
\newcommand{\lengthofarg}[1]{\genericformsmall{L}{#1}}

\newcommand{\lengthref}[2][]{\genericrefsmall{L}{#1}{#2}}

\newcommand{\widthstd}{\ensuremath{w}} 

\newcommand{\widthofarg}[2][]{\genericformsmall{W_{#1}}{#2}}

\newcommand{\widthref}[2][]{\genericrefsmall{W}{#1}{#2}}

\newcommand{\clspacestd}{s}
\newcommand{\clspaceof}[2][]{\genericformsmall{Sp_{#1}}{#2}}

\newcommand{\clspaceref}[2][]{\genericrefsmall{Sp}{#1}{#2}}

\newcommand{\varspaceof}[2][]{\genericformsmall{VarSp_{#1}}{#2}}

\newcommand{\varspaceref}[2][]{\genericrefsmall{VarSp}{#1}{#2}}

\newcommand{\formulaformat}[1]{\ensuremath{\mathit{#1}}}
\renewcommand{\formulaformat}[1]{\mathit{#1}}

\newcommand{\transitionarrow}{\rightsquigarrow}
\newcommand{\pebcfgtransition}[2]%
    {\ensuremath{#1 \transitionarrow #2}}
\newcommand{\pebcfgtransitionsqueeze}[2]%
    {#1 \! \transitionarrow \! #2}

\newcommand{\formatpebblingprice}[1]{\textsl{\textsf{#1}}}

\newcommand{\Pebblingprice}[1]%
    {\formatpebblingprice{Peb}\bigl(#1\bigr)}
\newcommand{\pebblingpricecompact}[1]%
    {\formatpebblingprice{Peb}\bigl(#1\bigr)}

\newcommand{\Bwpebblingprice}[1]%
    {\formatpebblingprice{BW-Peb}\bigl(#1\bigr)}
\newcommand{\bwpebblingpricecompact}[1]%
    {\formatpebblingprice{BW-Peb}\bigl(#1\bigr)}

\newcommand{\pebpersistentsymbol}{\bullet}
\newcommand{\pebvisitingsymbol}{\emptyset}

\newcommand{\bwpebpricepersistent}[1]%
    {\formatpebblingprice{BW-Peb}^{\pebpersistentsymbol}(#1)}
\newcommand{\Bwpebpricepersistent}[1]%
    {\formatpebblingprice{BW-Peb}^{\pebpersistentsymbol}\bigl(#1\bigr)}
\newcommand{\bwpebpricevisiting}[1]%
    {\formatpebblingprice{BW-Peb}^{\pebvisitingsymbol}(#1)}
\newcommand{\Bwpebpricevisiting}[1]%
    {\formatpebblingprice{BW-Peb}^{\pebvisitingsymbol}\bigl(#1\bigr)}

\newcommand{\pebpricepersistent}[1]%
    {\formatpebblingprice{Peb}^{\pebpersistentsymbol}(#1)}
\newcommand{\Pebpricepersistent}[1]%
    {\formatpebblingprice{Peb}^{\pebpersistentsymbol}\bigl(#1\bigr)}
\newcommand{\pebpricevisiting}[1]%
    {\formatpebblingprice{Peb}^{\pebvisitingsymbol}(#1)}
\newcommand{\Pebpricevisiting}[1]%
    {\formatpebblingprice{Peb}^{\pebvisitingsymbol}\bigl(#1\bigr)}

\newcommand{\bwpebblingpriceempty}[1]%
    {\formatpebblingprice{BW-Peb}^{\pebvisitingsymbol}(#1)}
\newcommand{\bwpebblingpriceemptycompact}[1]%
    {\formatpebblingprice{BW-Peb}^{\pebvisitingsymbol}\bigl(#1\bigr)}

\newcommand{\stoptime}{\tau}

\newcommand{\pebcontr}[2][G]{\ensuremath{\formulaformat{Peb}^{#2}_{#1}}}

\newcommand{\pebdeg}{\ensuremath{d}}

\newcommand{\pebaxcompact}[2]%
        [\pebdeg]{\ensuremath{\formulaformat{Ax}^{#1} \bigl(#2 \bigr)}}

\newcommand{\pqrxvar}[6]%
    {\ensuremath{\stdnot{\varx({#1})}_{#2} \lor \stdnot{\varx({#3})}_{#4} \lor %
    \sourceclausexvar[#6]{#5}}}

\newcommand{\pqr}[6]%
    {\ensuremath{\stdnot{#1}_{#2} \lor \stdnot{#3}_{#4} \lor %
    \sourceclausenodisplay[#6]{#5}}}
\newcommand{\pqrstd}{\pqr{p}{i}{q}{j}{r}{l}}
\newcommand{\pqrall}[6]%
        {\setdescrcompact
        {\pqr{#1}{#2}{#3}{#4}{#5}{#6}}{#2,#4 \in \intnfirst{\pebdeg}}}
\newcommand{\pqrallstd}%
        {\setdescrcompact{\pqrstd}{i,j \in \intnfirst{\pebdeg}}}

\newcommand{\sourceclausexvar}[2][n]%
        {\Lor_{#1 = 1}^{\pebdeg} \varx({#2})_{#1}}
\newcommand{\subsourceclausexvar}[3][n]%
        {\Lor_{#1 = {#2}}^{\pebdeg} \varx({#3})_{#1}}

\newcommand{\sourceclausexvarnodisplay}[2][n]%
        {\textstyle \Lor_{#1 = 1}^{\pebdeg} \varx({#2})_{#1}}

\newcommand{\sourceclausenodisplay}[2][n]%
        {\textstyle \Lor_{#1 = 1}^{\pebdeg} #2_{#1}}

\newcommand{\relativisation}[1]%
    {\ensuremath{\formulaformat{Rel}\bigl(#1 \bigr)}}

\newcommand{\extPHPnot}[2]
    {\ensuremath{\extendedversion{\formulaformat{PHP}}^{#1}_{#2}}}

\newcommand{\GraphOntoPHPnot}[1][G]%
    {\text{$\formulaformat{Onto}$-$\formulaformat{PHP}$}_{#1}}

\renewcommand{\extPHPnot}[2]%
    {\ephpnot{#1}{#2}}

\newcommand{\ephpnot}[2]%
    {\vphantom{\extendedversion{\formulaformat{PHP}}}
      {\smash{\extendedversion{\formulaformat{PHP}}}
        \vphantom{\formulaformat{PHP}}}^{#1}_{#2}}

\newcommand{\efphpnot}[2]%
    {\vphantom{\extendedversion{\formulaformat{FPHP}}}
      {\smash{\extendedversion{\formulaformat{FPHP}}}
        \vphantom{\formulaformat{FPHP}}}^{#1}_{#2}}

\newcommand{\ontophpnot}[2]%
    {\formulaformat{Onto}\text{-}\formulaformat{PHP}^{#1}_{#2}}

\newcommand{\ontofphpnot}[2]%
    {\formulaformat{Onto}\text{-}\formulaformat{FPHP}^{#1}_{#2}}

\newcommand{\extendedversion}[1]{\widetilde{#1}}

\makeatletter{}%
\usepackage{stmaryrd} 

\newcommand{\formatfunctiontosubconfiguration}[1]{\mathsf{#1}}
\newcommand{\formatfunctiontomulti}[1]{\mathcal{#1}}

\DeclareMathOperator{\dummystar}{*}
\newcommand{\pebblingcontrNT}[2][G]%
 {\ensuremath{\dummystar\!\!\formulaformat{Peb}^{#2}_{#1}}}

\newcommand{\somenodetrueclausedeg}[2]{\formulaformat{All}_{#1}^{+}({#2})}

\newcommand{\slashedstrickenletter}[1]{{\backslash\mkern-9mu #1}}
            
\newcommand{\strikethroughcommand}[1]{\slashedstrickenletter{#1}}

\newcommand{\abovevertices}[2][G]%
    {{#1}_{#2}^{\hspace{-0.2 pt}\triangledown}}
\newcommand{\aboveverticesNR}[2][G]%
    {{#1}_{\strikethroughcommand{#2}}^{\hspace{-0.3 pt}\triangledown}}
\newcommand{\belowvertices}[2][G]%
    {{#1}^{#2}_{\hspace{-0.6 pt}\vartriangle}}
\newcommand{\belowverticesNR}[2][G]%
    {{#1}^{\strikethroughcommand{#2}}_{\hspace{-0.6 pt}\vartriangle}}

\newcommand{\lpebblingpricecompact}[1]%
    {\formatpebblingprice{L-Peb}\bigl(#1\bigr)}

\newcommand{\scnot}[2]{#1 \langle #2 \rangle}
\newcommand{\scnotcompact}[2]{#1 \bigl\langle #2 \bigr\rangle}

\newcommand{\spcanonconfcompact}[1]%
        {\formatfunctiontosubconfiguration{canon}\bigl({#1}\bigr)}

\newcommand{\spprojsubsub}[4]%
    {\formatfunctiontosubconfiguration{proj}_{\scnot{#1}{#2}}(\scnot{#3}{#4})}
\newcommand{\spprojsubsubcompact}[4]%
    {\formatfunctiontosubconfiguration{proj}_{\scnot{#1}{#2}}%
    \bigl(\scnot{#3}{#4}\bigr)}
\newcommand{\spprojsubconf}[3]%
    {\formatfunctiontosubconfiguration{proj}_{\scnot{#1}{#2}}({#3})}
\newcommand{\spprojsubconfcompact}[3]%
    {\formatfunctiontosubconfiguration{proj}_{\scnot{#1}{#2}}\bigl({#3}\bigr)}
\newcommand{\spprojconfsub}[3]%
    {\formatfunctiontosubconfiguration{proj}_{#1}(\scnot{#2}{#3})}
\newcommand{\spprojconfsubcompact}[3]%
    {\formatfunctiontosubconfiguration{proj}_{#1}\bigl(\scnot{#2}{#3}\bigr)}
\newcommand{\spprojconfconf}[2]%
    {\formatfunctiontosubconfiguration{proj}_{#1}({#2})}
\newcommand{\spprojconfconfcompact}[2]%
    {\formatfunctiontosubconfiguration{proj}_{#1}\bigl({#2}\bigr)}

\newcommand{\spclossubcompact}[2]%
        {\formatfunctiontoset{cl}\bigl(\scnotcompact{#1}{#2}\bigr)}

\newcommand{\spintersubcompact}[2]%
        {\formatfunctiontoset{int}\bigl(\scnotcompact{#1}{#2}\bigr)}

\newcommand{\spcoversubcompact}[2]%
        {\formatfunctiontoset{cover}\bigl(\scnotcompact{#1}{#2}\bigr)}

\newcommand{\spcoverconfcompact}[1]%
        {\formatfunctiontoset{cover}\bigl({#1}\bigr)}

\newcommand{\spinducedblack}[1]%
    {\formatfunctiontoset{Bl} (#1)}
\newcommand{\spinducedwhite}[1]%
    {\formatfunctiontoset{Wh} (#1)}
\newcommand{\spinducedblackcompact}[1]%
    {\formatfunctiontoset{Bl} \bigl(#1 \bigr)}
\newcommand{\spinducedwhitecompact}[1]%
    {\formatfunctiontoset{Wh} \bigl(#1 \bigr)}

\newcommand{\pathclausedeg}[2][\pebdeg]%
    {\somenodetrueclausedeg[#1]{\vertexpath{#2}}}
\newcommand{\pathclauseNRdeg}[2][\pebdeg]%
    {\somenodetrueclausedeg[#1]{\vertexpathNR{#2}}}

\newcommand{\blacktruthdegexplicit}[4]%
        {\setdescrcompact
        {{\textstyle \Lor_{#2 = 1}^{#3} {#1}_{#2}}}
        {{#1} \in {#4}}}

\newcommand{\binsubtree}[1]{T^{#1}}

\newcommand{\vertexpath}[1]{{P}^{#1}}
\newcommand{\vertexpathNR}[1]{{P}_{*}^{#1}}

\newcommand{\unrelatedNP}[1]%
        {T \setminus \bigl(\binsubtree{#1} \unionSP \vertexpath{#1} \bigr)}
\newcommand{\unrelatedsmallNP}[1]%
        {T \setminus (\binsubtree{#1} \unionSP \vertexpath{#1} )}

\newcommand{\abovelevelblockerminsizecompact}%
    [2]{L_{\succeq{#1}}\bigl({#2}\bigr)}

\newcommand{\necessaryhidingvert}[2]%
{{#1}{\scriptstyle{\llfloor {#2} \rrfloor}}}

\newcommand{\Klawepropertyprefix}{Limited hiding-cardinality\xspace}

\newcommand{\klawepropacronym}{LHC property\xspace}

\newcommand{\nongenklaweprop}%
{non-generalized \Klawepropertyprefix property\xspace}
\newcommand{\nongenklawepropacronym}%
{non-generalized \klawepropacronym}
\newcommand{\nongenklawepropacronymWithParam}%
{(non-generalized) \klawepropacronym}

\newcommand{\siblingnonreachabiblitypropertynoref}%
{Sibling non-reachability property\xspace}
\newcommand{\Siblingnonreachabiblitypropertynoref}%
{Sibling non-reachability property\xspace}
\newcommand{\siblingnonreachabiblityproperty}%
{\siblingnonreachabiblitypropertynoref~%
\ref{property:sibling-non-reachability-property}\xspace}
\newcommand{\Siblingnonreachabiblityproperty}%
{\Siblingnonreachabiblitypropertynoref~%
\ref{property:sibling-non-reachability-property}\xspace}

\newcommand{\introducetermanmpctext}%
    {a \introduceterm{\mpctext{}}\xspace}

\newcommand{\introducetermamultipebblingtext}%
  {a \introduceterm{\multipebblingtext{}}\xspace}

\newcommand{\blobpebblingtext}{blob-pebbling\xspace}

\newcommand{\multipebblingtext}{\blobpebblingtext}

\newcommand{\mpcostblack}[1]%
        {\formatpebblingprice{cost}_{\mpcblacks}( #1 )}
\newcommand{\mpcostwhite}[1]%
        {\formatpebblingprice{cost}_{\mpcwhites}( #1 )}

\newcommand{\blobpebblingpricecompact}[1]%
    {\formatpebblingprice{Blob-Peb}\bigl(#1\bigr)}

\newcommand{\multipebblingpricecompact}[1]%
    {\formatpebblingprice{Blob-Peb}\bigl(#1\bigr)}

\newcommand{\mpcblacks}{\formatfunctiontomulti{B}}
\newcommand{\mpcwhites}{\formatfunctiontomulti{W}}

\newcommand{\mpscnotcompact}[2]%
        {\big[ {#1} \big] \bigl\langle {#2} \bigr\rangle}

\newcommand{\mpctext}{\blobpebblingtext con\-fig\-u\-ra\-tion\xspace}

\newcommand{\chargeablevertices}[1]%
{\formatfunctiontoset{chargeable}({#1}) }
\newcommand{\chargeableverticescompact}[1]%
{\formatfunctiontoset{chargeable}\bigl({#1}\bigr) }

\newcommand{\blackschargedfor}[1][]%
    {\mpcblacks_{#1}}
\newcommand{\whiteschargedfor}[1][]%
    {\mpcwhites_{#1}^{\hspace{-0.3 pt}\vartriangle}}

\newcommand{\whitesbelowjustblocked}%
    {\mpcwhites_{B}^{\hspace{-0.3 pt}\vartriangle}}
\newcommand{\whitesbelowhidden}%
    {\mpcwhites_{H}^{\hspace{-0.3 pt}\vartriangle}}

\newcommand{\whitestight}%
    {\mpcwhites_{T}^{\hspace{-0.3 pt}\vartriangle}}

\makeatletter{}%
\newcommand{\auxkerset}{K}
\newcommand{\auxkersetstar}[1]{K^{#1}}

\newcommand{\cltrans}{\cld}
\newcommand{\cltranssubst}{\cld\substituted}
\newcommand{\clother}{\clb}

\newcommand{\claxiom}{\cla}
\newcommand{\claxiomsubst}{\cla\substituted}

\newcommand{\graphg}{\mathcal{D}}

\newcommand{\simfalsesubst}{simultaneous falsifiability\xspace}
\newcommand{\simfalse}{simultaneously falsifiable\xspace}
\newcommand{\simfalseargs}[2]%
    {\ensuremath{#1} and~\ensuremath{#2} are \simfalse}
\newcommand{\simfalseargscertainly}[2]%
    {\ensuremath{#1} and~\ensuremath{#2} are certainly \simfalse}
\newcommand{\simfalseargsmathtext}[2]%
    {\text{ \simfalseargs{#1}{#2}}}

\newcommand{\nbhd}{\mathcal N}%

\newcommand{\substform}[2]{{#1}[{#2}]}

\newcommand{\clspacesym}{\ensuremath{\clspacestd}}
\newcommand{\homogeneous}{homogeneous\xspace}
\newcommand{\homogeneity}{homogeneity\xspace}

\newcommand{\confindex}{t} %
\newcommand{\maxclscindex}{\tau} %

\newcommand{\orignvars}{N}
\newcommand{\newnvars}{n}

\newcommand{\indexnalt}{\newnvars}
\newcommand{\leftsize}{\orignvars}
\newcommand{\rightsize}{\newnvars}
\newcommand{\orignvarszero}{\orignvars_0}
\newcommand{\newnvarszero}{\newnvars_0}
\newcommand{\rightsizezero}{\newnvarszero}

\newcommand{\boundaryexp}[3]%
    {$({#2},{#1},{#3})$\nobreakdash-boundary expander\xspace}
\newcommand{\nmboundaryexp}[5]%
    {${#1}\times{#2}$ \boundaryexp{#3}{#4}{#5}}
\newcommand{\boundaryexpnodeg}[2]%
    {$({#1},{#2})$\nobreakdash-boundary expander\xspace}
\newcommand{\nmboundaryexpnodeg}[4]{${#1}\times{#2}$ \boundaryexpnodeg{#3}{#4}}

\newcommand{\expguarantee}{\expansionguarantee}
\newcommand{\expfactor}{\expansionfactor}
\newcommand{\expdegree}{\expanderdegree}
\newcommand{\lsize}{\leftsize}
\newcommand{\rsize}{\rightsize}

\newcommand{\boundaryexpnodegstd}%
  {\boundaryexpnodeg{\expguarantee}{\expfactor}}
\newcommand{\boundaryexpstd}%
  {\boundaryexp{\expdegree}{\expguarantee}{\expfactor}}
\newcommand{\nmboundaryexpstd}%
  {\nmboundaryexp{\lsize}{\rsize}{\expdegree}{\expguarantee}{\expfactor}}
\newcommand{\nmboundaryexpnodegstd}%
  {\nmboundaryexpnodeg{\lsize}{\rsize}{\expguarantee}{\expfactor}}

\newcommand{\aboundaryexpnodegstd}%
  {an \boundaryexpnodeg{\expguarantee}{\expfactor}}
\newcommand{\aboundaryexpstd}%
  {an \boundaryexp{\expdegree}{\expguarantee}{\expfactor}}
\newcommand{\annmboundaryexpstd}%
  {an \nmboundaryexp{\lsize}{\rsize}{\expdegree}{\expguarantee}{\expfactor}}
\newcommand{\annmboundaryexpnodegstd}%
  {an \nmboundaryexpnodeg{\lsize}{\rsize}{\expguarantee}{\expfactor}}

\newcommand{\Aboundaryexpnodegstd}%
  {An \boundaryexpnodeg{\expguarantee}{\expfactor}}
\newcommand{\Aboundaryexpstd}%
  {An \boundaryexp{\expdegree}{\expguarantee}{\expfactor}}
\newcommand{\Annmboundaryexpstd}%
  {An \nmboundaryexp{\lsize}{\rsize}{\expdegree}{\expguarantee}{\expfactor}}
\newcommand{\Annmboundaryexpnodegstd}%
  {An \nmboundaryexpnodeg{\lsize}{\rsize}{\expguarantee}{\expfactor}}

\newcommand{\defi}{:=}

\newcommand{\XOR}{XOR\xspace}
\newcommand{\XORification}{XOR\-ification\xspace}
\newcommand{\xorification}{\XORification}

\newcommand{\xorified}{XORified\xspace}
\newcommand{\xorsubstitution}{\XOR substitution\xspace}

\newcommand{\CNF}{CNF\xspace}

\newcommand{\graph}{\mathcal G}
\newcommand{\expandergraph}{\mathcal{G}}

\newcommand{\initialcnfwidth}{k}

\newcommand{\widthmax}{\widthstd}

\newcommand{\spacelowerbound}{\clspacesym}

\newcommand{\leftvertexset}{U}
\newcommand{\leftvertexsubset}{U'}
\newcommand{\rightvertexset}{V}
\newcommand{\rightvertexsubset}{V'}
\newcommand{\expansionguarantee}{r}
\newcommand{\expansionfactor}{c}
\newcommand{\expanderdegree}{d}
\newcommand{\Ker}{\operatorname{Ker}}
\renewcommand{\Ker}{\operatorname{ker}}
\newcommand{\closure}{\gamma}

\newcommand{\expandersubgraph}[2]{#1\setminus #2}
\newcommand{\substituted}{[\expandergraph]}

\newcommand{\clauseset}{\mathcal{G}^{-1}}
\newcommand{\clsetwithindex}[1][]{\clauseset}
\renewcommand{\clsetwithindex}[1][]{\clsd^{#1}}
\newcommand{\variables}{\mathit{Vars}}
\newcommand{\refofsubst}{\proofstd}
\newcommand{\refoforig}{\proofstd'}
\newcommand{\formsubst}{\formf\substituted}
\newcommand{\formorig}{\formf}

\newcommand{\refutationstd}{\proofstd}

\newcommand{\smallepsilon}{\varepsilon}
\newcommand{\smalldelta}{\delta}
\newcommand{\mindegree}{\expanderdegree_0}
\newcommand{\smallepsilonalt}{\varepsilon'}

\newcommand{\widthlower}{\ell}

\makeatletter{}%

\newcommand{\theauthorCB}{the first author\xspace}

\newcommand{\theauthorJN}{the second author\xspace}

\ifthenelse{\boolean{conferenceversion}}
{
  \newcommand{\mysubsection}[1]{\subparagraph*{#1}}

}
{
  \newcommand{\mysubsection}[1]{\subsection{#1}}

}

\ifthenelse{\boolean{conferenceversion}}
{}
{
\newtheoremstyle{metacommenttheoremstyle}%
    {3pt}%
    {3pt}%
    {\sffamily \itshape \scriptsize
    }%
    {}%
    {\bfseries \scshape \footnotesize }%
    {:}%
    { }%
    {}%

\theoremstyle{metacommenttheoremstyle}
}
\newtheorem{jncommentcontainer}{Jakob's comment}
\newtheorem{cbcommentcontainer}{Christoph's comment}

\newcounter{rbcounter}
\setlength{\marginparwidth}{0.8in}
\newcommand{\randbem}[3]%
{\stepcounter{rbcounter}%
  \parbox[t]{0mm}{$^{\arabic{rbcounter}}$}%
  \marginpar%
  {\textcolor{#1}%
    {\raggedright\footnotesize$\mathbf{#2}^{\arabic{rbcounter}}$: #3}%
  }
}

\ifthenelse{\isundefined{\DONOTINSERTCOMMENTS}}
{ %

  \newcommand{\jncomment}[1]%
  {\begin{jncommentcontainer} \textcolor{blue}{#1} \end{jncommentcontainer}}

  \newcommand{\cbcomment}[1]%
  {\begin{cbcommentcontainer} \textcolor{magenta}{#1} \end{cbcommentcontainer}}

  \newcommand{\jn}[1]{\randbem{blue}{J}{#1}}
  \newcommand{\chr}[1]{\randbem{magenta}{C}{#1}}

}
{ %
  \newcommand{\jncomment}[1]{}
  \newcommand{\cbcomment}[1]{}
  \newcommand{\chr}[1]{}
  \newcommand{\jn}[1]{}
}

\numberwithin{equation}{section}

\allowdisplaybreaks
\begin{document}

\title{Supercritical Space-Width Trade-offs for Resolution%
    \thanks{This is the full-length version of the paper with the same title 
      that appeared in \emph{Proceedings of the 43rd International Colloquium
        on Automata, Languages and Programming ({ICALP}~'16)}.}}

\author{%
  Christoph Berkholz \\
  Humboldt-Universität zu Berlin
  \and
  Jakob Nordström \\
  KTH Royal Institute of Technology}

\date{\today}

\maketitle

\thispagestyle{empty}

\pagestyle{fancy}
\fancyhead{}
\fancyfoot{}
\renewcommand{\headrulewidth}{0pt}
\renewcommand{\footrulewidth}{0pt}

\fancyhead[CE]{\slshape 
  SUPERCRITICAL SPACE-WIDTH TRADE-OFFS FOR RESOLUTION}
\fancyhead[CO]{\slshape \nouppercase{\leftmark}}
\fancyfoot[C]{\thepage}

\setlength{\headheight}{13.6pt}

\makeatletter{}%

\begin{abstract}  
  We show that there are CNF formulas which can be refuted in
  resolution in both small space and small width, but for which any
  small-width 
  proof must have space exceeding by far the
  linear worst-case upper bound. This significantly strengthens the
  space-width trade-offs \mbox{in [Ben-Sasson~'09]}, and provides one more
  example 
  of trade-offs in the ``supercritical''
  regime above worst case recently identified by [Razborov~'16]. We
  obtain our results by using Razborov's new hardness condensation
  technique and combining it with the space lower bounds in
  [Ben-Sasson and Nordström~'08].
\end{abstract}

\makeatletter{}%

\section{Introduction}
\label{sec:intro}

Propositional proof complexity studies the problem of how to provide
concise, polynomial-time checkable certificates that formulas in
conjunctive normal form (CNF) are
unsatisfiable. Research in this area was initiated
in~\cite{CR79Relative} as a way of attacking the problem of showing
that
$\NP \neq \coNP$,
and hence
$\Pclass \neq \NP$,
and it is therefore natural that the main focus has been on proving
upper and lower bounds on proof length/size. 
More recently, however, other complexity measures have also been
investigated, and this study has revealed a rich and often surprising
web of connections.

\mysubsection{Resolution Length, Width, and Space}

Arguably the most thoroughly studied proof system in proof complexity is
\introduceterm{resolution}, 
which appeared in~\cite{Blake37Thesis}
and began to be investigated in connection with automated theorem proving 
in the 1960s
\cite{DLL62MachineProgram,
  DP60ComputingProcedure,
  Robinson65Machine-oriented}.
\ifthenelse{\boolean{conferenceversion}}
{Because of its simplicity this proof system}
{Because of its simplicity---there is only one derivation rule---and
  because all lines in a proof are clauses, this proof system}
is well suited for proof search, and it lies at the heart of current
state-of-the-art SAT solvers based on so-called 
\introduceterm{conflict-driven clause learning} 
\cite{BS97UsingCSP,MS99Grasp,MMZZM01Engineering}.

It is not hard to show that any unsatisfiable CNF formula over
$\nvars$~variables can be proven unsatisfiable, or \introduceterm{refuted},
by a resolution refutation containing $\exp(\bigoh{\nvars})$~clauses,
and this holds even in the restricted setting of
\introduceterm{tree-like resolution}, where each intermediate clause
in the refutation has to be rederived from scratch every time it is used.
In the breakthrough paper~\cite{Haken85Intractability}, Haken obtained
a length lower bound on the form 
$\exp \bigl( \Bigomega{\nvars^\delta}\bigr)$ 
for general resolution refutations of so-called pigeonhole principle
formulas, and  
this result
was later followed by truly exponential lower bounds 
$\exp ( \bigomega{\nvars} )$ 
\ifthenelse{\boolean{conferenceversion}}
{for other formula families in, e.g.,  
  \cite{BKPS02Efficiency,CS88ManyHard,Urquhart87HardExamples}.}
{for other formula families in 
  \cite{Urquhart87HardExamples,CS88ManyHard,BKPS02Efficiency}
  and many other papers.}

In a seminal paper~\cite{BW01ShortProofs}, Ben-Sasson and Wigderson
identified \introduceterm{width}, measured as the largest size of any
clause appearing in a refutation, as another interesting complexity
measure for resolution. Clearly, any unsatisfiable CNF formula over
$\nvars$~variables can be refuted in width at most~$\nvars$. Moreover, any 
\ifthenelse{\boolean{conferenceversion}}
{refutation} 
{resolution refutation} 
in width~$\widthstd$ need never be longer
than~$\nvars^{\bigoh{\widthstd}}$, since this is an upper bound on the
number of distinct clauses of width~$\widthstd$ (and 
this naive counting argument is essentially tight~\cite{ALN16NarrowProofs}).
What Ben-Sasson and Wigderson showed is that strong enough
\emph{lower} bounds on width also imply lower bounds on length; in
particular that linear $\bigomega{\nvars}$ width lower bounds imply
exponential $\exp (\bigomega{\nvars})$ length lower bounds
for CNF formulas of bounded width.
This connection can be used to rederive almost all currently known
resolution length lower bounds.

Motivated by questions in SAT solving, where efficient memory
management is a major concern, a more recent line of research in proof
complexity has examined a third complexity measure on proofs, namely
\introduceterm{space}.
This study was initiated by Esteban and
Torán~\cite{ET01SpaceBounds}, who defined the 
\introduceterm{(clause) space} of a resolution proof as the maximal
number of clauses needed to be kept in memory during verification of
\ifthenelse{\boolean{conferenceversion}}
{the proof.}%
{the proof,}%
  \footnote{For completeness, we want to mention that for resolution there 
  is also a \introduceterm{total space} measure counting the total number 
  of literals in memory (with repetitions), which has been studied in
  \cite{ABRW02SpaceComplexity,BGT14TotalSpace,BBGHMW15SpaceProofCplx,Bonacina16TotalSpace}.
  In this paper, however, ``space'' will always mean ``clause space''
  in the sense of~\cite{ET01SpaceBounds} unless otherwise stated.}
\ifthenelse{\boolean{conferenceversion}}
{It can be shown that} 
{a definition that was generalized to other proof systems by
  Alekhnovich \etal~\cite{ABRW02SpaceComplexity}.
  It should be noted that although the original impetus for
  investigating proof space came from the applied SAT solving side,
  space complexity is of course a well-studied measure in its own right
  in computational complexity, and the study of space in proof
  complexity has turned out to be of intrinsic interest in that it has
  uncovered intriguing connections to proof length and width. 
  It can be shown that} 
a CNF formula over $\nvars$~variables can always be refuted
in space~$\nvars + \bigoh{1}$ even in tree-like
resolution~\cite{ET01SpaceBounds},
although the refutation thus obtained might have exponential
length. Linear space lower bounds   
matching the worst-case upper bound up to constant
factors were obtained for a number of formula families in
\ifthenelse{\boolean{conferenceversion}}
{\cite{ABRW02SpaceComplexity,BG03SpaceComplexity,ET01SpaceBounds}.}
{\cite{ET01SpaceBounds,ABRW02SpaceComplexity,BG03SpaceComplexity}.}

The space lower bounds obtained in the papers just discussed turned
out to match closely known lower bounds on width, and in a
strikingly simple and beautiful result Atserias and
Dalmau~\cite{AD08CombinatoricalCharacterization} showed that in fact 
\ifthenelse{\boolean{conferenceversion}}
{the resolution width of refuting a \kcnfform~$F$
  provides a lower bound for the clause space  required.%
  \footnote{Note that this is a nontrivial connection since a lower bound
    on width, \ie the \emph{number of literals} in a clause, is shown to
    imply essentially the same lower bound on the
    \emph{number of clauses} needed.}}
{the resolution width of refuting a \kcnfform~$F$
  is a lower bound on the clause space  required,%
  \footnote{Note that this is a nontrivial connection since a lower bound
    on width, \ie the \emph{number of literals} in a clause, is shown to
    imply essentially the same lower bound on the
    \emph{number of clauses} needed.}
  minus an additive term~$\clwidth$ adjusting for the largest width of
  any clause in~$F$.}
This allows to recover the space lower bounds mentioned above as
immediate consequences of width lower bounds shown
in~\cite{BW01ShortProofs}.
Furthermore, it follows from~\cite{AD08CombinatoricalCharacterization}
that for $\clwidth = \bigoh{1}$ 
any \kcnfform that can be refuted by just
keeping a constant number of clauses in memory can also be refuted in
polynomial length and constant width. 
These connections go only in one direction, however---in the sequence of papers
\cite{Nordstrom09NarrowProofsSICOMP,NH13TowardsOptimalSeparation,BN08ShortProofs}
it was shown that there are formula families that have high space
complexity although they possess refutations in linear length and
constant width.

\mysubsection{Resolution Trade-offs}

As was discussed above, a resolution proof in sufficiently small width
will by necessity also be short, whereas the linear worst-case
upper bound on space is achieved by a proof in exponential length. It
is natural to ask, therefore,
whether for a given  formula~$F$
there exists a single 
\ifthenelse{\boolean{conferenceversion}}
{refutation that can}
{resolution refutation of~$F$ that can}
simultaneously optimize these different complexity measures.
The question of trade-offs between proof complexity measures was first raised by
Ben-Sasson~\cite{Ben-Sasson09SizeSpaceTradeoffs}, who gave a strong
negative answer for space versus width. 
More precisely, what was shown in~\cite{Ben-Sasson09SizeSpaceTradeoffs} is 
that there are formulas which are refutable separately in constant
width and in constant space, but for which any resolution proof
minimizing one of the measures must exhibit almost worst-case linear
behaviour with respect to 
the other measure.

A question that arises in the context of SAT solving is whether it is
possible to simultaneously optimize size and space (corresponding to
running time and memory usage).
In addition to the space-width trade-offs discussed above, 
in~\cite{Ben-Sasson09SizeSpaceTradeoffs} Ben-Sasson also proved a
size-space trade-off for the subsystem tree-like resolution, and building on
\ifthenelse{\boolean{conferenceversion}}
{\cite{Ben-Sasson09SizeSpaceTradeoffs,BN08ShortProofs}} 
{\cite{BN08ShortProofs,Ben-Sasson09SizeSpaceTradeoffs}} 
it was shown in~\cite{BN11UnderstandingSpace}
for general resolution 
that there are formulas which have refutations in linear length and
also in small space, but for which any space-efficient refutation must
have superpolynomial or even exponential length.
\ifthenelse{\boolean{conferenceversion}}
{Beame \etal~\cite{BBI16TimeSpace} and Beck \etal~\cite{BNT13SomeTradeoffs}
  exhibited formulas
  over $\nvars$~variables refutable in length polynomial in~$\nvars$
  where bringing the space down to linear, 
  or even just shaving a constant factor of the polynomial space bound
  that follows immediately from the length bound,
  incurs a superpolynomial penalty in proof length.}
{Beame \etal~\cite{BBI16TimeSpace} 
  extended the range of parameters of the trade-offs further by
  exhibiting formulas  
  over $\nvars$~variables refutable in length polynomial in~$\nvars$
  where bringing the space down to linear, 
  or even just shaving a constant factor of the polynomial space bound
  that follows immediately from the length bound,
  incurs a superpolynomial penalty in proof length, 
  a result that was generalized and strengthened in~\cite{BNT13SomeTradeoffs}.} 

Turning finally to the relation between length and width, 
what was shown in \cite{BW01ShortProofs}
is that a 
\ifthenelse{\boolean{conferenceversion}}
{short refutation} 
{short resolution refutation} 
can be converted to a 
refutation of small width,
but the way this conversion is done in~\cite{BW01ShortProofs} blows up
the length exponentially.  
Thapen~\cite{Thapen14Trade-off} proved that this 
is inherent by exhibiting formulas refutable in small width
and small length, but for which any small-width refutation has to have
exponential length.
For the restricted case of tree-like resolution,
Razborov~\cite{Razborov16NewKind} 
recently showed that there are
formulas refutable in small width for which any tree-like refutation
even doing slightly better than the trivial linear upper bound with
respect to width must by necessity have doubly exponential length.

\ifthenelse{\boolean{conferenceversion}}
{We want to highlight} 
{We want to emphasize} 
an intriguing property of the trade-off results in
\cite{BBI16TimeSpace,BNT13SomeTradeoffs,Razborov16NewKind}
\ifthenelse{\boolean{conferenceversion}}
{that sets them apart from}
{that was highlighted by Razborov, and that sets these results apart from}
the other trade-offs surveyed above.  Namely, for most trade-off results
between complexity measures it is the case that the trade-off plays out in
the region between the worst-case upper bounds for the measures,
where  as one measure decreases the other measure has to
approach its critical worst-case value.
However, the short resolution proofs in 
\cite{BBI16TimeSpace,BNT13SomeTradeoffs}
require space even polynomially larger than the 
worst-case 
upper bound, and the
small-width
tree-like proofs in~\cite{Razborov16NewKind}
require proofs exponentially larger than the
exponential upper bound for tree-like length.
To underscore the dramatic nature of such trade-off results,
Razborov refers to them as \introduceterm{ultimate}
in the preliminary version~\cite{Razborov15Ultimate}
of~\cite{Razborov16NewKind}. 
In this paper, we will instead use the term 
\introduceterm{supercritical trade-offs}, 
which we feel better describes the behaviour that one of the
complexity measures is pushed up into the supercritical regime above
worst case when the other measure is decreased.

\mysubsection{Our Contribution}

Answering Razborov's call in \cite{Razborov16NewKind} for more
examples of the 
type of trade-offs discussed above,
in this paper
we prove a supercritical trade-off between space and width in
resolution. 
As already observed, any refutation in width~$\widthstd$ of a CNF
formula over $\nvars$~variables in general resolution need not contain
more than $\bigoh{\nvars^{\widthstd}}$ clauses, which is also a
trivial upper bound on the space complexity of such a refutation. Our
main result is that this bound is essentially tight, and is also somewhat
robust. Namely, we show that there are $\nvars$\nobreakdash-variable
formulas that can be refuted in width~$\widthstd$,
but for which any refutation in width 
even up to almost a multiplicative logarithmic factor larger than this
requires space~$\nvars^{\bigomega{\widthstd}}$.

\begin{theorem}
  \label{thm:maintheorem}
  For any constant $\varepsilon>0$ and any non-decreasing function
  $\widthlower(\indexnalt)$,
  $6  \leq 
  \widthlower(\indexnalt)  \leq
  \indexnalt^{\frac{1}{2}-\epsilon}$,
  there is a
  family $\set{\formf_\indexnalt}_{\indexnalt\in \mathbb N}$ of
  $\indexnalt$-variable \CNF{} formulas which can be
  \ifthenelse{\boolean{conferenceversion}} 
  {refuted in resolution width~$\widthlower(\indexnalt)$
    but for which any  refutation in width}
  {refuted in resolution in width~$\widthlower(\indexnalt)$
    but for which any  resolution refutation in width}
  $\littleoh{\widthlower(\indexnalt)\log \indexnalt}$ 
  \ifthenelse{\boolean{conferenceversion}}
  {requires space}
  {requires clause space at least}
  $\indexnalt^{\bigomega{\widthlower(\indexnalt)}}$.
\end{theorem}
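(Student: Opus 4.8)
The plan is to combine two ingredients that are already available in the literature: the clause space lower bounds for substituted pebbling contradictions from~\cite{BN08ShortProofs}, which exhibit formulas that are easy with respect to length and width but hard with respect to space, and the hardness condensation technique of~\cite{Razborov16NewKind}, which compresses the number of variables of a hard formula while essentially preserving its hardness and thereby pushes the space measure up into the supercritical regime relative to the (much smaller) new variable count.

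Concretely, I would start from pebbling contradictions over a family of bounded fan-in DAGs $G$ on $\bigoh{N}$ vertices whose black--white pebbling price is $\bigomega{N/\log N}$ --- such DAGs are provided by the classical constructions achieving near-maximal pebbling price --- composed with a constant-arity substitution (e.g.\ XOR) as in~\cite{BN08ShortProofs}, and call the resulting $N$-variable formula $\widetilde{F}_N$. By~\cite{BN08ShortProofs} this formula is refutable in resolution in length $\bigoh{N}$ and width $\bigoh{1}$, yet every resolution refutation of it requires clause space $\bigomega{N/\log N}$. Now fix the target width and write $\widthlower$ for $\widthlower(\indexnalt)$, put $d = \Theta(\widthlower)$ and $N = \indexnalt^{\Theta(\widthlower)}$ (chosen so that $N/\log N \geq \indexnalt^{\bigomega{\widthlower}}$), and let $\mathcal{G}$ be a bipartite boundary expander with left side $[N]$, right side $[\indexnalt]$, and left degree $d$; such expanders exist precisely in this parameter regime, which is the source both of the upper constraint $\widthlower(\indexnalt) \leq \indexnalt^{\frac{1}{2}-\varepsilon}$ and of the logarithmic slack in the theorem. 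Define $\formf_\indexnalt := \widetilde{F}_N[\mathcal{G}]$ to be the formula on the $\indexnalt$ right-variables obtained by substituting each left-variable $x_v$ of $\widetilde{F}_N$ by (an XOR of) its $\mathcal{G}$-neighbours and re-encoding the result in CNF. (The lower bound $6$ on $\widthlower(\indexnalt)$ is what makes $d$ large enough for the construction to go through.)

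The upper bound is immediate: the $\bigoh{1}$-width refutation of $\widetilde{F}_N$ lifts through the substitution to a refutation of $\formf_\indexnalt$ of width $\bigoh{d} = \bigoh{\widthlower}$, and tuning the constant in $d = \Theta(\widthlower)$ brings this down to width exactly $\widthlower(\indexnalt)$ (the length of this refutation is irrelevant). For the lower bound, the heart of the matter is a \emph{projection lemma} in the spirit of~\cite{Razborov16NewKind}: given any resolution refutation $\pi$ of $\formf_\indexnalt$ in width $\littleoh{\widthlower(\indexnalt) \log \indexnalt}$ and clause space $s$, one uses the boundary expansion of $\mathcal{G}$ to assign to each clause $C$ of $\pi$ a ``closure'' $\mathit{cl}(\vars{C}) \subseteq [N]$ together with a projection onto the left-variables, and one shows that the resulting sequence of projected memory configurations is (essentially) a legal resolution refutation of $\widetilde{F}_N$ of clause space $\bigoh{s}$. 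The width bound $\littleoh{\widthlower(\indexnalt) \log \indexnalt}$ is exactly what guarantees that these closures stay small enough for the projection to be well defined and space-preserving; once the width grows past this threshold the expansion argument breaks down. Combining the projection lemma with the~\cite{BN08ShortProofs} space lower bound for $\widetilde{F}_N$ then forces $s = \bigomega{N/\log N} = \indexnalt^{\bigomega{\widthlower(\indexnalt)}}$, as desired.

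The step I expect to be the main obstacle is the projection lemma, and in particular adapting the closure/projection machinery of~\cite{Razborov16NewKind} --- which there is tailored to preserve \emph{width} in tree-like resolution --- so that it preserves \emph{clause space} in general resolution. One has to verify that a bounded-size set of clauses in memory projects to a bounded-size set of clauses (a projected clause may need to be replaced by a small conjunction, so the aggregation has to be controlled), that the projected sequence respects the resolution inference rules under the boundary-expansion guarantees, and that all of this is compatible with the blob-pebbling invariant of~\cite{BN08ShortProofs} used to certify the space lower bound for the base formula. Getting the quantitative dependencies mutually consistent --- the logarithmic factor in the admissible width, the expander degree $d = \Theta(\widthlower)$, the compression $N = \indexnalt^{\Theta(\widthlower)}$, and the surviving space bound $\indexnalt^{\bigomega{\widthlower(\indexnalt)}}$, all subject only to $6 \leq \widthlower(\indexnalt) \leq \indexnalt^{\frac{1}{2}-\varepsilon}$ --- is where most of the technical work will lie.
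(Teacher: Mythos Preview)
Your plan matches the paper's proof almost exactly: start from the space-hard XOR-substituted pebbling formulas of~\cite{BN08ShortProofs}, apply Razborov's XORification over a boundary expander to compress the variable count from $N$ to $n$, and prove a projection lemma that from any small-width refutation of the condensed formula recovers a refutation of the original. One quantitative point needs correcting, however, and it is not cosmetic. The projection does \emph{not} yield space $\bigoh{s}$: a width-$w$ clause $C$ over the $n$ new variables projects to the set of all clauses $D$ on the kernel $\Ker(\closure(\vars{C})) \subseteq [N]$ (a set of at most $w$ original variables; note the closure itself lives on the $[n]$ side, not $[N]$) for which $D[\mathcal{G}]$ and $C$ are simultaneously falsifiable, and there can be up to $2^{w}$ such $D$. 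The recovered refutation of $\widetilde{F}_N$ therefore has space $2^{w}s + \bigoh{w}$, so the conclusion is only $s \geq \bigomega{N/\log N}\cdot 2^{-w}$, not $s = \bigomega{N/\log N}$. This $2^{w}$ loss is exactly what dictates the width threshold in the theorem: one needs $w = \littleoh{\ell\log n}$ so that $2^{w} = n^{\littleoh{\ell}}$ is absorbed by $N = n^{\Theta(\ell)}$, still leaving $s = n^{\bigomega{\ell}}$. Your remark that ``the aggregation has to be controlled'' is on target, but the control is multiplicative in $2^{w}$, not $\bigoh{1}$, and this is the reason the trade-off is not more robust.
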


\mysubsection{Techniques}

In one sentence, we obtain our results by using Razborov's hardness
condensation technique in \cite{Razborov16NewKind} and combining it
with the space lower bounds in~\cite{BN08ShortProofs}.

In slightly more detail, our starting point are the so-called
\introduceterm{pebbling formulas} defined in~\cite{BW01ShortProofs}.
These formulas are refutable in constant width, but
it was observed in~\cite{Ben-Sasson09SizeSpaceTradeoffs} that space lower bounds
for pebble games on directed acyclic graphs (DAGs) carry over to lower
bounds on the \emph{number of variables} kept simultaneously in memory
in resolution refutations of pebbling formulas defined over these DAGs.
It was shown in~\cite{BN08ShortProofs} that substituting every
variable in such formulas by an exclusive or of two new variables and
expanding out to CNF produces a new family of formulas which are
still refutable in constant width but for which the variable space
lower bounds have been amplified to clause space lower bounds.

The 
result
in~\cite{BN08ShortProofs} is one of several examples of how 
\introduceterm{\xorsubstitution{}},
or
\introduceterm{\xorification{}},
has been used to amplify weak proof complexity lower bounds to much
stronger lower bounds. In all of these applications distinct variables
of the original formula are replaced by 
disjoint sets of new variables.  The wonderfully simple (with hindsight) but
powerful new idea in~\cite{Razborov16NewKind} is 
to instead do \xorsubstitution with overlapping sets of variables from
a much smaller variable pool (but with exclusive ors of higher arity).

This recycling of variables has the consequence that hardness
amplification as in~\cite{BN08ShortProofs} no longer works, since it
crucially depends on the fact that all new substitution variables are
distinct. What Razborov showed in~\cite{Razborov16NewKind} was
essentially that if the pattern of overlapping variable substitutions is
described by a strong enough bipartite expander, then locally there
are enough distinct new variables to make tree-like amplification
lower bounds as in~\cite{Ben-Sasson09SizeSpaceTradeoffs} go through
over a fairly wide range of the parameter space, yielding 
supercritical 
trade-offs between width and tree-like length. 
Since in addition the number of variables in the formula has decreased
significantly, this can be viewed as a kind of 
\introduceterm{hardness condensation}. 

We use Razborov's idea of \xorification with recycled variables,
but since we want to obtain results not for tree-like but
for DAG-like resolution the technical details of our proofs are  somewhat
different. At a high level, we start with formulas over
$\orignvars$~variables that are refutable in constant width but
require space 
$\bigomega{\orignvars / \log \orignvars}$.
We modify these formulas by applying \mbox{$\widthstd$-wise}
\xorification using a much smaller set of $\nvars$~variables,
and then show
that from any refutation in width~$\bigoh{\widthstd}$ of this new,
\xorified formula it is possible to recover a refutation of the
original formula with comparable space complexity. But this means that
any small-width refutation of the \xorified formula must have space
complexity \mbox{roughly $\bigomega{\orignvars / \log \orignvars}$}. Choosing
parameters so that $\orignvars \approx \nvars^\widthstd$ yields the
bound stated in \refthm{thm:maintheorem}.

We should point out that compared to~\cite{Razborov16NewKind} we get
significantly less robust trade-offs, which break down already for a
multiplicative logarithmic increase in width. 
This is mainly due to the fact that we deal not with tree-like
resolution as in~\cite{Razborov16NewKind}, 
but with the much stronger general resolution proof system producing
DAG-like proofs.
We share with~\cite{Razborov16NewKind} the less desirable feature
that although our formulas only have $\nvars$~variables they contain
on the order of~$\nvars^\widthstd$ clauses. Thus, measured in terms of
formula size our space-width trade-offs do not improve
on~\cite{Ben-Sasson09SizeSpaceTradeoffs}, and the width of our
formulas is not constant but scales linearly with~$\widthstd$. 
Still, since the number of variables provides a worst-case upper bound
on space (independently of formula size), measured in terms of
variables it seems fair to say that the trade-off result in
\refthm{thm:maintheorem} is fairly dramatic.

\mysubsection{Organization of This Paper}

The rest of this paper is organized as follows.
We start by reviewing some preliminaries in
\refsec{sec:prelims}.
In \refsec{sec:proof-main-thm} we prove our main result assuming a
hardness condensation lemma, and this lemma is then established in
\refsec{sec:hardness-condensation}. 
\ifthenelse{\boolean{conferenceversion}}
{We conclude in
  \refsec{sec:conclusion} with a discussion of possible directions for
  future research.
  Due to space constraints, we omit some of the
  proofs in this 
  extended abstract,
  referring the reader to the
  upcoming full-length version for the missing details.}
{We conclude in
  \refsec{sec:conclusion} with a discussion of possible directions for
  future research.
  For completeness, proofs of some technical claims are provided in
  \refapp{app:appendix}.}

\makeatletter{}%

\section{Preliminaries}
\label{sec:prelims}

A \introduceterm{literal} over a Boolean variable $\varx$ is either
the variable $\varx$ itself (a \introduceterm{positive literal}) or
its negation~$\olnot{\varx}$ (a~\introduceterm{negative literal}).
We define
$\olnot{\olnot{\varx}} = \varx$.
A \introduceterm{clause}
$\clc = \lita_1 \lor \formuladots \lor \lita_{\clwidth}$
is a disjunction of literals over pairwise disjoint variables
(without loss of generality we assume that there are no trivial
clauses containing both a variable and its negation).
A clause $\clc'$ \introduceterm{subsumes} another clause~$\clc$ 
if every literal from $\clc'$ also appears in $\clc$.
A \introduceterm{$\clwidth$\nobreakdash-clause}
is a clause that contains at
most $\clwidth$~literals. A \introduceterm{CNF formula} 
$\fstd = \clc_1 \land \formuladots \land \clc_m$ 
is a conjunction of 
clauses, and $\fstd$ is a \introduceterm{\mbox{\kcnfform{}}} if it
consists of \xclause{\clwidth}{}s.
We write $\variables(\fstd)$ to denote the set of variables appearing
in a formula~$\fstd$.  We think of clauses and CNF formulas as sets:
the order of elements is irrelevant and there are no repetitions.

A \introduceterm{resolution refutation}
$\refof{\proofstd}{F}$
of an unsatisfiable CNF formula~$F$, which can also be referred to  as a
\introduceterm{resolution proof}
for (the unsatisfiability of)~$F$, 
is an ordered sequence of clauses 
$\proofstd = (\cld_1, \dotsc, \cld_{\stoptime})$
such that 
\mbox{$\cld_{\stoptime} = \emptycl$}
is  the empty clause containing no literals,
and each clause~$\cld_i$,
$ i \in [\stoptime] = \set{1, \ldots, \stoptime}$,
is either one of the clauses in~$\fstd$
(an \introduceterm{axiom})
or is derived from clauses
$\cld_j,\cld_k$  in~\proofstd{} 
with $j, k < i$
by the 
\introduceterm{resolution rule}
\begin{equation}
  \label{eq:resolution-rule}
  \resrule{\clb \lor \varx}
          {\clc \lor \stdnot{\varx}}
          {\clb \lor \clc} 
          \eqperiod        
\end{equation}
For technical reasons, it will also be convenient to permit a
\introduceterm{weakening rule}
\begin{equation}
  \label{eq:weakening-rule}
  \weakrule{\clb}
           {\clb \lor \clc}          
\end{equation}
allowing to derive a strictly weaker clause from a clause already derived,
although this rule is not essential.

With every resolution proof~$\proofstd$
we can associate a DAG~$G_{\proofstd}$
by having a sequence of vertices~$v_i$ 
on a line in order of increasing~$i$, 
labelled by the clauses~$\cld_i \in \proofstd$,
and with directed edges 
$(v_j, v_i)$
and
$(v_k, v_i)$
if the clause~$\cld_i$ was derived by resolution from $\cld_j$
and~$\cld_k$
or an edge 
$(v_j, v_i)$
if $\cld_i$ was derived from~$\cld_j$ by weakening.
Note that there might be several occurrences of a
clause~$\cld$ in the proof~$\proofstd$, and if so each occurrence gets
its own vertex in~$G_{\proofstd}$.

\ifthenelse{\boolean{conferenceversion}}
{}
{Now we can formally define the proof complexity measures discussed in
  \refsec{sec:intro}.}
The
\introduceterm{length}
$\lengthofarg{\proofstd}$
of a resolution proof~$\proofstd$ is the number of clauses in it
(counted with repetitions).
The 
\introduceterm{width}
$\widthofarg{\clc}$
of a clause $\clc$ is $\setsizesmall{\clc}$,
\ie the number of literals,
\ifthenelse{\boolean{conferenceversion}}
{and $\widthofarg{\proofstd}$}
{and the width $\widthofarg{\proofstd}$ of a proof $\proofstd$}
is the size of a largest clause in~$\proofstd$. The
\emph{(clause) space}
at step~$i$
is the number of clauses $\clc_j$, $j<i$,
with edges to  clauses
$\clc_k$, $k \geq i$ in~$G_{\proofstd}$,
plus~$1$ for the clause~$\clc_i$ derived at this step.
Intuitively, space measures the number of clauses we need to keep in
memory at step~$i$, since they were derived before step~$i$ but are
used to infer new clauses at or after step~$i$.
The space~$\clspaceof{\proofstd}$ of a proof~$\proofstd$
is the maximum space over all steps in~$\proofstd$.
Taking the minimum over all 
\ifthenelse{\boolean{conferenceversion}}
{refutations,}
{resolution refutations of a CNF formula~$F$,}
we define the length, width, and space of refuting~$F$, respectively, as
$\lengthref{F} 
= \minofexpr[\refof{\proofstd}{F}]{\lengthofarg{\proofstd}}$,
$\widthref{F} 
= \minofexpr[\refof{\proofstd}{F}]{\widthofarg{\proofstd}}$,
and
$\clspaceref{F} 
= \minofexpr[\refof{\proofstd}{F}]{\clspaceof{\proofstd}}$.
We remark that any applications of the weakening
rule~\refeq{eq:weakening-rule} can always be eliminated from a
refutation without increasing the length, width, or space.

When reasoning about space, it is sometimes convenient to use a
slightly different, but equivalent, description of resolution that
makes explicit what clauses are in memory at each point in time.  We
say that a \introduceterm{configuration-style resolution refutation}
is a sequence
$(\clsd_0,\ldots, \clsd_\stoptime)$ 
of sets of clauses, or \introduceterm{configurations}, such that
$\clsd_0 = \emptyset$,
$\emptycl \in \clsd_\stoptime$,
and for all $t\in [\stoptime]$ the configuration $\clsd_t$
is obtained from $\clsd_{t-1}$ by one of the following
\introduceterm{derivation steps}: 
\begin{description}
\item[Axiom download] 
  $\clsd_t=\clsd_{t-1}\union\set{\clc}$, where $\clc$
  is a clause $\clc \in F$.

\item[Inference] 
  $\clsd_t=\clsd_{t-1} \union \set{\cld}$ for
  a clause $\cld$ derived by resolution~\refeq{eq:resolution-rule}
  or weakening~\refeq{eq:weakening-rule}
  from
  clauses in~$\clsd_{t-1}$.
  
\item[Erasure] 
  $\clsd_t =  \clsd_{t-1} \setminus \clsd'$ 
  for some $\clsd' \subseteq \clsd_{t-1}$.
  \end{description}
The length of a configuration-style refutation
$\proofstd = (\clsd_0,\ldots, \clsd_\stoptime)$ 
is the number of axiom downloads and inference steps,
the width is the size of a largest clause, as before,  
and the space is
$
\maxofexpr[{t \in [\stoptime]}]{\setsize{\clsd_t}}
$.
Given a refutation as an ordered sequence of clauses 
$\proofstd = (\cld_1, \dotsc, \cld_{\stoptime})$,
we can construct a configuration-style refutation in the same length,
width, and space by deriving each clause $\cld_i$ via an axiom
download or inference step, and interleave with erasures of
clauses~$\clc_j$, $j<i$, as soon as 
these clauses
have no edges to
clauses~$\clc_k$, $k \geq i$, in the associated DAG~$G_{\proofstd}$.
In the other direction, taking a configuration-style refutation 
and listing the sequence of axiom download and inference steps yields
a standard resolution refutation in the same length, width, and space 
(assuming that clauses are erased as soon as possible).
Thus, we can switch freely between these two ways of describing
resolution refutations.

In this paper,
it will be convenient for us to limit our attention to a
(slightly non-standard) restricted form of resolution refutations as
described next. 
We define a
\introduceterm{\homogeneous resolution  refutation} 
to be a 
refutation where every resolution rule application is of the form
\begin{equation}
  \label{eq:homogeneous-resolution-rule}
  \resrule{\clc \lor \varx}
  {\clc \lor \stdnot{\varx}}
  {\clc} 
  \eqperiod        
\end{equation}
The requirement of \homogeneity is essentially without loss of
generality, since we need to insert at most two weakening steps before
each application of the resolution rule, which increases the width by
at most~$1$, and the weakened clauses can then immediately be
forgotten. We state this observation formally for the record.

\begin{observation}
  \label{obs:homogeneous}
  If a CNF formula $F$ has a standard resolution refutation   without
  weakening steps in length~$\lengthstd$, width~$\widthstd$,
  and 
  space~$\clspacestd$,
  then it has a \homogeneous refutation 
  in length at most~$3\lengthstd$, width at most~$\widthstd + 1$,
  and 
  space at most~$\clspacestd + 2$.
\end{observation}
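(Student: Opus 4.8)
The plan is to perform a local surgery on the given refutation: I would replace every application of the standard resolution rule~\eqref{eq:resolution-rule} by two applications of the weakening rule~\eqref{eq:weakening-rule} followed by a single application of the \homogeneous rule~\eqref{eq:homogeneous-resolution-rule}, while leaving axiom downloads and erasures untouched. Concretely, whenever the original refutation infers $\clb \lor \clc$ from premises $\clb \lor \varx$ and $\clc \lor \stdnot{\varx}$, I would first weaken $\clb \lor \varx$ to $(\clb \lor \clc) \lor \varx$ and weaken $\clc \lor \stdnot{\varx}$ to $(\clb \lor \clc) \lor \stdnot{\varx}$, and then resolve these two clauses via~\eqref{eq:homogeneous-resolution-rule} to recover $\clb \lor \clc$. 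The first thing to check is that this makes sense syntactically: since all clauses in a standard refutation are non-trivial, we have $\varx \notin \variables(\clb \lor \clc)$, so $(\clb \lor \clc) \lor \varx$ and $(\clb \lor \clc) \lor \stdnot{\varx}$ are non-trivial clauses that legitimately subsume $\clb \lor \varx$ and $\clc \lor \stdnot{\varx}$, respectively; and in the degenerate case where one of these weakenings is the identity (for instance for the final step deriving $\emptycl$ from $\varx$ and $\stdnot{\varx}$) I would simply omit that weakening.

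Next I would do the bookkeeping for the three measures, working in the configuration-style formulation, which is equivalent to the sequence formulation by the discussion above. The width bound is immediate: the only new clauses are the two auxiliary weakenings, each with at most $\widthstd + 1$ literals because $\clb \lor \clc$ has at most~$\widthstd$. The length bound is also immediate: each inference step of the original is replaced by three steps (two weakenings and one resolution), erasures do not count towards length, and downloads are unchanged, so the new length is at most $3\lengthstd$. For space, I would look at the configuration $\mathbb{D}$ just before a simulated inference step; it contains $\clb \lor \varx$ and $\clc \lor \stdnot{\varx}$, has at most $\clspacestd$ clauses, and in fact at most $\clspacestd - 1$ clauses whenever the inference genuinely adds $\clb \lor \clc$ to memory (if it does not, the step is a no-op and I would skip it in the simulation). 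The simulation then passes through $\mathbb{D}$ enlarged by one, two, and three clauses in turn, of sizes at most $\clspacestd$, $\clspacestd+1$, and $\clspacestd+2$, before erasing the two auxiliary clauses and returning to exactly the original post-inference configuration. Hence the space is at most $\clspacestd + 2$.

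I do not expect a genuine obstacle here; this is a routine normal-form argument. The one point that needs a little care is precisely the space accounting: I want the two auxiliary weakened clauses to cost only two extra memory slots, and this works because I introduce them one at a time and erase them immediately after deriving $\clb \lor \clc$, and because the inference being simulated was itself going to occupy a memory slot with $\clb \lor \clc$, so at least one slot of slack is available throughout. Finally I would note that the resulting object is indeed a valid \homogeneous refutation: it still derives $\emptycl$, it uses only the rules~\eqref{eq:resolution-rule}--\eqref{eq:homogeneous-resolution-rule} legally, and by construction every resolution application in it has the \homogeneous form~\eqref{eq:homogeneous-resolution-rule}.
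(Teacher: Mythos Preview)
Your proposal is correct and takes essentially the same approach as the paper, which justifies the observation in one sentence by saying that one inserts at most two weakening steps before each resolution application and then immediately forgets the weakened clauses. Your write-up simply fleshes out the bookkeeping for length, width, and space that the paper leaves implicit.
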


As already mentioned, 
a useful trick to obtain hard CNF formulas for different proof systems
and complexity measures, which will play a key role also in this
paper, is \introduceterm{\xorification{}}, \ie substituting variables
by exclusive ors of new variables and expanding out in the canonical
way to obtain a new CNF formula.  For example, the standard way to
define binary \xorsubstitution for a positive literal~$\varx$ is
\begin{equation}
  \substform{\varx}{\oplus_2}
  = 
  (\varx_{1} \lor \varx_{2})
  \land
  (\olnot{\varx}_{1} \lor \olnot{\varx}_{2})
  \eqcomma
\end{equation}
for a negative literal~$\olnot{\vary}$ we have
\begin{equation}
  \substform{\olnot{\vary}}{\oplus_2}
  = 
  (\vary_{1} \lor \olnot{\vary}_{2})
  \land
  (\olnot{\vary}_{1} \lor \vary_{2})
  \eqcomma
\end{equation}
and applying binary \xorsubstitution to the clause
$\varx \lor \olnot{\vary}$
we obtain the CNF formula
\begin{equation}
  \label{eq:example-subst-clause}
  \begin{split}
    \substform{(\varx \lor \olnot{\vary})}{\oplus_2}
    =
    \substform{\varx}{\oplus_2} \lor \substform{\olnot{\vary}}{\oplus_2}
    = \ \ \
    &(\varx_{1} \lor \varx_{2} \lor \vary_{1} \lor \olnot{\vary}_{2})
    \land
    (\varx_{1} \lor \varx_{2} \lor \olnot{\vary}_{1} \lor \vary_{2})
    \\
    \land \
    &(\olnot{\varx}_{1} \lor \olnot{\varx}_{2} \lor   \vary_{1} \lor \olnot{\vary}_{2})
    \land
    (\olnot{\varx}_{1} \lor \olnot{\varx}_{2} \lor \olnot{\vary}_{1} \lor \vary_{2})
    \eqperiod
  \end{split}
\end{equation}
The \xorification of a CNF formula~$F$ is the conjunction of all the
formulas corresponding to the \mbox{\xorified{}} clauses of~$F$.
We trust that the reader has no problems parsing this slightly
informal definition by example or generalising it to substitutions
with \XOR of arbitrary arity 
(but see, e.g., Definition~2.12 in~\cite{Nordstrom13SurveyLMCS} for a
more rigorous treatment).

Usually, \xorification is done in such a way that any two variables in
the original formula are replaced by exclusive ors over disjoint sets
of new variables.  Razborov~\cite{Razborov16NewKind} observed that it
can sometimes be useful to allow \xorification with overlapping sets
of variables. Let us define this concept more carefully.

\begin{definition}[\xorification with recycling \cite{Razborov16NewKind}]
  \label{def:xor-substitution}
  Let $F$ be a CNF formula over
  the set of
  variables
  $u_1, \ldots, u_{\orignvars}$
  and let 
  $\expandergraph = (\leftvertexset \disjointunion \rightvertexset,E)$ 
  be a bipartite graph
  with left vertex set
  $\leftvertexset = \set{u_1, \ldots, u_{\orignvars}}$
  and right vertex set
  $\rightvertexset = \set{v_1, \ldots, v_{\newnvars}}$.
  Then for the variables $u_i$ we define the \xorified literals
  \mbox{$\substform{u_i}{\expandergraph} = \bigoplus_{v \in \nbhd(u_i)} v$}
  and
  \mbox{$\substform{\olnot{u}_i}{\expandergraph} = \lnot\bigoplus_{v \in \nbhd(u_i)} v$}
  (where
  $\nbhd(u_i)$
  denotes the neighbours in~$\rightvertexset$ of~$u_i$),
  for clauses $\clc \in F$ we define
  $\substform{\clc}{\expandergraph}
  = \Lor_{\lita \in \clc} \substform{\lita}{\expandergraph}$
  expanded out to CNF 
  as in~\refeq{eq:example-subst-clause}
  but with trivial clauses pruned away, 
  and the 
  \introduceterm{\xorification of~$F$ \wrt~$\expandergraph$} is    
  defined to be
  $\substform{F}{\expandergraph} =
  \Land_{\clc \in F} \substform{\clc}{\expandergraph}$.
\end{definition}

Note that if $F$ is an $\orignvars$-variable $\clwidth$-CNF 
with~$m$ clauses
and 
$\expandergraph = (\set{u_1, \ldots, u_{\orignvars}}
\disjointunion \set{v_1, \ldots, v_{\newnvars}},E)$    
is a bipartite graph of left degree $\expanderdegree$, then
$\substform{F}{\expandergraph}$ is an $\newnvars$-variable
$\clwidth\expanderdegree$-CNF formula 
with most
$2^{\expanderdegree-1} m$ clauses. 
We want to highlight that by definition we have the equality
\begin{equation}
  \label{eq:substitution-and-or}
  \substform{(\clc \lor \lita)}{\expandergraph}  
  =
  \substform{\clc}{\expandergraph}  
  \lor
  \substform{\lita}{\expandergraph}  
\end{equation}
(where we can view the expressions in  
\refeq{eq:substitution-and-or}
either as the Boolean functions computed by these formulas
or as the corresponding clause sets but with trivial clauses removed),
and this will be convenient to use in some of our technical arguments.

We conclude this section with two simple observations that will also be
useful in what follows.

\begin{observation}
  \label{obs:width-xorified}
  If $F$ has a
  (homogeneous)
  resolution refutation in width~$\widthstd$ and
  $\expandergraph$ has left degree bounded by~$\expanderdegree$, then
  $\substform{F}{\expandergraph}$ can be refuted in 
  (homogeneous)
  resolution in
  width~$2\expanderdegree\widthstd$.
\end{observation}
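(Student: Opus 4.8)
The plan is to take a width-$\widthstd$ (homogeneous) refutation $\proofstd = (\cld_1, \dotsc, \cld_\stoptime)$ of~$F$ --- which we may assume uses no weakening, since weakening steps can always be removed without increasing width --- and to walk through it, replacing each clause $\cld_i$ by a short derivation of \emph{all} the clauses that make up the CNF formula $\substform{\cld_i}{\expandergraph}$. Because every literal of $\cld_i$ lies over an original variable $u$ with $\setsize{\nbhd(u)} \leq \expanderdegree$, every clause of $\substform{\cld_i}{\expandergraph}$ has width at most $\expanderdegree\setsize{\cld_i} \leq \expanderdegree\widthstd$; and since $\substform{\emptycl}{\expandergraph} = \emptycl$, once the last clause $\cld_\stoptime = \emptycl$ has been processed we have derived the empty clause. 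So it suffices to show, using only homogeneous resolution and keeping all clauses of width at most~$2\expanderdegree\widthstd$, that one can (a)~download the clauses of $\substform{\cld_i}{\expandergraph}$ when $\cld_i$ is an axiom of~$F$ --- immediate, since these are by definition axioms of $\substform{F}{\expandergraph}$ --- and (b)~derive the clauses of $\substform{\cld_i}{\expandergraph}$ from the (already derived) clauses of $\substform{\cld_j}{\expandergraph}$ and $\substform{\cld_k}{\expandergraph}$ when $\cld_i$ comes from $\cld_j = \clc \lor u$ and $\cld_k = \clc \lor \stdnot{u}$ by an application of rule~\eqref{eq:homogeneous-resolution-rule} over an original variable~$u$.

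For part~(b) I would first set $Y = \nbhd(u) = \set{v_1, \dotsc, v_p}$, $p \leq \expanderdegree$, and observe using~\eqref{eq:substitution-and-or} that $\substform{\cld_j}{\expandergraph}$ is the CNF expansion of $\substform{\clc}{\expandergraph} \lor \bigoplus_{v \in Y} v$, that $\substform{\cld_k}{\expandergraph}$ is the CNF expansion of $\substform{\clc}{\expandergraph} \lor \lnot\bigoplus_{v \in Y} v$, and that $\substform{\cld_i}{\expandergraph} = \substform{\clc}{\expandergraph}$. Fix a target clause $E \in \substform{\clc}{\expandergraph}$, put $Y_E = Y \cap \variables(E)$, and let $\sigma$ be the unique assignment to $Y_E$ falsifying the $Y_E$-literals of~$E$. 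For an assignment $\alpha$ to~$Y$ let $E''_\alpha = \bigvee_{v \in Y} \ell^{\alpha}_v$ be the clause over~$Y$ falsified exactly by~$\alpha$ (so $\ell^{\alpha}_v = v$ if $\alpha(v) = 0$ and $\ell^{\alpha}_v = \stdnot{v}$ otherwise). The two facts to check are that $E''_\alpha$ lies in the CNF of $\bigoplus_{v\in Y} v$ iff $\alpha$ has even parity and in the CNF of $\lnot\bigoplus_{v\in Y} v$ iff $\alpha$ has odd parity, and that $E \lor E''_\alpha$ is non-trivial iff $\alpha$ extends $\sigma$, in which case its $Y_E$-part is absorbed into $E$ and $E \lor E''_\alpha = E \lor \bigvee_{v \in Y \setminus Y_E} \ell^{\alpha}_v$. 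It follows that for every one of the $2^{\setsize{Y \setminus Y_E}}$ assignments $\alpha$ extending~$\sigma$ the clause $E \lor \bigvee_{v \in Y\setminus Y_E}\ell^{\alpha}_v$ is available --- in $\substform{\cld_j}{\expandergraph}$ when $\alpha$ is even and in $\substform{\cld_k}{\expandergraph}$ when $\alpha$ is odd (when $Y_E = Y$ this already puts $E$ itself into one of the two sets). Resolving these clauses pairwise on the variables of $Y \setminus Y_E$, one at a time, then yields~$E$: every step is a homogeneous resolution step over such a variable, and every clause seen has width at most $\setsize{E} + \setsize{Y \setminus Y_E} \leq \expanderdegree\widthstd + \expanderdegree \leq 2\expanderdegree\widthstd$ (assuming $\widthstd \geq 1$; the case $\widthstd = 0$ is trivial, as then $F$ and hence $\substform{F}{\expandergraph}$ contains the empty clause). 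Doing this for every clause $E$ of $\substform{\cld_i}{\expandergraph}$ finishes~(b), and concatenating all these derivations in the order of~$\proofstd$ gives the refutation of $\substform{F}{\expandergraph}$.

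The one point that needs genuine care --- and the reason this is not a purely mechanical simulation --- is exactly the recycling in Definition~\ref{def:xor-substitution}: the neighbourhoods of distinct original variables may overlap, so $Y = \nbhd(u)$ need not be disjoint from $\variables(E)$. If it always were, one could just download all $2^{\setsize{Y}}$ ``parity-ruling-out'' clauses $E \lor E''_\alpha$ and resolve them away, and the bound $2\expanderdegree\widthstd$ would be immediate; with overlap some of these clauses are trivial and have been pruned from $\substform{\cld_j}{\expandergraph}$ and $\substform{\cld_k}{\expandergraph}$, which forces the restriction to the assignments $\alpha$ compatible with the $Y$-literals that $E$ already contains. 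I expect this to be the main obstacle; once it is set up, the width count and the check that every new inference is homogeneous are both routine. (If one prefers to keep weakening steps in $\proofstd$ rather than eliminating them, they are handled even more easily, since each clause of $\substform{\cld_i}{\expandergraph}$ is then simply a weakening of a clause of $\substform{\cld_j}{\expandergraph}$.)
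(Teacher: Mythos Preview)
Your proposal is correct and follows the standard simulation approach that the paper alludes to: the paper does not actually prove this observation but simply cites Theorem~2 in~\cite{BN11UnderstandingSpace} (which treats \xorification without recycling) and remarks that ``recycling can only decrease the width.'' Your write-up is more detailed and self-contained than the paper's, and in particular you work out directly the one nontrivial point---that when $\nbhd(u)$ overlaps $\variables(E)$ some of the parity clauses $E \lor E''_\alpha$ become trivial and are pruned, so one must restrict to assignments~$\alpha$ extending the sign pattern~$\sigma$ already present in~$E$---rather than reducing to the disjoint case.

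One small remark: the sentence ``which we may assume uses no weakening, since weakening steps can always be removed without increasing width'' is safe for general resolution but not obviously so for \emph{homogeneous} resolution (eliminating a weakening and pushing a shorter premise into a homogeneous step need not keep the step homogeneous). Since your parenthetical at the end already handles weakening steps cleanly---each clause of $\substform{\cld_i}{\expandergraph}$ is a weakening of a clause of $\substform{\cld_j}{\expandergraph}$ when $\cld_j \subseteq \cld_i$---you should simply keep weakening in~$\proofstd$ and invoke that argument, rather than trying to strip it out first.
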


This is not hard to show, and follows, e.g., 
\ifthenelse{\boolean{conferenceversion}}
{from} 
{from the proof of}
Theorem~2 in \cite{BN11UnderstandingSpace} (strictly speaking, this
theorem is for \xorification \emph{without} recycling, but recycling
can only decrease the width).

\begin{observation}
  \label{obs:depthlargerthanclausespace}
  If $F$ has a 
  (homogeneous)
  resolution refutation~$\proofstd$ such that the associated
  DAG~$G_{\proofstd}$ has depth (\ie longest path)~$\clspacestd$,
  then $\proofstd$ can be carried out
  (in homogeneous resolution)
  in space
  $\clspacestd + 2$
  (possibly by repeating and/or reordering clauses in~$\proofstd$).
\end{observation}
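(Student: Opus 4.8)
The plan is to reconstruct $\proofstd$ as a configuration-style refutation in which the clauses are re-derived in a recursive, black-pebbling-like order on the DAG $G_{\proofstd}$. For a vertex $v$ of $G_{\proofstd}$ let $\cld_v$ denote the clause labelling~$v$, and let $h(v)$ be the number of vertices on a longest directed path of $G_{\proofstd}$ ending at~$v$, so that $h(v) = 1$ when $v$ is a source and $h(v) = 1 + \max\set{h(u) \mid (u,v) \in E(G_{\proofstd})}$ otherwise. Since $\clspacestd$ bounds the number of edges on any directed path of $G_{\proofstd}$, we have $h(v) \leq \clspacestd + 1$ for every vertex~$v$.

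The key step is to prove, by induction on $h(v)$, the following claim: for every configuration $\clsd$ there is a sequence of derivation steps---axiom downloads, applications of the resolution rule~\refeq{eq:resolution-rule} or the weakening rule~\refeq{eq:weakening-rule} that already occur in~$\proofstd$, and erasures---transforming $\clsd$ into $\clsd \union \set{\cld_v}$ in which every configuration has size at most $\setsize{\clsd} + h(v) + 1$. (If $\cld_v$ is already in $\clsd$ there is nothing to do.) In the base case $v$ is a source, $\cld_v$ is an axiom of~$F$, and a single download suffices. If $v$ is obtained by weakening from~$u$, first invoke the claim for~$u$ starting from~$\clsd$, then weaken to obtain~$\cld_v$, then erase~$\cld_u$. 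If $v$ is obtained by resolving the (necessarily distinct) clauses $\cld_{u_1}$ and $\cld_{u_2}$, order the two predecessors so that $h(u_1) \geq h(u_2)$, invoke the claim for~$u_1$ starting from~$\clsd$ to reach $\clsd \union \set{\cld_{u_1}}$, then invoke the claim for~$u_2$ starting from $\clsd \union \set{\cld_{u_1}}$ to reach $\clsd \union \set{\cld_{u_1}, \cld_{u_2}}$, then apply the resolution step to obtain~$\cld_v$, and finally erase $\cld_{u_1}$ and~$\cld_{u_2}$. Using $h(u_2) \leq h(u_1) = h(v) - 1$ to bound the second recursive call, and noting that the only moment the three ``new'' clauses $\cld_{u_1}, \cld_{u_2}, \cld_v$ coexist is at the resolution step (where $h(v) \geq 2$), one checks that the bound $\setsize{\clsd} + h(v) + 1$ is maintained throughout; this accounting is where the ``$+1$'' slack is spent, which is precisely why the final bound comes out as $\clspacestd + 2$ rather than $\clspacestd + 1$.

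Applying the claim with $\clsd = \emptyset$ and $v$ the vertex labelled by the empty clause~$\emptycl$ yields a configuration-style refutation of~$F$ in which every configuration has size at most $h(v) + 1 \leq \clspacestd + 2$. Every inference in this refutation is a copy of an inference already present in~$\proofstd$ (only possibly repeated and reordered, as the statement allows), so the reconstruction is homogeneous whenever $\proofstd$ is; translating back to the sequence-of-clauses formulation as explained in \refsec{sec:prelims} gives the claimed refutation. I expect the only delicate point to be this last off-by-one bookkeeping in the resolution step; everything else is the standard fact that a fan-in-$2$ DAG of depth~$\clspacestd$ can be black-pebbled with $\clspacestd + 2$ pebbles, transported to resolution via the correspondence between pebble placements and removals and download-or-inference and erasure steps.
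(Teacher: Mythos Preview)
Your proposal is correct and follows essentially the same route as the paper, which merely sketches: unfold $G_{\proofstd}$ into a binary tree of the same depth (your recursion does this implicitly by re-deriving clauses as needed) and then black-pebble that tree in space $\clspacestd + 2$. One small patch worth making explicit: in the erasure steps you should delete $\cld_{u_i}$ only when it was not already present in~$\clsd$, since otherwise the inductive target ``end in $\clsd \cup \set{\cld_v}$'' is not met and the next level of recursion may lose a clause it needs---this costs nothing in space and does not affect your argument otherwise.
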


This second observation is essentially due to~\cite{ET01SpaceBounds}.
To see why this is true, note that the
proof DAG~$G_{\proofstd}$ can be turned into a binary tree of the same
depth by repeating vertices/clauses, and it is then straightforward to
show that any tree-like proof DAG in depth~$\clspacestd$ can be
realized in space at most $\clspacestd + 2$. 

\makeatletter{}%

\section{Proof of Main Theorem}
\label{sec:proof-main-thm}

In this section we present a proof of \refthm{thm:maintheorem}.  The
proof makes use of the following hardness condensation lemma, which
will be established in the next section
and is the main technical contribution of the paper.

\begin{lemma}[Hardness condensation lemma]
  \label{lem:HardnessCondensingSpace}
  For all $\initialcnfwidth \in \Nplus$ and $\smallepsilon>0$ 
  there exist  $\newnvarszero \in \Nplus$ and~$\smalldelta>0$
  such that the following holds.
  Let $\widthlower$ and $\newnvars$ 
  be integers satisfying 
  $\newnvars\geq\newnvarszero$
  and
  $\initialcnfwidth\leq\widthlower \leq
  \newnvars^{\frac{1}{2}-\smallepsilon}$,
  and  
  suppose that $\formf$ is an unsatisfiable
  CNF formula
  over $\orignvars = \lfloor\newnvars^{\delta\widthlower}\rfloor$
  variables   which requires width
  $\widthref{\formf} = \initialcnfwidth$
  and
  space
  $\clspaceref{\formf} = \clspacesym$
  to be refuted in resolution.
 
  Then there is a bipartite graph
  $\graph=(\leftvertexset\disjointunion\rightvertexset,E)$ 
  with
  $\setsize{\leftvertexset}=\orignvars$
  and
  $\setsize{\rightvertexset}=\newnvars$
  such that the $\newnvars$-variable  
  CNF formula $\formf\substituted$ has the following properties:
  \begin{itemize}
  \item 
    $\formf\substituted$ can be refuted in width $\widthlower$.
  \item     
    Any refutation
    $\refof{\proofstd}{\formf\substituted}$ in
    width $\widthstd\leq \widthlower \log\newnvars$ 
    requires space 
    $\clspaceof{\proofstd} \geq 
    {(\clspacesym-\widthstd-3)}2^{-\widthstd}$.
  \end{itemize}
\end{lemma}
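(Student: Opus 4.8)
The plan is to let $\formf\substituted$ be the \XORification with recycling (\refdef{def:xor-substitution}) of $\formf$ along a carefully chosen bipartite expander $\graph = (\leftvertexset \disjointunion \rightvertexset, E)$ with $\setsize{\leftvertexset} = \orignvars$, $\setsize{\rightvertexset} = \newnvars$ and left degree $\expanderdegree$ of order $\widthlower/\initialcnfwidth$, chosen so that $2\expanderdegree\initialcnfwidth \leq \widthlower$. Since $\formf$ is unsatisfiable, so is $\formf\substituted$, and the small-width refutation comes for free: as $\formf$ is refutable in width $\widthref{\formf} = \initialcnfwidth$, \refobs{obs:width-xorified} gives a refutation of $\formf\substituted$ in width $2\expanderdegree\initialcnfwidth \leq \widthlower$. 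The entire difficulty lies in the space lower bound, which I would prove by \emph{projecting} any narrow refutation of $\formf\substituted$ back to a refutation of the original formula $\formf$ while paying only a multiplicative factor $2^{\widthstd}$ in space.

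First I would fix the graph. A standard first-moment argument shows that for $\newnvars$ sufficiently large --- which is what pins down $\newnvarszero$, with $\smalldelta > 0$ then depending only on $\initialcnfwidth$ and $\smallepsilon$ --- a random left-$\expanderdegree$-regular bipartite graph on $\setsize{\leftvertexset} = \orignvars = \lfloor \newnvars^{\smalldelta\widthlower} \rfloor$ and $\setsize{\rightvertexset} = \newnvars$ vertices is an $(\expanderdegree, \expansionguarantee, \expansionfactor)$-boundary expander, i.e.\ every $\leftvertexsubset \subseteq \leftvertexset$ with $\setsize{\leftvertexsubset} \leq \expansionguarantee$ has at least $\expansionfactor\setsize{\leftvertexsubset}$ neighbours in $\rightvertexset$ that are unique to $\leftvertexsubset$, for some $\expansionfactor \geq \expanderdegree/2 \geq 1$ and $\expansionguarantee$ polynomially large in $\newnvars$. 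The constraints $\initialcnfwidth \leq \widthlower \leq \newnvars^{1/2-\smallepsilon}$ and $\orignvars = \lfloor\newnvars^{\smalldelta\widthlower}\rfloor$ are precisely what places these parameters in the range where such expanders exist with left degree of order $\widthlower/\initialcnfwidth$. The only property I will use downstream is the linear-algebraic one: whenever $\setsize{\leftvertexsubset} \leq \expansionguarantee$ the incidence vectors $\set{\mathbf{1}_{\nbhd(u)} : u \in \leftvertexsubset}$ are linearly independent over $\mathbb{F}_2$, equivalently every truth assignment to $\leftvertexsubset$ is realized by some assignment to $\rightvertexset$ under the substitution map $u \mapsto \bigoplus_{v \in \nbhd(u)} v$.

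For the space bound, let $\refof{\proofstd}{\formf\substituted}$ be a refutation in width $\widthstd \leq \widthlower\log\newnvars$ and space $S = \clspaceof{\proofstd}$; by \refobs{obs:homogeneous} I may assume $\proofstd$ is a \homogeneous configuration-style refutation $(\clsd_0, \dots, \clsd_\stoptime)$, which accounts for the additive constant $3$ in the statement. For a clause $\cld$ in $\proofstd$ let $V_\cld = \variables(\cld)$ (so $\setsize{V_\cld} \leq \widthstd$) and let $I_\cld = \set{u \in \leftvertexset : \nbhd(u) \subseteq V_\cld}$ be the set of left vertices ``covered'' by $\cld$; since $\partial I' \subseteq V_\cld$ for every $I' \subseteq I_\cld$, boundary expansion forces $\setsize{I_\cld} \leq \widthstd/\expansionfactor \leq \widthstd$. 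To $\cld$ I would associate a set $\cld^{\downarrow}$ of clauses over the covered variables $I_\cld$, encoding what $\cld$ pins down about those variables through the XOR substitution, with $\setsizesmall{\cld^{\downarrow}} \leq 2^{\widthstd}$, and I would set $\clsc_t = \bigcup_{\cld \in \clsd_t} \cld^{\downarrow}$, so that $\setsizesmall{\clsc_t} \leq 2^{\widthstd}\setsizesmall{\clsd_t} \leq 2^{\widthstd}S$. Then the task is to show that writing out the derivation steps underlying $(\clsc_t)_{t}$ gives a legal resolution refutation of $\formf$: we have $\clsc_0 = \emptyset$ and $\emptycl \in \clsc_\stoptime$; erasures project to erasures; downloading a clause $\cld$ of $\clc\substituted$ for an axiom $\clc \in \formf$ projects to downloading $\clc$ itself (by linear independence $I_\cld = \variables(\clc)$ and $\cld^{\downarrow} = \set{\clc}$); and a \homogeneous resolution step over a recycled variable $v \in \rightvertexset$ projects to a short resolution derivation over the at most $\widthstd/\expansionfactor$ covered left vertices adjacent to $v$, which by \refobs{obs:depthlargerthanclausespace} fits into the additive slack of the space budget. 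This would produce a refutation of $\formf$ in space at most $2^{\widthstd}S + \widthstd + 3$, and since $\formf$ requires space $\clspaceref{\formf} = \clspacesym$, we conclude $\clspacesym \leq 2^{\widthstd}S + \widthstd + 3$, i.e.\ $S \geq (\clspacesym - \widthstd - 3)2^{-\widthstd}$.

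The step I expect to be the genuine obstacle is making the projection of an inference step faithful. A single \homogeneous resolution of $\clc \lor v$ and $\clc \lor \olnot{v}$ to $\clc$ over a recycled variable $v$ must ``fan out'' into resolution steps over all covered left vertices having $v$ as a neighbour, so the set $\cld^{\downarrow}$ has to be defined carefully enough that (i) these fanned-out clauses are indeed present in the current projected configuration and resolve to the intended clauses --- this is exactly where the unique-neighbour property of $\graph$ is indispensable, as it is what forces the locally relevant XORs to behave like pairwise disjoint XORs --- and (ii) both the $2^{\widthstd}$-fold blow-up and the intermediate clauses of these sub-derivations remain inside the claimed space bound instead of compounding across the $\stoptime$ steps. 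This is the analogue, for DAG-like clause space rather than tree-like length, of the projection and closure lemmas underlying \cite{BN08ShortProofs} and \cite{Razborov16NewKind}, and pushing it through over the whole parameter window $\initialcnfwidth \leq \widthlower \leq \newnvars^{1/2-\smallepsilon}$, $\widthstd \leq \widthlower\log\newnvars$, is where the real work is.
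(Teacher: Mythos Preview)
Your overall strategy matches the paper's: XORify along a boundary expander of left degree roughly $\widthlower/(2\initialcnfwidth)$, get the width upper bound from \refobs{obs:width-xorified}, and project a narrow refutation of $\formf\substituted$ back to a refutation of $\formf$ paying a factor~$2^{\widthstd}$ in space. The place where your plan has a genuine gap is the choice of projection set $I_\cld = \Ker(\variables(\cld))$; this is not enough to make the inference-step simulation go through, and the linear-independence property you isolate does not save it.

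Here is the concrete obstruction. When $\clc$ is obtained by \homogeneous resolution from $\clc \lor \varx$ and $\clc \lor \olnot{\varx}$, the new covered vertices $I_{\clc \lor \varx} \setminus I_\clc$ are precisely those $u$ with $\varx \in \nbhd(u)$ and $\nbhd(u) \setminus \set{\varx} \subseteq \variables(\clc)$. Expansion does not prevent there being several such $u_1,\ldots,u_\ell$, and then all of them have $\varx$ as their \emph{only} neighbour outside $\variables(\clc)$. To resolve them away you would need, for each $\cltrans \in \clc^{\downarrow}$ and each sign pattern on the~$u_i$, an assignment falsifying $\cltrans\substituted$, all the signed $u_i\substituted$, and~$\clc$ simultaneously. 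But such an assignment is already fixed on $\variables(\clc) \supseteq \variables(\cltrans\substituted)$, leaving only the single bit~$\varx$ free; with $\ell \geq 2$ parities $u_i\substituted$ to control independently this is impossible, since $u_i\substituted \oplus u_j\substituted$ is a function of $\variables(\clc)$ alone. So the ``fan-out'' you anticipate cannot be realised, and the clauses you need are simply not in $(\clc\lor\varx)^{\downarrow} \cup (\clc\lor\olnot{\varx})^{\downarrow}$.

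The fix, which is what the paper does, is to project over $\Ker(\closure(\clc))$ for a \emph{closure} $\closure(\clc) \supseteq \variables(\clc)$ chosen so that (i)~$\setsize{\Ker(\closure(\clc))} \leq \widthofarg{\clc}$ still holds, and crucially (ii)~the residual graph $\expandergraph \setminus \closure(\clc)$ is again a boundary expander (\reflem{lem:ClosedSet}). Property~(ii) is exactly what your argument lacks: it lets a peeling argument in the residual expander find, for each extra kernel vertex in $\Ker(\closure(\clc\lor\varx)) \setminus \Ker(\closure(\clc))$, a \emph{fresh} unique neighbour outside $\closure(\clc) \supseteq \variables(\clc)$, and hence a fresh free bit with which to set that parity independently. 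The price is that closures are not monotone in~$\clc$, so weakening is no longer trivial and needs the same peeling machinery; and your axiom-download claim $\cld^{\downarrow} = \set{\claxiom}$ also becomes ``every clause in $\cld^{\downarrow}$ is a weakening of~$\claxiom$'' rather than an equality.
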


We want to apply this lemma to formulas of low width complexity but
high space complexity as stated next.

\begin{theorem}[\cite{BN08ShortProofs}]
  \label{thm:formulas_small_width_large_clspace}
  There is a family $\set{\formf_\orignvars}_{\orignvars\in \mathbb N}$ of $\orignvars$-variable \xcnf{6} formulas
  of size~$\bigtheta{\orignvars}$
  which can be refuted in 
  width $\widthref{\formf_\orignvars} = 6$
  but require space
  $\clspaceref{\formf_\orignvars} = \bigomega{\orignvars/\log \orignvars}$.
\end{theorem}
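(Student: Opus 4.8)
The plan is to instantiate the \xorsubstitution{} strategy of~\cite{BN08ShortProofs} on pebbling contradictions over suitably hard graphs. First I would fix a family of single-sink directed acyclic graphs $\set{G_\newnvars}_{\newnvars \in \mathbb{N}}$ on $\bigtheta{\newnvars}$ vertices in which every non-source vertex has in-degree exactly two and whose black-white pebbling price is $\bigomega{\newnvars/\log\newnvars}$; such constructions are classical, and this bound is tight since any such graph can be black-white pebbled with $\bigoh{\newnvars/\log\newnvars}$ pebbles. Write $\mathit{Peb}_{G_\newnvars}$ for the pebbling contradiction of $G_\newnvars$: a variable $x_v$ for every vertex $v$, a unit axiom $x_s$ for every source $s$, a clause $\stdnot{x}_u \lor \stdnot{x}_w \lor x_z$ for every non-source vertex $z$ with predecessors $u,w$, and a unit axiom $\stdnot{x}_t$ for the sink $t$; this is an unsatisfiable 3-CNF with $\bigtheta{\newnvars}$ clauses. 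I would then set $\formf_\orignvars \defeq \substform{\mathit{Peb}_{G_\newnvars}}{\oplus_2}$, using \emph{disjoint} pairs of fresh variables for distinct vertices (no recycling), so that $\orignvars = 2\newnvars = \bigtheta{\newnvars}$; expanding out to CNF as in \refsec{sec:prelims}, each width-$3$ axiom turns into at most $8$ clauses of width $6$ and each unit axiom into $2$ clauses of width $2$, so $\formf_\orignvars$ is a 6-CNF of size $\bigtheta{\newnvars} = \bigtheta{\orignvars}$.

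For the width upper bound I would lift the canonical linear-length refutation of $\mathit{Peb}_{G_\newnvars}$ --- the one that derives the unit clause $x_v$ for each vertex $v$ in topological order --- through the substitution. Keeping in memory, for each already-processed vertex $v$, the two clauses $v_1 \lor v_2$ and $\stdnot{v}_1 \lor \stdnot{v}_2$ that encode $\substform{x_v}{\oplus_2}$, one can simulate the single step ``derive $x_z$ from $x_u$, $x_w$ and $\stdnot{x}_u \lor \stdnot{x}_w \lor x_z$'' by a constant-length block of resolution steps over the new variables which first eliminates the two literals encoding $u$ and then those encoding $w$, never producing a clause wider than $6$. This yields a refutation of $\formf_\orignvars$ in width $6$ --- sharper than the $2\expanderdegree\widthstd$ bound of \refobs{obs:width-xorified} --- and since $\formf_\orignvars$ is itself a nontrivial 6-CNF we obtain $\widthref{\formf_\orignvars} = 6$.

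The space lower bound is the substantive part, and I would split it into two steps. Step~1 (following Ben-Sasson~\cite{Ben-Sasson09SizeSpaceTradeoffs}): any resolution refutation of $\mathit{Peb}_{G_\newnvars}$ uses \introduceterm{variable space} --- the maximum number of distinct variables simultaneously in memory --- at least the black-white pebbling price of $G_\newnvars$. Here, from each configuration $\clsd$ of a configuration-style refutation one reads off a black-white pebble placement on $G_\newnvars$ (a black pebble on $v$ whenever $\clsd$ semantically supports $x_v$, plus white pebbles recording the hypotheses this support relies on) and checks that axiom downloads, inferences and erasures induce only legal pebbling moves, that the terminal configuration pebbles the sink, and that the number of pebbles in use never exceeds the number of variables mentioned by $\clsd$. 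Step~2 (the amplification of~\cite{BN08ShortProofs}): from any refutation $\proofstd$ of $\formf_\orignvars = \substform{\mathit{Peb}_{G_\newnvars}}{\oplus_2}$ in clause space $\clspacestd$ one can extract a refutation of $\mathit{Peb}_{G_\newnvars}$ in variable space $\bigoh{\clspacestd}$ --- the point being that, because the two substitution variables attached to distinct vertices are disjoint, a set of $\clspacestd$ clauses over the new variables can ``commit'' to the value of the exclusive or $v_1 \oplus v_2$ for only $\bigoh{\clspacestd}$ of the base variables $v$, so one can map the sequence of configurations of $\proofstd$ to a configuration-style refutation of $\mathit{Peb}_{G_\newnvars}$ whose variable set at each step is the corresponding committed set. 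Composing, every resolution refutation of $\formf_\orignvars$ needs clause space $\bigomega{\newnvars/\log\newnvars} = \bigomega{\orignvars/\log\orignvars}$, since refuting $\mathit{Peb}_{G_\newnvars}$ requires variable space $\bigomega{\newnvars/\log\newnvars}$; that is, $\clspaceref{\formf_\orignvars} = \bigomega{\orignvars/\log\orignvars}$ as claimed.

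I expect Step~2 --- the clause-space-to-variable-space projection through \xorsubstitution{} --- to be the main obstacle; it is essentially the technical content of~\cite{BN08ShortProofs}. The delicate points are defining the committed variable set associated with a configuration (a matching/restriction argument exploiting the disjointness of the substitution variables) and verifying that the induced sequence of base-formula configurations is genuinely a sound refutation with the claimed variable-space bound. Step~1 is comparatively routine, but still needs the white-pebble bookkeeping arranged so that legality holds across every type of derivation step, and one also has to make sure the chosen graph family has black-white --- not merely black --- pebbling price genuinely $\bigomega{\newnvars/\log\newnvars}$.
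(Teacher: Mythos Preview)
Your proposal is correct and follows essentially the same approach that the paper itself describes: the paper does not prove this theorem but quotes it from~\cite{BN08ShortProofs}, and in the discussion after the proof of \refthm{thm:maintheorem} it sketches exactly your two-step outline---pebbling lower bounds give variable-space lower bounds for~$\pebcontr[\graphg]{}$ (attributed to~\cite{Ben-Sasson09SizeSpaceTradeoffs}), and binary \xorsubstitution without recycling amplifies these to clause-space lower bounds for~$\substform{\pebcontr[\graphg]{}}{\oplus_2}$ (attributed to~\cite{BN08ShortProofs}). Your width-$6$ verification and identification of Step~2 as the technical heart are both on target.
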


Combining
\reflem{lem:HardnessCondensingSpace}
and
\refthm{thm:formulas_small_width_large_clspace}, 
we can prove our main result.

\begin{proof}[Proof of \refthm{thm:maintheorem}]
  Recall that we want to prove that for any constant $\varepsilon>0$
  and any non-decreasing function 
  $\widthlower(\indexnalt)
  \leq \indexnalt^{\frac{1}{2}-\epsilon}$ 
  there is a
  family $\set{\formf_\indexnalt}_{\indexnalt\in \mathbb N}$ of
  $\indexnalt$-variable \CNF{} formulas which 
  have resolution refutations 
  of width~$\widthlower(\indexnalt)$ 
  but for which any
  refutation of width  
  $\littleoh{\widthlower(\indexnalt)\log \indexnalt}$ 
  requires clause space 
  $\indexnalt^{\bigomega{\widthlower(\indexnalt)}}$.

  From \refth{thm:formulas_small_width_large_clspace} we obtain constants
  $\smallepsilonalt>0$ and~$\orignvarszero \in \Nplus$ and a family of
  \mbox{$\orignvars$-variable}  \xcnfform{6}{}s $\formf_\orignvars$  
  that require clause space $\smallepsilonalt\orignvars/\log
  \orignvars$ for all $\orignvars\geq\orignvarszero$.  
  \ifthenelse{\boolean{conferenceversion}}
  {We want to apply \reflem{lem:HardnessCondensingSpace}}
  {We want to apply hardness
    condensation as in \reflem{lem:HardnessCondensingSpace}}
  to these formulas. 
  Let $\smallepsilon>0$ be given in Theorem~\ref{thm:maintheorem} and fix
  $\initialcnfwidth=6$.
  Plugging this into 
  \reflem{lem:HardnessCondensingSpace}
  provides constants
  $\smalldelta>0$ and $\newnvarszero \in \Nplus$,
  where in addition we choose $\newnvarszero$
  large enough so that
  $\lfloor\newnvarszero^{\delta\widthlower(\newnvarszero)}\rfloor\geq
  \orignvarszero$  
  (this is always possible since
  $\delta\widthlower(\newnvarszero) \geq 6\delta > 0$).

  For any 
  $\newnvars\geq\newnvarszero$, 
  set
  $\orignvars = 
  \lfloor\newnvars^{\delta\widthlower(\newnvars)}\rfloor\geq
  \orignvarszero$ 
  and let
  $\graph=(\leftvertexset\disjointunion\rightvertexset,E)$ 
  with
  $\setsize{\leftvertexset}=\orignvars$
  and
  $\setsize{\rightvertexset}=\newnvars$
  be a bipartite graph with properties as guaranteed by
  \reflem{lem:HardnessCondensingSpace}.
  Then the lemma says that
  $\formf_\orignvars\substituted$ is an
  \mbox{$\newnvars$-variable} formula 
  which can be refuted in width $\widthlower$, but 
  for for which every refutation of width
  \mbox{$\widthstd\leq \frac{\widthlower}{4\initialcnfwidth} \log\newnvars$} 
  requires clause space
  $ {(\clspacesym-\widthstd-3)}2^{-\widthstd}$, where
  $\spacelowerbound \geq \smallepsilonalt{\orignvars/\log \orignvars}
  = 
  \smallepsilonalt{\lfloor\newnvars^{\delta\widthlower(\newnvars)}\rfloor/\log
    \lfloor\newnvars^{\delta\widthlower(\newnvars)}\rfloor}$ 
  is the space lower  bound for~$\formf_\orignvars$.   Choosing
  $\widthstd
  \leq
  \frac{\smalldelta}{2}\cdot
  \widthlower(\newnvars) \log\newnvars$ 
  (recall that
  $\widthstd      =
  \littleoh{\widthlower(\newnvars)\log\newnvars}$ 
  by assumption), the sequence of
  calculations 
  \begin{equation}
    \label{eq:final-equation-main-thm}
    {(\clspacesym-\widthstd-3)}2^{-\widthstd}
    \geq
    \bigl(
    \smallepsilonalt{\lfloor\newnvars^{\delta\widthlower(\newnvars)}\rfloor /
      \log \lfloor\newnvars^{\delta\widthlower(\newnvars)}\rfloor} -
    \tfrac{\smalldelta}{2}\widthlower(\newnvars) \log\newnvars
    \bigr)
    2^{- \tfrac{\smalldelta}{2}\widthlower(\newnvars) \log\newnvars}  
    \geq
    \BIGOMEGA{\newnvars^{\frac{\smalldelta}{2}\widthlower(\newnvars)}} 
  \end{equation}  
  yields the desired space lower bound.
\end{proof}

If one looks more closely at what is going on inside the proof of 
\refthm{thm:maintheorem},
where
\reflem{lem:HardnessCondensingSpace}
and
\refthm{thm:formulas_small_width_large_clspace}
come together, 
one can make
a 
somewhat intriguing observation.

As discussed in the introduction,
\refthm{thm:formulas_small_width_large_clspace}
is shown by using so-called pebbling formulas, which we now describe briefly.
Given a DAG~$\graphg$ with sources~$S$ and a unique sink~$z$, and with
all non-sources having fan-in~$2$, we let every vertex in~$\graphg$
correspond to a variable and define the 
\ifthenelse{\boolean{conferenceversion}}
{\introduceterm{pebbling formula} $\pebcontr[\graphg]{}$ to consist of the following
  clauses:}  
{\introduceterm{pebbling formula}  over~$\graphg$,
  denoted~$\pebcontr[\graphg]{}$, to consist of the following
  clauses:}  
\begin{itemize}
  
\item
  for all
  $s \in S$,
  the  clause
  $s$;

\item
  For all non-source vertices $v$ with
  predecessors
  $u_1, u_2$,    the clause
  $\olnot{u}_1 \lor \olnot{u}_2 \lor v$;

\item
  for the sink $z$,
  the 
  clause $\olnot{z}$.

\end{itemize}
Applying standard binary \xorsubstitution (without recycling)
as in~\refeq{eq:example-subst-clause}
to these formulas amplifies 
lower bounds on the number of
variables in memory  
$\varspaceref{\pebcontr[\graphg]{}}$ 
(which follow from properties of the chosen DAG~$\graphg$)
to 
lower bounds on the number of clauses 
$\clspaceref{\substform{\pebcontr[\graphg]{}}{\oplus_2}}$. 
In 
\reflem{lem:HardnessCondensingSpace}
we then do another round of \xorsubstitution, this time with recycling, to
decrease the number of variables while
maintaining the space lower bound for small-width refutations. 
It is not entirely clear why we would need two separate rounds of
\xorification  to achieve this result. In one sense, it would seem
more satisfying to get a
clean one-shot argument that just takes pebbling formulas and yields
the supercritical trade-offs by only one round of \xorification.

And in fact, if we are willing to accept a slightly weaker bound, we could
make such a one-shot argument and apply substitution with recycling directly 
to the pebbling formulas. The reason for this is that one can
actually prove a somewhat stronger version of hardness condensation
than in \reflem{lem:HardnessCondensingSpace}, as we will see
in \refsec{sec:hardness-condensation}. There is no need to
require that the original formula should have high space complexity
unconditionally, but it suffices that the formula exhibits a strong
trade-off between width and clause space. Since the number of clauses
times the maximal width of any clause is an upper bound on the total
number of distinct variables in memory, for any resolution
refutation~$\proofstd$ we have the inequality
$\clspaceof{\refutationstd} \cdot \widthofarg{\refutationstd} 
\geq \varspaceof{\refutationstd}$.
In~\cite{Ben-Sasson09SizeSpaceTradeoffs} a variable space lower
bound 
$\varspaceref{\pebcontr[\graphg]{}} =
\bigomega{\orignvars / \log  \orignvars}$
was presented (for appropriately chosen DAGs~$\graphg$), 
implying that any \mbox{width-$\widthstd$} refutation 
requires clause space at least
$\bigomega{\orignvars/(\widthstd \log \orignvars)}$.  
Since our hardness condensation step incurs a loss of a
factor~$1/2^\widthstd$, by starting with standard pebbling formulas
and applying \xorification with recycling directly we could obtain
asymptotically similar bounds to those in \refthm{thm:maintheorem}
in one shot.

However, one can also argue that by combining
\reflem{lem:HardnessCondensingSpace}
and
\refthm{thm:formulas_small_width_large_clspace}
in the way done above
one obtains a more modular proof, which shows that 
any formulas satisfying the conditions in 
\refthm{thm:formulas_small_width_large_clspace}
can be used for hardness condensation in a black-box fashion.
This is why we chose to present the proof in this way.
\makeatletter{}%

\section{Hardness Condensation}
\label{sec:hardness-condensation}

Let us now prove the hardness condensation lemma. We 
establish
a slightly stronger version of the lemma below, which clearly subsumes
\reflem{lem:HardnessCondensingSpace}.

\begin{lemma}[Hardness condensation lemma, strong version]
  \label{lem:StrongHardnessCondensingSpace} 
  For all 
  $\initialcnfwidth \in \Nplus$ 
  and $\smallepsilon>0$
  there are  
  $\newnvarszero \in \Nplus$
  and
  $\smalldelta>0$
  such that the following holds.
  Let $\widthlower$ and $\newnvars$ be integers satisfying 
  $\newnvars\geq\newnvarszero$ 
  and
  $\initialcnfwidth\leq\widthlower \leq \newnvars^{\frac{1}{2}-\smallepsilon}$ 
  and suppose that $\formf$ is an unsatisfiable
  CNF formula  
  over $\orignvars = \lfloor\newnvars^{\delta\widthlower}\rfloor$ variables 
  which requires width
  $\widthref{\formf} = \initialcnfwidth$
  to be refuted in resolution.

  Then there is a bipartite graph
  $\graph=(\leftvertexset\disjointunion\rightvertexset,E)$ 
  with
  $\setsize{\leftvertexset}=\orignvars$
  and
  $\setsize{\rightvertexset}=\newnvars$
  such that the $\newnvars$-variable  
  CNF formula $\formf\substituted$ has the following properties:
  \begin{itemize}
  \item 
    The \xorified formula~$\formf\substituted$ 
    can be refuted in width $\widthlower$.
  \item     
    \ifthenelse{\boolean{conferenceversion}}
    {Any refutation}
    {Any resolution refutation}
    $\refof{\proofstd}{\formf\substituted}$ 
    of the \xorified formula~$\formf\substituted$ 
    in width $\widthstd\leq \widthlower \log\newnvars$ 
    requires space 
    $\clspaceof{\proofstd} \geq 
    {(\clspacesym-\widthstd-3)}2^{-\widthstd}$,
        where 
    $\clspacesym$ is the 
    minimal space of any refutation 
    $\refof{\proofstd'}{\formf}$
    of the original formula~$\formf$ 
    in width at most
    $\widthstd$.
  \end{itemize}
\end{lemma}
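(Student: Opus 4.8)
The plan is to construct the bipartite graph $\graph$ so that the substitution pattern is governed by a sufficiently strong bipartite expander, and then to argue separately for the upper bound on width and the lower bound on space. For the choice of $\graph$, I would take $\leftvertexset$ of size $\orignvars = \lfloor \newnvars^{\delta \widthlower}\rfloor$ and $\rightvertexset$ of size $\newnvars$, with left degree $\expanderdegree = \Theta(\widthlower / \initialcnfwidth)$ chosen so that $\clwidth$-wise $\XOR$-substitution (over $\graphg$) produces clauses of width at most $2\initialcnfwidth \expanderdegree \le \widthlower$; this immediately gives the width upper bound via \refobs{obs:width-xorified} applied to the width-$\initialcnfwidth$ refutation of $\formf$ guaranteed by hypothesis. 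The key structural requirement on $\graphg$ is a \emph{boundary expansion} property: every small set $\leftvertexsubset \subseteq \leftvertexset$ (of size up to roughly $\widthstd$, i.e.\ the width bound we allow) has a unique-neighbour / large-boundary guarantee, so that within any clause of the refutation of $\formf\substituted$ there are enough ``private'' right-variables to reconstruct the behaviour of the original variables. Such expanders with the required density parameters exist for $\widthlower \le \newnvars^{\frac12 - \smallepsilon}$ by a standard probabilistic argument (or by invoking the construction in \cite{Razborov16NewKind}), and $\newnvarszero, \smalldelta$ are fixed accordingly.

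For the space lower bound, the plan is a projection/simulation argument: given a \homogeneous refutation $\refof{\proofstd}{\formf\substituted}$ of width $\widthstd \le \widthlower \log \newnvars$, I would extract from it a refutation $\refof{\proofstd'}{\formf}$ of the original formula whose space is not much larger than the \emph{variable space} of $\proofstd$ divided by roughly $2^{\widthstd}$, and whose width is at most $\widthstd$. The idea, following the \xorification-undoing technique of \cite{BN08ShortProofs}, is that each clause $\cld$ in a configuration of $\proofstd$ mentions at most $\widthstd$ right-variables; by the expansion property the corresponding set of left-variables it ``really concerns'' has a large boundary, so one can define a restriction/decoding that collapses $\cld$ to a clause over the $u_i$'s. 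One then shows that a configuration of $\proofstd$ of size $\clspaceof{\proofstd}$ decodes to a configuration over $\formf$ of size at most $2^{\widthstd} \cdot \clspaceof{\proofstd}$ (the $2^{\widthstd}$ factor accounting for the different affine settings of the at most $\widthstd$ right-variables in play), and that the derivation steps of $\proofstd$ translate into valid resolution steps over $\formf$ up to this blow-up. Since the decoded refutation must have space at least $\clspacesym$ and width at most $\widthstd$, we get $2^{\widthstd}\cdot(\clspaceof{\proofstd} + \widthstd + 3) \ge \clspacesym$ after accounting for homogenization overhead (\refobs{obs:homogeneous}) and the additive constants, which rearranges to $\clspaceof{\proofstd} \ge (\clspacesym - \widthstd - 3) 2^{-\widthstd}$ as claimed.

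The main obstacle I expect is making the decoding map well-defined and \emph{memory-faithful} simultaneously: the subtlety is that a single right-variable $v \in \rightvertexset$ participates in the $\XOR$ of several left-variables $u_i$, so when $\proofstd$ has several clauses in memory that each touch $v$, one must consistently assign affine conditions so that the decoded clauses are all over $\formf$ and still form a legal configuration-style refutation with the space measured correctly. This is exactly where the expander's unique-neighbour property must be used \emph{locally per configuration} rather than globally (as Razborov does in the tree-like setting), and where the DAG-like, space-bounded setting departs from \cite{Razborov16NewKind}: one cannot afford to branch, so the argument must maintain a single coherent restriction of the right-variables across the current memory and update it as clauses are downloaded and erased, paying the $2^{\widthstd}$ factor only for the variables simultaneously live. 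A second, more routine obstacle is verifying that axiom downloads of clauses in $\formf\substituted$ decode to (weakenings of) axioms of $\formf$ or to trivially derivable clauses, which requires unpacking \refdef{def:xor-substitution} and the identity \refeq{eq:substitution-and-or}; I would handle this case analysis carefully but do not anticipate conceptual difficulty there.
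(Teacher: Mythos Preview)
Your high-level outline is correct and matches the paper: choose $\expandergraph$ to be a boundary expander with left degree $d = \lfloor \widthlower/(2\initialcnfwidth)\rfloor$ and expansion guarantee $r = 2\widthlower\log\newnvars$ (so that the width bound $\widthstd \le r/2$ is exactly what is needed), get the width upper bound from \refobs{obs:width-xorified}, and prove the space lower bound by extracting from any width-$\widthstd$ refutation of $\formsubst$ a width-$\widthstd$ refutation of $\formf$ with space at most $2^{\widthstd}\clspaceof{\proofstd}+\widthstd+3$.

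Where your plan diverges from the paper---and where I think you have a real gap---is in the decoding mechanism. You propose to ``maintain a single coherent restriction of the right-variables across the current memory and update it as clauses are downloaded and erased.'' The paper does \emph{not} do this, and for good reason: a shared right-variable $v$ can be live in several memory clauses simultaneously, and there is no single affine assignment to $v$ that decodes all of them correctly; worse, when one of those clauses is erased you cannot simply unset $v$, since it is still constrained by the others. Trying to make a global evolving restriction consistent under erasure is exactly the obstacle you flag, and your proposal does not actually resolve it.

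The paper's device is instead \emph{per-clause} and purely semantic. For each clause $C$ in $\proofstd$ one first passes to a \emph{closure} $\gamma(C)\supseteq\vars{C}$ with the property that $\setsize{\Ker(\gamma(C))}\le\widthofarg{C}$ and the induced subgraph $\expandergraph\setminus\gamma(C)$ is still a boundary expander (this is \reflem{lem:ClosedSet}). One then replaces $C$ by the set $\clauseset(C)$ of all clauses $D$ over the variable set $\Ker(\gamma(C))$ such that $D\substituted$ and $C$ are \emph{simultaneously falsifiable}. This gives at most $2^{\widthstd}$ clauses per $C$, each of width $\le\widthstd$, with no global restriction to maintain. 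The nontrivial step is then showing that a resolution or weakening step on $C$ can be simulated by a short (depth $\le\widthstd+1$) derivation among these decoded clause sets; this is where the residual expansion of $\expandergraph\setminus\gamma(C)$ is used, via a peeling argument that produces fresh unique neighbours one at a time to absorb the symmetric difference between $\Ker(\gamma(C))$ and $\Ker(\gamma(C\lor x))$. Your sketch does not contain this closure-plus-simultaneous-falsifiability idea, and without it the simulation of inference steps does not go through.
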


Clearly, the key to obtain \reflem{lem:StrongHardnessCondensingSpace}
is to choose the right kind of graphs.  As
in~\cite{Razborov16NewKind}, we use boundary expander graphs where
the right-hand side is significantly smaller than the left-hand side.
Let us start by giving a proper definition of these graphs and reviewing the 
properties that we need from them.  
Most of our discussion of
boundary expanders can be recovered from~\cite{Razborov16NewKind},
but since our setting of parameters is slightly different 
\ifthenelse{\boolean{conferenceversion}}
{we give a self-contained presentation below. We refer to the
  full-length version of this paper for any missing proofs.}
{we give a self-contained presentation and also provide full
  proofs of all claims in \refapp{app:appendix} for completeness.
  We remark that there is also a significant overlap
  with~\cite{BN16QuantifierDepth} in our treatment of expander graphs below.}

In what follows, we will let
$\expandergraph =
(\leftvertexset \disjointunion  \rightvertexset,E)$ 
denote a bipartite graph with left vertices~$\leftvertexset$
and
right vertices~$\rightvertexset$.
We write
$
\nbhd^\graph \bigl( \leftvertexsubset \bigr) 
=
\Setdescr{v}{\{u,v\}\in E(\graph),u\in \leftvertexsubset}$
to denote the set of right neighbours of a left vertex
subset~$\leftvertexsubset \subseteq \leftvertexset$
(and vice versa for right vertex subsets),
dropping the graph~$\graph$ from the notation when it is clear from
context. For a single vertex~$v$ we will use the 
abbreviation $\nbhd ( v ) = \nbhd ( \set{v} ) $.

\begin{definition}[Boundary expander]
  A bipartite graph 
  $\expandergraph = (\leftvertexset \disjointunion
  \rightvertexset,E)$ 
  is an
  \introduceterm{\nmboundaryexpnodegstd{}{}},
  or \introduceterm{unique neighbour expander},
  if
  \mbox{$\setsize{\leftvertexset}=\leftsize$},
  \mbox{$\setsize{\rightvertexset}=\rightsize$},
  and for every set 
  $\leftvertexsubset\subseteq\leftvertexset$,
  $\setsize{\leftvertexsubset}\leq \expansionguarantee$,
  it holds that
  $\setsize{\boundary(\leftvertexsubset)} \geq
  \expansionfactor\setsize{\leftvertexsubset}$, 
  where
  $\boundary(\leftvertexsubset) = 
  \Setdescr[:]{v \in \nbhd^{\graph}(\leftvertexsubset)}
  {\Setsize{\nbhd^{\graph}(v)\cap\leftvertexsubset} = 1}$ 
  is the \introduceterm{boundary} 
  or the set of \introduceterm{unique neighbours}
  of~$\leftvertexsubset$.
  \Aboundaryexpstd is  \aboundaryexpnodegstd where additionally 
  $\Setsize{\nbhd^{\graph}(u)} \leq \expanderdegree$ 
  for all
  $u\in\leftvertexset$,
  \ie where the left degree is bounded by~$\expanderdegree$. 
\end{definition}

\ifthenelse{\boolean{conferenceversion}}
{}
{An important property of \boundaryexpnodegstd{}s,
  which holds for arbitrarily small but positive expansion
  $\expansionfactor > 0$, is that  any left vertex subset
  $\leftvertexsubset \subseteq \leftvertexset$
  of size 
  $\setsize{\leftvertexsubset} \leq \expansionguarantee$
  has a  matching into~$\rightvertexset$.
  In addition, this matching can be chosen in such a way that there is
  an ordering of the  vertices in~$\leftvertexsubset$ such that every
  vertex $u_i \in \leftvertexsubset$ is matched to a vertex 
  outside of the neighbourhood of
  the preceding vertices $u_1, \ldots, u_{i-1}$.
  The proof of this fact uses what is sometimes
  referred to as a \introduceterm{peeling argument}, which we
  recapitulate below for the convenience of the reader.

  \begin{lemma}[Peeling lemma] \label{lem:peeling_lemma}
    Let
    $\expandergraph = (\leftvertexset \disjointunion \rightvertexset,E)$ 
    be   \aboundaryexpnodegstd with
    $\expansionguarantee\geq1$
    and
    $\expansionfactor > 0$.
    Then every left vertex subset 
    $\leftvertexsubset \subseteq \leftvertexset$
    of size 
    $\setsize{\leftvertexsubset} = \ell \leq \expansionguarantee$
    can be ordered
    $\leftvertexsubset = (u_1,\ldots,u_\ell)$
    in such a way that there is a matching into an ordered right vertex
    subset 
    $\rightvertexsubset = (v_1,\ldots,v_\ell) \subseteq
    \rightvertexset$ 
    for which
    $v_i\in \nbhd(u_i)\setminus \nbhd(\{u_1,\ldots,u_{i-1}\})$. 
  \end{lemma}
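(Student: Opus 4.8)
The plan is to prove this by the standard \introduceterm{peeling argument} alluded to in the text: build the ordering $(u_1,\dots,u_\ell)$ greedily from the last index backwards, at each stage using the boundary-expansion hypothesis merely to guarantee that the current set still has a \emph{nonempty} boundary, and hence contains a vertex that can be removed together with one of its private (unique) neighbours.

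Concretely, I would construct a strictly decreasing chain of left vertex subsets $\leftvertexsubset = \leftvertexsubset_\ell \supsetneq \leftvertexsubset_{\ell-1} \supsetneq \dots \supsetneq \leftvertexsubset_1 \supsetneq \leftvertexsubset_0 = \emptyset$ as follows. Given $\leftvertexsubset_i$ with $\setsize{\leftvertexsubset_i} = i \geq 1$, note that $1 \leq i \leq \ell \leq \expansionguarantee$, so the expansion property yields $\setsize{\boundary(\leftvertexsubset_i)} \geq \expansionfactor \cdot i > 0$ --- the only place the hypothesis is used, and only through $\expansionfactor > 0$, not through any quantitative bound. Hence $\boundary(\leftvertexsubset_i) \neq \emptyset$; pick any $v_i \in \boundary(\leftvertexsubset_i)$ and let $u_i$ be its unique neighbour inside $\leftvertexsubset_i$ (well defined since $\setsize{\nbhd(v_i) \cap \leftvertexsubset_i} = 1$ by the definition of the boundary), and set $\leftvertexsubset_{i-1} = \leftvertexsubset_i \setminus \set{u_i}$. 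After $\ell$ steps this produces an enumeration $\leftvertexsubset = \set{u_1,\dots,u_\ell}$ with $\leftvertexsubset_i = \set{u_1,\dots,u_i}$ for every $i$, together with right vertices $v_1,\dots,v_\ell \in \rightvertexset$. (Equivalently one can phrase this as an induction on $\ell$, peeling off $u_\ell$ first and applying the inductive hypothesis to $\leftvertexsubset \setminus \set{u_\ell}$; I would present the explicit recursion since it directly delivers the ordering refinement the lemma demands.)

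It then remains only to verify the two required properties. First, since $v_i$ is by construction a unique neighbour of $u_i$ in $\leftvertexsubset_i = \set{u_1,\dots,u_i}$, we have $\nbhd(v_i) \cap \set{u_1,\dots,u_i} = \set{u_i}$, which unwinds to $v_i \in \nbhd(u_i)$ and $v_i \notin \nbhd(u_j)$ for all $j < i$, i.e., $v_i \in \nbhd(u_i)\setminus\nbhd(\set{u_1,\dots,u_{i-1}})$ as claimed. Second, the $v_i$ are pairwise distinct: if $v_i = v_j$ for some $j < i$, then $v_i$ would be a neighbour of $u_j \in \set{u_1,\dots,u_{i-1}}$, contradicting the membership just established. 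Thus $\set{(u_i,v_i)}_{i=1}^{\ell}$ is a matching of $\leftvertexsubset$ into the ordered right vertex subset $\rightvertexsubset = (v_1,\dots,v_\ell) \subseteq \rightvertexset$ with the stated refinement. The argument is essentially routine; the only point requiring a little care is the bookkeeping that ensures the greedy step never gets stuck, and this is exactly what the boundary expansion buys us.
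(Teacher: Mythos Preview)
Your proof is correct and is essentially the same peeling argument as the paper's: both repeatedly use $\expansionfactor>0$ to find a boundary vertex $v_\ell\in\boundary(\leftvertexsubset)$, peel off its unique preimage $u_\ell$, and recurse on $\leftvertexsubset\setminus\set{u_\ell}$. The paper phrases this as a formal induction on~$\ell$ while you unroll it into an explicit descending chain $\leftvertexsubset_\ell\supsetneq\cdots\supsetneq\leftvertexsubset_0$ (and you even add the verification that the $v_i$ are pairwise distinct, which the paper leaves implicit), but the content is identical.
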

  
  \begin{proof}
    The proof is by induction on $\ell$.
    The base case $\ell=1$ is immediate since
    $\expansionguarantee\geq1$
    and
    $\expansionfactor > 0$ 
    implies that no left vertex can be isolated.
    For the induction step, suppose the lemma holds 
    for~$\ell - 1$.
    To define the sequence $v_1,\ldots,v_\ell$ we first fix
    any $v_\ell \in \boundary(\leftvertexsubset)$, which exists
    because   
    $\Setsize{\boundary(\leftvertexsubset)}
    \geq \mbox{$\expansionfactor \setsize{\leftvertexsubset} > 0$}$.
    Since $v_\ell$ is in the boundary of~$\leftvertexsubset$
    there exists a unique~$u_\ell \in \leftvertexsubset$ such  that  
    $\setsize{\nbhd(v_\ell)\cap\leftvertexsubset}=\{u_\ell\}$.
    Thus, for this pair $(u_\ell, v_\ell)$ it holds that
    $v_\ell \in \nbhd(u_\ell) \setminus 
    \nbhd \bigl(\leftvertexsubset\setminus\{u_\ell\}\bigr)$.
    By the induction hypothesis we can now find sequences
    $u_1,\ldots,u_{\ell-1}$ and $v_1,\ldots,v_{\ell-1}$ 
    for~$\leftvertexsubset\setminus\{u_\ell\}$
    such that $v_i\in
    \nbhd(u_i)\setminus \nbhd(\{u_1,\ldots,u_{i-1}\})$,
    to which we can append
    $u_\ell$ and~$v_\ell$ at the end. The lemma follows by the induction
    principle.
  \end{proof}
}

For a right vertex subset
$\rightvertexsubset \subseteq \rightvertexset$ 
in
$\expandergraph = (\leftvertexset \disjointunion \rightvertexset,E)$
we define the \introduceterm{kernel} 
\mbox{$\Ker \bigl( \rightvertexsubset \bigr) \subseteq \leftvertexset$}
of~$\rightvertexsubset$
to be the set of all left vertices whose entire neighbourhood is
contained in 
$\rightvertexsubset$, \ie
\begin{equation}
  \label{eq:def-kernel}
  \Ker \bigl( \rightvertexsubset \bigr) =
  \Setdescr{\mbox{$u\in\leftvertexset$}}{\,\nbhd(u)\subseteq\rightvertexsubset}
  \eqperiod
\end{equation}
We write
$\expandersubgraph{\expandergraph}{\rightvertexsubset}$ 
to denote the subgraph of $\expandergraph$ induced on
$\bigl( \leftvertexset\setminus\Ker(\rightvertexsubset) \bigr)
\disjointunion
\bigl( \rightvertexset\setminus\rightvertexsubset \bigr)$.
In other words, we can think of 
$\expandersubgraph{\expandergraph}{\rightvertexsubset}$
as being obtained from~$\expandergraph$ by first deleting
$\rightvertexsubset$ and afterwards all isolated vertices
from~$\leftvertexset$.

\ifthenelse{\boolean{conferenceversion}}
{A key property} 
{Another key property} 
of boundary expanders is that
for any small enough right vertex
set~$\rightvertexsubset$ we 
can always find a \introduceterm{closure}
$\closure\bigl(\rightvertexsubset\bigr) \supseteq \rightvertexsubset$
with a small kernel on the left  such that the subgraph
$\expandersubgraph{\expandergraph}{\closure(\rightvertexsubset)}$ 
has good boundary expansion. This is very similar to an analogous lemma 
\ifthenelse{\boolean{conferenceversion}}
{in~\cite{Razborov16NewKind}. We omit the proof due to space
  constraints.} 
{in~\cite{Razborov16NewKind},
  but since our parameters are
  slightly different we provide a proof of the next lemma in
  \refapp{app:appendix}.
}

\begin{lemma}
  \label{lem:ClosedSet}
  Let $\expandergraph$ be an
  \boundaryexpnodeg{\expguarantee}{2}.
  Then for every $\rightvertexsubset \subseteq \rightvertexset$ with
  $\setsize{\rightvertexsubset}\leq \expansionguarantee/2$ 
  there exists a set of vertices
  $\closure(\rightvertexsubset) \supseteq \rightvertexsubset$
  such that 
  $\Setsize{\Ker \bigl( \closure \bigl( \rightvertexsubset \bigr)\bigr)}
  \leq \setsize{\rightvertexsubset}$
  and the induced subgraph
  $\expandersubgraph{\expandergraph}{\closure(\rightvertexsubset)}$ is
  an
  \boundaryexpnodeg{\expguarantee/2}{1}.
\end{lemma}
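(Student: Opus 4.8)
The plan is to build $\closure(\rightvertexsubset)$ by a greedy ``closing-off'' process and then verify its two asserted properties. Set $Z_0:=\rightvertexsubset$; given $Z_{\ell-1}$, if $\expandersubgraph{\expandergraph}{Z_{\ell-1}}$ is already an \boundaryexpnodeg{\expansionguarantee/2}{1} then halt and output $\closure(\rightvertexsubset):=Z_{\ell-1}$, and otherwise pick a nonempty left set $S_\ell\subseteq\leftvertexset\setminus\Ker(Z_{\ell-1})$ with $\setsize{S_\ell}\le\expansionguarantee/2$ having fewer than $\setsize{S_\ell}$ unique neighbours inside $\expandersubgraph{\expandergraph}{Z_{\ell-1}}$ (such an $S_\ell$ exists since the subgraph fails to be an \boundaryexpnodeg{\expansionguarantee/2}{1}), and put $Z_\ell:=Z_{\ell-1}\cup\nbhd(S_\ell)$. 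The expansion property of $\expandersubgraph{\expandergraph}{\closure(\rightvertexsubset)}$ holds the instant the process halts, since that is its termination condition, so the work is to show (i) the process halts and (ii) the kernel bound $\setsize{\Ker(\closure(\rightvertexsubset))}\le\setsize{\rightvertexsubset}$. In the degenerate case $\closure(\rightvertexsubset)=\rightvertexsubset$, property (ii) is the short computation $2\setsize{\Ker(\rightvertexsubset)}\le\setsize{\boundary{(\Ker(\rightvertexsubset))}}\le\setsize{\nbhd(\Ker(\rightvertexsubset))}\le\setsize{\rightvertexsubset}$, using $\nbhd(\Ker(\rightvertexsubset))\subseteq\rightvertexsubset$ and $(\expansionguarantee,2)$-expansion (with $\setsize{\Ker(\rightvertexsubset)}\le\expansionguarantee$ first secured by noting that any size-$\expansionguarantee$ subset of $\Ker(\rightvertexsubset)$ would have boundary of size $\le\setsize{\rightvertexsubset}\le\expansionguarantee/2<2\expansionguarantee$, contradicting expansion). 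So assume the process runs for $t\ge1$ steps, producing $S_1,\dots,S_t$.

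Two elementary structural facts drive the rest. First, $S_\ell\subseteq\Ker(Z_\ell)\subseteq\Ker(\closure(\rightvertexsubset))$ --- after adding $\nbhd(S_\ell)$ all neighbours of $S_\ell$ lie in $Z_\ell$, and the kernel only grows as the right-hand side grows --- while for $j>\ell$ we have $S_j\subseteq\leftvertexset\setminus\Ker(Z_{j-1})\subseteq\leftvertexset\setminus\Ker(Z_\ell)$, so the sets $S_1,\dots,S_t$ are pairwise disjoint; write $S^{(\ell)}:=\bigsqcup_{j\le\ell}S_j$. Second (``unique-neighbour accounting''), if $W\subseteq\leftvertexset$ is one of the sets $S^{(\ell)}$, or is $\Ker(\closure(\rightvertexsubset))$, or is any subset of $\Ker(\closure(\rightvertexsubset))$ containing $S^{(t)}$, then every $v\in\boundary{(W)}$ with $v\notin\rightvertexsubset$ belongs to the boundary of some $S_j$ inside $\expandersubgraph{\expandergraph}{Z_{j-1}}$: such a $v$ was first added to the $Z$'s at some step $j$ as a neighbour of $S_j$, so $v\notin Z_{j-1}$; in each of the listed cases $S_j\subseteq W$ (for $W=S^{(\ell)}$ this needs $j\le\ell$, which holds because $v\in\nbhd(S^{(\ell)})\subseteq Z_\ell$); and since $v$ has exactly one neighbour in $W$, that neighbour is also its only neighbour in $S_j$. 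Mapping each such $v$ injectively to its step $j$ and invoking $\setsize{(\text{boundary of }S_j\text{ in }\expandersubgraph{\expandergraph}{Z_{j-1}})}\le\setsize{S_j}-1$ gives $\setsize{\boundary{(W)}\setminus\rightvertexsubset}\le\sum_j(\setsize{S_j}-1)$, the sum over $j\le\ell$ when $W=S^{(\ell)}$ and over $j\le t$ otherwise.

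It remains to combine the accounting with the $(\expansionguarantee,2)$-expansion of $\expandergraph$. By induction on $\ell$, assuming $\setsize{S^{(\ell-1)}}\le\expansionguarantee/2$ so that $\setsize{S^{(\ell)}}\le\expansionguarantee$ and expansion applies to $S^{(\ell)}$,
\[
  2\setsize{S^{(\ell)}}\le\setsize{\boundary{(S^{(\ell)})}}\le\setsize{\rightvertexsubset}+\sum_{j\le\ell}\bigl(\setsize{S_j}-1\bigr)=\setsize{\rightvertexsubset}+\setsize{S^{(\ell)}}-\ell ,
\]
so $\setsize{S^{(\ell)}}\le\setsize{\rightvertexsubset}-\ell\le\setsize{\rightvertexsubset}\le\expansionguarantee/2$; this closes the induction and, as $\setsize{S^{(\ell)}}\ge\ell$, forces $t\le\setsize{\rightvertexsubset}/2$, which is (i). For (ii) put $K:=\Ker(\closure(\rightvertexsubset))$. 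If $\setsize{K}\ge\expansionguarantee$, pick $W\subseteq K$ with $S^{(t)}\subseteq W$ and $\setsize{W}=\expansionguarantee$; accounting and expansion then give $2\expansionguarantee=2\setsize{W}\le\setsize{\rightvertexsubset}+\setsize{S^{(t)}}-t\le2\setsize{\rightvertexsubset}\le\expansionguarantee$, a contradiction, so $\setsize{K}<\expansionguarantee$. Expansion therefore applies to $K$ itself, and accounting (with $\setsize{S^{(t)}}\le\setsize{\rightvertexsubset}-t$) yields $2\setsize{K}\le\setsize{\rightvertexsubset}+\setsize{S^{(t)}}-t\le2\setsize{\rightvertexsubset}-2t$, i.e.\ $\setsize{K}\le\setsize{\rightvertexsubset}-t\le\setsize{\rightvertexsubset}$, as required.

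The one genuinely delicate point is the repeated invocation of $(\expansionguarantee,2)$-expansion on \emph{growing} sets (the partial unions $S^{(\ell)}$ and the final kernel $K$), which forces an \emph{a priori} argument that these sets stay below the expansion threshold $\expansionguarantee$. This is exactly what the slack between the input radius $\expansionguarantee$ and the output radius $\expansionguarantee/2$ provides, and it is why the threshold bound for $K$ is obtained through an auxiliary size-$\expansionguarantee$ superset $W\supseteq S^{(t)}$ rather than directly. The construction parallels the closure lemma of Razborov~\cite{Razborov16NewKind}, but the quantitative relationship between $\setsize{\rightvertexsubset}$, $\setsize{\Ker(\closure(\rightvertexsubset))}$ and the expansion parameters is tuned to match the constants in the statement above.
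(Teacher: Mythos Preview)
Your argument is correct and follows essentially the same greedy closure construction as the paper: iteratively absorb the neighbourhood of a non-expanding left set until the remaining subgraph is an $(\expansionguarantee/2,1)$-boundary expander. The only real difference is in the bookkeeping for the kernel bound. The paper maintains the invariant $\setsize{\Ker(V_i)}\le\setsize{V_0}$ directly at every stage (via a claim bounding $\setsize{\boundary^{\expandergraph}(U')\setminus V_0}$ for sets $U'$ sandwiched between $\Ker(V_{i-1})\cup U_i$ and $\Ker(V_i)$), whereas you instead track the cumulative size $\setsize{S^{(\ell)}}$ of all chosen bad sets and obtain the sharper bound $\setsize{S^{(\ell)}}\le\setsize{\rightvertexsubset}-\ell$, deducing $\setsize{K}\le\setsize{\rightvertexsubset}-t$ only at the end. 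Your version exploits the strict inequality $\setsize{\boundary^{\expandersubgraph{\expandergraph}{Z_{j-1}}}(S_j)}\le\setsize{S_j}-1$ (which is what failing $(\expansionguarantee/2,1)$-expansion actually gives), yielding a quantitative termination bound $t\le\setsize{\rightvertexsubset}/2$ as a byproduct; the paper uses only the weaker $\le\setsize{U_i}$ and appeals to finiteness for termination. One small wording slip: the map $v\mapsto j$ in your unique-neighbour accounting is not injective---rather, it partitions $\boundary(W)\setminus\rightvertexsubset$ into fibres each contained in $\boundary^{\expandersubgraph{\expandergraph}{Z_{j-1}}}(S_j)$, which is what gives the sum bound. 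The logic is fine; just drop the word ``injectively.''
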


The  next lemma states that there exist
\nmboundaryexp{\leftsize}{\rightsize}{\expdegree}{\expguarantee}{2}{}s
where the size $\rightsize$ of the right-hand side is significantly
smaller than the size $\leftsize= n^{\bigtheta\expdegree}$ of the
left-hand side.   
\ifthenelse{\boolean{conferenceversion}}
{This can be proven by a standard application of the probabilistic method.}
{The proof, which closely follows  \cite[Lemma 2.2]{Razborov16NewKind},  is a
  standard application of the probabilistic method,  but is  included
  in \refapp{app:appendix} for completeness.}  

\newcommand{\newexpanderexistTEXT}{%
  Fix constants 
  $\smallepsilon, \smalldelta > 0$ and $\mindegree \geq 2$ 
  such that
  $\smalldelta + \frac{1}{\mindegree} < \smallepsilon/2$. 
  Then there exists an 
  $\rightsizezero\in\Nplus$ 
  such that for all 
  $\rightsize$,
  $\expanderdegree$,
  and
  $\expansionguarantee$ 
  satisfying 
  $\rightsize\geq\rightsizezero$,
  $\mindegree\leq\expanderdegree\leq \rightsize^{1/2 - \smallepsilon}$, 
  and
  $\expansionguarantee \leq \rightsize^{1/2}$ 
  there are
  \nmboundaryexp{\lfloor\rightsize^{\smalldelta\expanderdegree}\rfloor}{\rightsize}{\expanderdegree}{\expansionguarantee}{2}{}s.   
  }

\begin{lemma}%
  \label{lem:newexpanderexist}
  \newexpanderexistTEXT{}
\end{lemma}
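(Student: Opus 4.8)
The statement asserts the existence of strong boundary expanders with a left-hand side of size $\leftsize = \lfloor\rightsize^{\smalldelta\expanderdegree}\rfloor$ that is polynomially larger than the right-hand side $\rightsize$, with left degree bounded by $\expanderdegree$ and with expansion factor $2$ for sets up to size $\expansionguarantee \leq \rightsize^{1/2}$. The plan is to use the probabilistic method: sample a bipartite graph $\expandergraph$ by choosing, independently for each left vertex $u \in \leftvertexset$, a neighbourhood $\nbhd(u) \subseteq \rightvertexset$ of exactly $\expanderdegree$ right vertices uniformly at random (so the left degree is exactly $\expanderdegree$ by construction, which immediately gives the degree bound). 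We then show that with positive probability no ``bad'' left set $\leftvertexsubset$ of size $\leq \expansionguarantee$ violates the boundary condition $\setsize{\boundary(\leftvertexsubset)} \geq 2\setsize{\leftvertexsubset}$.

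First I would set up the failure event and the union bound. A set $\leftvertexsubset$ with $\setsizesmall{\leftvertexsubset} = s$ has a small boundary precisely when many of its $s\expanderdegree$ edge-endpoints collide on the right, i.e.\ when $\setsizesmall{\nbhd(\leftvertexsubset)}$ is small (the boundary is at least $2\setsizesmall{\nbhd(\leftvertexsubset)} - s\expanderdegree$ by a counting argument, since every right vertex that is not a unique neighbour absorbs at least $2$ endpoints). So it suffices to show that with high probability every $\leftvertexsubset$ of size $s \leq \expansionguarantee$ has $\setsizesmall{\nbhd(\leftvertexsubset)} \geq \frac{3}{4}s\expanderdegree$, say, which forces $\setsizesmall{\boundary(\leftvertexsubset)} \geq \frac{1}{2}s\expanderdegree \geq 2s$ using $\expanderdegree \geq \mindegree \geq 4$ (and one adjusts the constant $\mindegree$ to whatever the arithmetic demands). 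For a fixed set $\leftvertexsubset$ of size $s$ and a fixed target set $\rightvertexsubset$ of size $t = \lceil\frac{3}{4}s\expanderdegree\rceil - 1$ on the right, the probability that $\nbhd(\leftvertexsubset) \subseteq \rightvertexsubset$ is at most $\bigl(\binom{t}{\expanderdegree}/\binom{\rightsize}{\expanderdegree}\bigr)^s \leq (t/\rightsize)^{s\expanderdegree}$. Multiplying by the number of choices $\binom{\leftsize}{s}\binom{\rightsize}{t}$ and summing over $s$ from $1$ to $\expansionguarantee$, one gets a bound of roughly $\sum_s \bigl(\leftsize \cdot (e\rightsize/t)^{t/s} \cdot (t/\rightsize)^{\expanderdegree}\bigr)^s$.

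The main obstacle is the calculation showing this sum is $<1$, and this is where the precise relationship $\leftsize = \rightsize^{\smalldelta\expanderdegree}$ and the hypothesis $\smalldelta + \frac{1}{\mindegree} < \smallepsilon/2$ get used. The dominant factor inside the $s$-th bracketed term is $\leftsize \cdot (t/\rightsize)^{\expanderdegree} \approx \rightsize^{\smalldelta\expanderdegree} \cdot (s\expanderdegree/\rightsize)^{\expanderdegree}$; since $s \leq \expansionguarantee \leq \rightsize^{1/2}$ and $\expanderdegree \leq \rightsize^{1/2-\smallepsilon}$, we have $s\expanderdegree \leq \rightsize^{1-\smallepsilon}$, so $(s\expanderdegree/\rightsize)^{\expanderdegree} \leq \rightsize^{-\smallepsilon\expanderdegree}$, and the product is at most $\rightsize^{(\smalldelta - \smallepsilon)\expanderdegree}$, which is a small negative power of $\rightsize$. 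The remaining factor $(e\rightsize/t)^{t/s} = (e\rightsize/t)^{O(\expanderdegree)}$ contributes a factor like $\rightsize^{O(\expanderdegree)/\mindegree}$ at worst — here is where $\frac{1}{\mindegree}$ must be small enough to be beaten by the $\smallepsilon/2$ slack — plus lower-order logarithmic factors. Carefully bounding $t/s \leq \expanderdegree$ and $e\rightsize/t \leq 2\rightsize$ (for $\rightsize$ large), one checks the bracketed quantity is at most $\rightsize^{-c\expanderdegree}$ for some $c > 0$ depending only on $\smallepsilon, \smalldelta, \mindegree$, hence each term is at most $\rightsize^{-cs\expanderdegree} \leq \rightsize^{-cs}$ and the geometric sum over $s \geq 1$ is $o(1)$ once $\rightsize \geq \rightsizezero$ is large enough. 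This gives the needed $\rightsizezero$ and completes the proof.

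One technical point to handle along the way: I should separate out the regime of very small $s$ (say $s = 1, 2, 3$) where the bound $\setsizesmall{\boundary(\leftvertexsubset)} \geq 2s$ with $\expanderdegree \geq 4$ is essentially automatic from the fact that distinct left vertices are very unlikely to have heavily overlapping neighbourhoods, versus the bulk regime, but in fact the single union-bound computation above already covers all $s \geq 1$ uniformly, so no real case split is needed — one just needs to be slightly careful that $t \geq \expanderdegree$ so that $\binom{t}{\expanderdegree}$ makes sense, which holds since $\frac{3}{4}s\expanderdegree - 1 \geq \expanderdegree$ for $s \geq 2$ and the $s=1$ case is immediate because a single left vertex has a neighbourhood of size exactly $\expanderdegree$, giving boundary $\expanderdegree \geq 2$. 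I would close by noting that the sampled graph satisfies all requirements with probability bounded away from $0$ (indeed tending to $1$), so in particular such a graph exists.
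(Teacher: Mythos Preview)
Your approach---probabilistic construction, union bound, and using small neighbourhood size as a witness for small boundary---is exactly the paper's. (The paper samples the $\expanderdegree$ neighbours with replacement rather than as a random $\expanderdegree$-subset, but this is immaterial.) The gap is in your specific choice of threshold $t \approx \tfrac{3}{4}s\expanderdegree$, which does not recover the lemma as stated. Two things go wrong. First, the implication $\setsize{\nbhd(\leftvertexsubset)} \geq \tfrac{3}{4}s\expanderdegree \Rightarrow \setsize{\boundary(\leftvertexsubset)} \geq 2s$ needs $\expanderdegree \geq 4$, but the lemma only assumes $\expanderdegree \geq \mindegree \geq 2$. Second, your claim that $(e\rightsize/t)^{t/s}$ contributes only $\rightsize^{O(\expanderdegree)/\mindegree}$ is incorrect: combining $(e\rightsize/t)^{t/s}$ with $(t/\rightsize)^{\expanderdegree}$ at $t/s = \tfrac{3}{4}\expanderdegree$ gives $e^{3\expanderdegree/4}(t/\rightsize)^{\expanderdegree/4}$, and after inserting $t \leq s\expanderdegree \leq \rightsize^{1-\smallepsilon}$ the bracketed term is at most $\rightsize^{(\smalldelta - \smallepsilon/4 + o(1))\expanderdegree}$. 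No $1/\mindegree$ appears at all, and you would need the stronger hypothesis $\smalldelta < \smallepsilon/4$, which the lemma does not give you.

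The paper instead takes the threshold directly from the contrapositive of the expansion condition: $\setsize{\boundary(\leftvertexsubset)} < 2\ell$ forces $\setsize{\nbhd(\leftvertexsubset)} \leq (1 + \expanderdegree/2)\ell$ by your own edge-counting inequality. With this threshold the same union-bound calculation yields an exponent per $\expanderdegree\ell$ of roughly $\smalldelta + 1/\expanderdegree - \smallepsilon/2$; the $1/\expanderdegree$ term is the additive ``$+1$'' in $(1 + \expanderdegree/2)$ after dividing through by~$\expanderdegree$, and bounding it by $1/\mindegree$ is precisely where the hypothesis $\smalldelta + 1/\mindegree < \smallepsilon/2$ is used. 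So you want $t = (1 + \expanderdegree/2)s$, not $\tfrac{3}{4}s\expanderdegree$; with this choice the argument also covers $\expanderdegree = 2, 3$ without any separate case analysis.
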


After this review of boundary expanders and their properties
we now come to the core argument of the paper, 
namely that space lower bounds are preserved for small-width 
resolution refutations when we apply \XORification 
as in \refdef{def:xor-substitution} with  respect to an
\boundaryexpnodeg{\expansionguarantee}{2}. 
To get cleaner technical arguments in the proofs
we will restrict our attention to \homogeneous resolution refutations
as in~\refeq{eq:homogeneous-resolution-rule}, which for our purposes is
without loss of generality by~\refobs{obs:homogeneous}.

\begin{lemma}[Main technical lemma]
  \label{lem:HardnessCondensingLemma}
  Let $\formorig$ be 
  an unsatisfiable
  \CNF-formula and $\expandergraph$ an
  \boundaryexpnodeg{\expansionguarantee}{2},
  and suppose that 
  $\refof{\refofsubst}{\formsubst}$ 
  is a \homogeneous resolution refutation 
  in width 
  $\widthstd \leq \expansionguarantee/2$   
  of the \xorified formula~$\formsubst$.
  Then 
  there is a \homogeneous refutation $\refof{\refoforig}{\formorig}$ 
  of the original formula~$\formorig$
  in width at most~$\widthstd$ and space  
  $\clspaceof{\refoforig} \leq
  2^{\widthstd}\clspaceof{\refofsubst}+\widthstd+3$. 
\end{lemma}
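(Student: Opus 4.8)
The plan is to extract a refutation of $\formorig$ from the given refutation $\refof{\refofsubst}{\formsubst}$ by a \emph{projection} argument in the spirit of~\cite{BN08ShortProofs,Razborov16NewKind}, but tailored to keep track of \emph{space} rather than length. I will work throughout with configuration-style refutations (as in \refsec{sec:prelims}) and, using \refobs{obs:homogeneous}, assume that $\refofsubst = (\clsd_0,\dots,\clsd_\stoptime)$ is homogeneous; any weakening steps it contains will be handled as an easy special case of the resolution step below.

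First I would set up the projection operator. Every clause $D$ appearing in $\refofsubst$ satisfies $\setsize{\variables(D)} = \widthofarg{D} \le \widthstd \le \expansionguarantee/2$, so \reflem{lem:ClosedSet} applies to the right-vertex set $\variables(D)$: let $Z(D) \defeq \closure(\variables(D))$ be its closure and $\auxkerset(D) \defeq \Ker(Z(D))$ the associated kernel, so that $\setsize{\auxkerset(D)} \le \setsize{\variables(D)} \le \widthstd$ and $\expandersubgraph{\expandergraph}{Z(D)}$ is an \boundaryexpnodeg{\expansionguarantee/2}{1}. Since $\nbhd(u) \subseteq Z(D)$ for every $u \in \auxkerset(D)$, the \XORification $\sigma_\expandergraph$ maps assignments of $Z(D)$ to assignments of $\auxkerset(D)$, and every substituted clause $\substform{\clc}{\expandergraph}$ with $\clc \in \formorig$ and $\variables(\clc) \subseteq \auxkerset(D)$ has all its variables in $Z(D)$. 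I then let $\mathrm{proj}(D)$ be the set of clauses over the variables $\auxkerset(D)$ whose falsifying partial assignment is already excluded by $D$ together with these ``local'' substituted axioms: a clause $\clc$ with $\variables(\clc) \subseteq \auxkerset(D)$ lies in $\mathrm{proj}(D)$ exactly when every assignment $\beta$ of $Z(D)$ that satisfies $D$ and every substituted clause $\substform{\clc'}{\expandergraph}$ with $\clc'\in\formorig$, $\variables(\clc')\subseteq\auxkerset(D)$, induces via $\sigma_\expandergraph$ an assignment of $\auxkerset(D)$ satisfying $\clc$. By construction $\widthofarg{\clc} \le \widthstd$ and $\formorig \models \clc$ for every $\clc \in \mathrm{proj}(D)$; restricting to clauses with full support on $\auxkerset(D)$ gives $\setsize{\mathrm{proj}(D)} \le 2^{\widthstd}$; and $\mathrm{proj}(\emptycl) = \set{\emptycl}$.

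Next I would build $\refoforig$ by replacing each configuration $\clsd_t$ with $\clsd'_t \defeq \bigcup_{D \in \clsd_t}\mathrm{proj}(D)$, so that $\clsd'_0 = \emptyset$, $\emptycl \in \clsd'_\stoptime$ and $\setsize{\clsd'_t} \le 2^{\widthstd}\setsize{\clsd_t} \le 2^{\widthstd}\clspaceof{\refofsubst}$, and then check that each elementary step of $\refofsubst$ is mimicked by a short sequence of legal homogeneous $\formorig$-derivation steps of width at most $\widthstd$ using at most $\widthstd+3$ clauses of additional working space:
\begin{itemize}
\item an \textbf{erasure} in $\refofsubst$ becomes an erasure (keeping any clause still needed by a surviving projection only decreases space);
\item downloading an \textbf{axiom} $D \in \substform{\clc}{\expandergraph}$ is simulated by downloading $\clc \in \formorig$, deriving the at most $2^{\widthstd}$ clauses of $\mathrm{proj}(D)$ from $\clc$ and the clauses already in memory, and erasing $\clc$ together with all working clauses;
\item a homogeneous \textbf{resolution inference} of $D$ from $D \lor \varx$ and $D \lor \stdnot{\varx}$ is simulated by deriving $\mathrm{proj}(D)$ from $\mathrm{proj}(D \lor \varx) \cup \mathrm{proj}(D \lor \stdnot{\varx})$ (both already present, and living over the same kernel since $\variables(D\lor\varx) = \variables(D\lor\stdnot{\varx})$) together with the local substituted axioms, and then erasing the working clauses.
\end{itemize}
Interleaving the configurations $\clsd'_t$ with these bounded-space bridging derivations and erasing eagerly then yields a homogeneous refutation of $\formorig$ of width at most $\widthstd$ and space at most $2^{\widthstd}\clspaceof{\refofsubst} + \widthstd + 3$, which is the claimed bound (and which subsumes \reflem{lem:HardnessCondensingSpace} after choosing the original formula as in \refthm{thm:formulas_small_width_large_clspace}).

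The hard part will be justifying the bridging derivations, i.e.\ showing that $\mathrm{proj}(D)$ is semantically implied by the projections of the premises (respectively by $\clc$, in the download case) together with the local substituted axioms, and that the whole simulation stays within width $\widthstd$. The obstacle is that the closure operator $\closure$ need not be monotone, so $\auxkerset(D)$ and $\auxkerset(D\lor\varx)$ may be genuinely different sets of original variables; the argument must exploit the $(\expansionguarantee,2)$-boundary expansion of $\expandergraph$ — through the peeling lemma (\reflem{lem:peeling_lemma}) and \reflem{lem:ClosedSet}, and a suitably coherent choice of the closures — to show that inside these closures the \XOR-substitution is injective enough on the relevant sub-assignments that an implication between clauses over $\rightvertexset$ descends to an implication between the corresponding clauses over $\leftvertexset$. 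Once that is established, the fact that only $\bigoh{\widthstd}$ original variables are ever involved delivers the required width-$\widthstd$ homogeneous derivations, and the space bookkeeping described above finishes the proof.
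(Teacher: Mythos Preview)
Your high-level plan matches the paper: project each clause $C$ of $\refofsubst$ to a set of width-$\le\widthstd$ clauses over $\Ker(\closure(C))$, replace each configuration by the union of projections, and bridge consecutive projected configurations by short homogeneous derivations using $\widthstd+3$ extra space. You also correctly identify the non-monotonicity of $\closure$ as the obstacle and the peeling lemma as the remedy. Where you diverge is in the \emph{definition} of the projection, and this divergence creates real complications.

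The paper's projection is simpler than yours: $\clauseset(C)$ is the set of full-support clauses $\cltrans$ over $\Ker(\closure(C))$ such that $\cltrans[\expandergraph]$ and $C$ are \emph{simultaneously falsifiable}---no reference to axioms of $\formorig$ at all. This choice makes the axiom-download simulation immediate: if $C$ is a clause of $A[\expandergraph]$ for some $A\in\formorig$, then any assignment falsifying both $C$ and $\cltrans[\expandergraph]$ in particular falsifies $A[\expandergraph]$, so its $\sigma_{\expandergraph}$-image falsifies $A$; since $\cltrans$ has full support on a superset of $\variables(A)$, this forces $A\subseteq\cltrans$. Hence every clause in $\clauseset(C)$ is a weakening of the \emph{single} axiom $A$, and the download is simulated by one download plus weakenings. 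Your definition folds all local substituted axioms into the semantic condition, so $\mathrm{proj}(C)$ for an axiom $C$ may contain clauses that are consequences of \emph{several} local axioms at once; you would then have to download all of them and actually derive those consequences, which is extra work the paper avoids.

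The same simplification helps in the inference bridging. With the paper's definition, one shows that $\clauseset(C)$ follows from $\clauseset(C\lor x)\cup\clauseset(C\lor\olnot{x})$ (and similarly from $\clauseset(B)$ for $B\subseteq C$) by a homogeneous derivation of depth at most $\widthstd+1$ \emph{using no axiom downloads}. Concretely, one peels the vertices of $\Ker(\closure(C\lor x))\setminus\Ker(\closure(C))$ in the order given by \reflem{lem:peeling_lemma} applied inside the expander $\expandersubgraph{\expandergraph}{\closure(C)}$; at each layer the fresh unique neighbour outside $\closure(C)$ lets one toggle the parity of $u_i[\expandergraph]$ without touching $\variables(C)$, so for every $\cltrans$ in the current layer both $\cltrans\lor u_i$ and $\cltrans\lor\olnot{u}_i$ are present in the previous layer and one homogeneous resolution step suffices. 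Your plan instead invokes ``local substituted axioms'' in the resolution bridging, but those live over the $\rightvertexset$-variables and are not available in a refutation of~$\formorig$; if you meant the original local axioms of~$\formorig$, you are again doing more than necessary. Dropping the axioms from your projection and using the simultaneous-falsifiability formulation makes the sketch go through exactly along the lines you outline.
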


\begin{proof}
  Assume that 
  $\refofsubst=(\clsc_0,\clsc_1,\ldots,\clsc_{\maxclscindex})$ 
  is a configuration-style \homogeneous resolution refutation
  of~$\formsubst$ in width  
  $\widthofarg{\refofsubst} = \widthstd \leq \expansionguarantee/2$. 
  We will show how to transform~$\refofsubst$
  into a refutation~$\refoforig$ of the original formula~$\formorig$ in
  width and space as claimed in the lemma.
  To help the reader navigate the proof, we remark that in what follows
  we will use the notational conventions that
  $\clother$   and~$\clc$ 
  denote clauses over $\variables(\formsubst)$,
  $\cltrans$ 
  denotes
  a clause
  over~$\variables(\formorig)$,
  and     $\claxiom$ 
  denotes
  an axiom clause 
  from
  the original formula~$\formorig$
  before \xorification.

  Recall that for clauses
  $\clc \in \formsubst$
  we have
  $\variables(\clc)\subseteq \rightvertexset$
  by construction. For convenience, we will
  overload notation and write $\Ker(\clc) = \Ker(\variables(\clc))$, 
  which is a subset of the variables~$\leftvertexset$ of the original
  formula~$\formorig$. 
  Furthermore, for every clause
  $\clc\in\refofsubst$
  we fix 
  $\closure(\clc) \defi
  \closure(\variables(\clc)) \subseteq \rightvertexset$ 
  to be a minimal closure with properties as guaranteed by
  \reflem{lem:ClosedSet},
  \ie such that 
  $\Setsize{\Ker \bigl( \closure \bigl( \rightvertexsubset \bigr)\bigr)}
  \leq \setsize{\rightvertexsubset}$
  and the induced subgraph
  $\expandersubgraph{\expandergraph}{\closure(\rightvertexsubset)}$ is
  an
  \boundaryexpnodeg{\expguarantee/2}{1}.
  Note that
  such closures exist since all clauses
  $\clc\in\refofsubst$ have width at most~$\widthstd$.
  It might be worth pointing out, though, that this is a purely
  existential statement---we have no control over how these closures
  are constructed, and, in particular,
  for two clauses
  $\clb$ and~$\clc$ such that
  $\clb \subseteq \clc$
  it does not necessarily hold that
  $\closure(\clb) \subseteq \closure(\clc)$.

  An important notion in what follows will be that of
  \introduceterm{\simfalsesubst{}}, where we say that two CNF formulas
  $\fstd$ and~$\falt$ are
  \introduceterm{\simfalse{}}
  if there is a truth value assignment that at the same time falsifies
  both $\fstd$ and~$\falt$.
  To transform the resolution refutation $\refofsubst$ of $\formsubst$ into a
  refutation $\refoforig$ of~$\formorig$ we let $\clsd_{\confindex}$ be
  obtained from~$\clsc_\confindex$ by  
  replacing every clause $\clc\in \clsc_\confindex$ by the 
  set of clauses
  \begin{align}
    \label{eq:def-F-clauses}
    \clauseset(\clc) 
    &\defi                   
      \setdescr
      {\cltrans\!}
      {\!\variables(\cltrans) = \Ker(\closure(\clc)) ;
      \simfalseargsmathtext{\cltranssubst}{\clc}}
    \\
    \shortintertext{and defining}
    \label{eq:def-D-clauses}
    \clsd_{\confindex} &\defi \textstyle\Union_{\clc\in\clsc_\confindex}       
                         \clauseset(\clc)
  \end{align}
  (where the notation 
  $\clauseset(\clc)$ 
  is chosen to suggest that this is in some intuitive sense the
  ``inverse  operation'' of \xorification \wrt~$\expandergraph$).
  Every clause in $\cltrans\in\clauseset(\clc)$ has width at
  most~$\widthstd$, because
  \begin{equation}
    \label{eq:vars-D-bound}
    \setsize{\variables(\cltrans)} 
    = 
    \setsize{\Ker(\closure(\clc))} 
    \leq 
    \widthofarg{\clc} 
    \leq 
    \widthstd    
    \eqcomma
  \end{equation}
  where the first inequality is guaranteed by \reflem{lem:ClosedSet}
  and the second inequality is by assumption.
  Furthermore, we have 
  $\setsize{\clauseset(\clc)}\leq 2^{\widthstd}$,
  since all clauses in
  $\clauseset(\clc)$
  are over the same set of variables 
  and each variable appears positively or negatively in every clause, 
  and hence 
  \begin{equation}
    \label{eq:size-clsd-bound}
    \Setsize{\clsd_{\confindex}}
    \leq
    2^{\widthstd} \Setsize{\clsc_{\confindex}}
    \leq
    2^{\widthstd} \clspaceof{\refofsubst}
    \eqperiod
  \end{equation}
  We want to argue that the sequence 
  $\bigl( \clsd_{0},\clsd_{1},\ldots,\clsd_{\maxclscindex} \bigr)$ 
  is the ``backbone'' of a resolution refutation~$\refoforig$ of~$\formorig$,
  by which we mean that for every $\confindex$ it holds that
  $\clsd_{\confindex+1}$ can be derived from~$\clsd_{\confindex}$ 
  by a sequence of intermediate steps without affecting any proof
  complexity measure too much. 

  To make this claim formal, 
  we first observe that for 
  $\clsc_{0} = \emptyset$
  we obviously get
  $\clsd_{0} = \emptyset$ 
  by~\refeq{eq:def-D-clauses}.
  Moreover, it holds that  
  $\clauseset(\emptycl)=\{\emptycl\}$ 
  and hence
  $\emptycl\in\clsd_{\maxclscindex}$,
  since the unique minimal closure of the empty set is the empty set itself.
  We want to show that for every
  \mbox{$0\leq\confindex<\maxclscindex$} 
  the configuration~$\clsd_{\confindex+1}$ can be obtained
  from~$\clsd_{\confindex}$ by a 
  resolution derivation
  \mbox{$(\clsd_{\confindex} = 
    \clsd_{\confindex,0},
    \clsd_{\confindex,1},
    \clsd_{\confindex,2},
    \ldots,
    \clsd_{\confindex, j_{\confindex} - 1},
    \clsd_{\confindex, j_{\confindex}}
    = \clsd_{\confindex+1})$,} 
  where the space of every intermediate configuration is bounded by
  $\maxofexpr{\clspaceof{\clsd_{\confindex}},
    \clspaceof{\clsd_{\confindex+1}}} + \widthstd + 3$. 

  If $\clsc_{\confindex+1}$ is obtained from $\clsc_{\confindex}$ by
  erasing a clause $\clc$, then $\clsd_{\confindex+1}$ can be obtained from
  $\clsd_{\confindex}$ by erasing all clauses
  $\clauseset(\clc)\setminus\clsd_{\confindex+1}$.  
  Suppose that $\clsc_{\confindex+1}$ is obtained from
$\clsc_{\confindex}$ by downloading an axiom $\clc\in \formsubst$.  
We
claim
that every clause in $\clauseset(\clc)$ is either an axiom or
a weakening of an axiom from $\formorig$.  
By the definition of $\formsubst$, every axiom $\clc\in \formsubst$ is
a clause in the CNF formula~$\claxiomsubst$ for some original axiom
$\claxiom\in\formorig$.  
Fix any axiom
$\claxiom\in\formorig$
such that
$\clc \in \claxiom\substituted$. 
Then for all
$\cltrans\in\clauseset(\clc)$ it holds
by~\refeq{eq:def-F-clauses} that
$\variables(\cltrans)=\Ker(\closure(\clc))\supseteq \Ker(\clc)\supseteq
\variables(\claxiom)$
and that there is an assignment falsifying both
$\cltranssubst$ and~$\clc$.
To see that this implies that 
$\claxiom$ subsumes~$\cltrans$, suppose that there is a
variable~$\varx$
appearing positively in $\claxiom$ such that
$\olnot{\varx} \in \cltrans$.
Any truth value assignment falsifying
$\cltranssubst$ 
must falsify 
$\lita\substituted$
for all literals $\lita \in \cltrans$, and hence in particular
$\olnot{\varx}\substituted$.
This means that 
${\varx}\substituted$
is satisfied by the same assignment,
and then so is all of the formula
$\claxiom\substituted$
including~$\clc$. But this is a contradiction
to the \simfalsesubst
of~$\cltranssubst$  and~$\clc$, and so not only does it hold that
$\variables(\claxiom) \subseteq \variables(\cltrans)$
but $\claxiom$ is in fact a subclause of~$\cltrans$ as claimed.
From this we see that  we can add the clauses $\clauseset(\clc)$ to
$\clsd_{\confindex}$ using axiom download and weakening.  
After applying a weakening step we immediately delete the old clause. 
Hence, the additional weakening might increase the space by at most one.
It follows that the space of the intermediate configurations need
never exceed 
$\clspaceof{\clsd_{\confindex+1}}+1$.   

It remains to 
argue
that $\clsd_{\confindex+1}$ can be derived
from~$\clsd_{\confindex}$ when $\clsc_{\confindex+1}$ is obtained
from~$\clsc_{\confindex}$ by an inference step.
\ifthenelse{\boolean{conferenceversion}}
{This is stated in the following two claims regarding applications of
  the resolution and weakening rules. Here graph expansion comes
  heavily into play, but due to space constraints we have to defer the
  proofs to the full-length version of this paper.} 
{This is stated in the following two claims regarding applications of
  the resolution and weakening rules.}

\begin{claim}
  \label{claim:resolutionstep} 
  Every clause $\cltrans\in \clauseset(\clc)$ can be derived from
  $\clauseset(\clc\lor \varx)\cup \clauseset(\clc\lor \stdnot{\varx})$
  by a \homogeneous resolution derivation of width $\widthstd$ and depth
  $\widthstd+1$. 
\end{claim}

\begin{claim}
  \label{claim:weakening}
  For any two clauses 
  $\clother$ and~$\clc$
  with 
  $\clother \subseteq \clc$ 
  it holds that
  every clause $\cltrans\in \clauseset(\clc)$ can be derived
  from~$\clauseset(\clother)$  by a \homogeneous
  derivation of width $\widthstd$ and depth $\widthstd+1$.   
\end{claim}

\ifthenelse{\boolean{conferenceversion}}
{}
{Taking these two claims on faith for now, let us see how they allow
  us  to  conclude the proof   of the lemma.}
Since
the depth of a refutation 
provides
an upper bound on the clause
space by \refobs{obs:depthlargerthanclausespace}, it follows
that in both cases we can derive 
all clauses in the clause set~$\clauseset(\clc)$  one by one
by using additional space $\widthstd+3$ to perform
the derivations in  depth~$\widthstd+1$.  
This shows 
that $\formorig$ has a \homogeneous resolution
refutation~$\refoforig$ of width $\widthstd$ and clause space 
$\clspaceof{\refoforig} \leq
2^{\widthstd}\clspaceof{\refofsubst}+\widthstd+3$,
which establishes the lemma.
\end{proof}

\newcommand{\graphclc}{\expandergraph_{\clc}}

\ifthenelse{\boolean{conferenceversion}}
{}
{%

  We proceed to establish
  Claims~\ref{claim:resolutionstep} and~\ref{claim:weakening}.

  \begin{proof}[Proof of Claim~\ref{claim:resolutionstep}]
    Recall that by \reflem{lem:ClosedSet}
    the subgraph
    $
    \graphclc
    \defi
    \expandersubgraph{\expandergraph}{\closure(\clc)}
    $ 
    is an
    \boundaryexpnodeg{\expansionguarantee/2}{1} and
    that for 
    $
    \Ker(\closure(\clc\lor \varx)) =
    \Ker(\closure(\clc\lor \olnot{\varx})) 
    $
    we have
    $\setsize{\Ker(\closure(\clc\lor \varx))} 
    \leq
    \widthofarg{\clc\lor \varx} 
    \leq
    \widthstd 
    \leq 
    \expansionguarantee/2$. 
    Therefore, we can apply 
    \reflem{lem:peeling_lemma} 
    to the set 
    $
    \auxkerset = 
    \Ker(\closure(\clc\lor \varx)) \setminus \Ker(\closure(\clc))$
    to obtain an ordering
    $u_1,\ldots,u_\ell$ 
    of~$\auxkerset$
    satisfying
    $\nbhd^{\graphclc} \! (u_i)\setminus
    \nbhd^{\graphclc} \! (\set{u_1,\ldots,u_{i-1}}) \neq \emptyset$. 
    For $0\leq i\leq\ell$ we let
  \begin{align}
    \auxkersetstar{i} 
    &\defi
      \big(\Ker(\closure(\clc))\cap\Ker(\closure(\clc\lor
      \varx))\big)\union \setdescr{u_j}{1\leq j \leq i}
    \\
    \intertext{%
    so that
    $\auxkersetstar{\ell} = \Ker(\closure(\clc \lor \varx))$
    and
    $\auxkersetstar{0} \subseteq \Ker(\closure(\clc))$, 
    and define}
    \label{eq:def-F-i}
    \clsetwithindex[i] 
    &\defi
      \setdescr{\cltrans}{\variables(\cltrans)=\auxkersetstar{i} ;
      {\simfalseargsmathtext{\cltranssubst}{\clc}}} 
      \eqperiod
  \end{align}
  Observe that
  \begin{align}
    \nonumber
    &
      \ \ \ \ \ \ \,
      \clauseset(\clc\lor\varx) \cup
      \clauseset(\clc\lor\stdnot{\varx}) 
    \\
    \nonumber
    &= 
      \ \ \ \ \ 
    \setdescr{\cltrans}{\variables(\cltrans)=\Ker(\closure(\clc\lor\varx));
      \text{\simfalseargs{\cltranssubst}{\clc\lor\varx}}} 
    \\ 
    &
      \ \ \ \ \, \,\,
      \cup 
    \setdescr{\cltrans}{
      \variables(\cltrans) = \Ker(\closure(\clc\lor\stdnot{\varx}));
    \text{\simfalseargs{\cltranssubst}{\clc\lor\stdnot{\varx}}}}
    \\ 
    \nonumber
    &=\; 
    \setdescr{\cltrans}
      {\variables(\cltrans)=\Ker(\closure(\clc\lor\varx));
      {\simfalseargsmathtext{\cltranssubst}{\clc}}} 
    \\ 
    \nonumber
    &=\; 
    \clsetwithindex[\ell]
  \end{align}
  and that every clause in $\clauseset(\clc)$ is   subsumed by
  a clause in $\clsetwithindex[0]$   since
  $
  \auxkersetstar{0} 
  \subseteq
  \Ker(\closure(\clc))
  $.
  Thus, we are done if we can derive all  clauses
  in~$\clsetwithindex[0]$ from the clauses in~$\clsetwithindex[\ell]$. 

  We
  do so inductively: for 
  $i=\ell,\ell-1, \ldots,2,1$ 
  we can obtain any clause
  $\cltrans\in\clsetwithindex[i-1]$ by an application of the 
  \homogeneous resolution rule to the clauses
  $\cltrans\lor u_{i}$ and $\cltrans\lor\olnot{u}_{i}$, 
  which we claim are both available in~$\clsetwithindex[i]$. 
  What remains to show is that $\cltrans\in\clsetwithindex[i-1]$
  indeed   implies  that 
  $
  \Set{\cltrans\lor u_{i}, \cltrans\lor\olnot{u}_{i}}
  \subseteq
  \clsetwithindex[i]$.  
  To argue this, note that
  by the definition of~$\clsetwithindex[i-1]$   in~\refeq{eq:def-F-i}
  there is a (partial) truth value assignment
  $\tvastd$ that simultaneously falsifies $\cltrans\substituted$ and~$\clc$. 
  The peeling lemma guarantees that 
  $
  \nbhd^{\graphclc} \! (u_i)\setminus
  \variables(\cltrans\substituted)
  =
  \nbhd^{\graphclc} \! (u_i)\setminus
  \nbhd^{\expandergraph}  \bigl( \auxkersetstar{i-1} \bigr)
  $ 
  has a non-empty intersection with
  $\rightvertexset\setminus\closure(C)$,
  the right-hand side of the expander~%
  $\graphclc$.
  Hence, we can extend $\tvastd$ and set the variables in
  $
  \nbhd^{\graphclc}\!(u_i) \setminus 
  \bigl( \variables(\cltrans\substituted) \cup \variables(C) \bigr)
  \supseteq
  \nbhd^{\graphclc}\! \bigl( \auxkersetstar{i} \bigr) 
  \setminus
  \nbhd^{\graphclc}\! \bigl( \auxkersetstar{i-1} \bigr) 
  \neq \emptyset
  $ 
  to 
  appropriate values
  so that the parity
  $\bigoplus_{v\in \nbhd(u_i)}\tvastd(v)$ 
  is even and thus 
  $(\cltrans\lor u_i)\substituted =
  \cltrans \substituted \lor u_i\substituted$ 
  is falsified,
  and we do so without assigning any variables in~$\clc$, which therefore
  remains falsified.
  In an analogous fashion, 
  by instead ensuring that the parity 
  $\bigoplus_{v\in \nbhd(u_i)}\tvastd(v)$ 
  is odd we get a
  falsifying assignment for 
  \mbox{$(\cltrans\lor\olnot{u}_i)\substituted \lor \clc$}.
  Hence, by~\refeq{eq:def-F-i} it holds that
  $\cltrans\lor u_{i}$ 
  and~$\cltrans\lor\olnot{u}_{i}$
  both  appear in~$\clsetwithindex[i]$. 

  Finally, to get from   $\clsetwithindex[0]$ to~$\clauseset(\clc)$ 
  we might need an extra weakening step as observed above. 
  The total depth of the whole derivation is at most $\ell+1 \leq \widthstd+1$. 
\end{proof}

\begin{proof}[Proof of Claim~\ref{claim:weakening}] 
  Note that if
  $\Ker(\closure(\clother)) \subseteq  \Ker(\closure(\clc))$ 
  this claim would be easy to establish, but as noted above we have no
  guarantee that  this is the case. Instead, we apply a proof strategy
  similar to the one for the previous claim.  
  We again have that
  $
  \graphclc \defi
  \expandersubgraph{\expandergraph}{\closure(\clc)}
  $ 
  is an \boundaryexpnodeg{\expansionguarantee/2}{1}, 
  so that we can apply the peeling lemma to the left-hand vertex set 
  $\Ker(\closure(\clother)) \setminus  \Ker(\closure(\clc))$ 
  to obtain an ordering $u_1,\ldots,u_\ell$ of
  its vertices satisfying 
  $\nbhd^{\graphclc} \! (u_i) \setminus
  \nbhd^{\graphclc} \! (\{u_1,\ldots,u_{i-1}\})
  \neq \emptyset$. 
  For $0\leq i\leq\ell$ we let 
  \begin{align}
     \auxkersetstar{i} 
    &\defi 
      \big(\Ker(\closure(\clc))\cap\Ker(\closure(\clother))\big)\cup
      \setdescr{u_j}{1\leq j \leq i} 
    \\
    \intertext{and as before define}
    \clsetwithindex[i] 
    &\defi
      \setdescr{\cltrans}{\variables(\cltrans)=\auxkersetstar{i} ;
      {\simfalseargsmathtext{\cltranssubst}{\clc}}} 
      \eqperiod
  \end{align}
  Note that $\clsetwithindex[\ell] \subseteq \clauseset(\clother)$, because if
  \simfalseargs{\cltranssubst}{\clc}, then
  \simfalseargscertainly{\cltranssubst}{\clother \subseteq \clc}. 
  Hence, we can obtain 
  $\clsetwithindex[\ell]$ from~$\clauseset(\clother)$ by just erasing clauses. 
  Once more, we apply the peeling argument
  in an inductive fashion
  and derive any 
  $\cltrans\in\clsetwithindex[i-1]$ from $\cltrans\lor u_{i}$
  and~$\cltrans\lor\olnot{u}_{i}$ 
  appearing
  in~$\clsetwithindex[i]$. In the end, we can
  infer
  any clause in~$\clauseset(\clc)$
  from~$\clsetwithindex[0]$ because every clause in~$\clauseset(\clc)$
  can be seen to be a weakening of some clause in~$\clsetwithindex[0]$.   
\end{proof}
}%

We can now combine the construction in
\reflem{lem:HardnessCondensingLemma} with the 
existence of good boundary expanders in \reflem{lem:newexpanderexist}
to prove the 
hardness condensation in \reflem{lem:StrongHardnessCondensingSpace}.

\begin{proof}[Proof of \reflem{lem:StrongHardnessCondensingSpace}]
  Given $\smallepsilon>0$ and  
  $\initialcnfwidth\in\Nplus$ we choose 
  $\smalldelta \defi
  \frac{\smallepsilon}{10\initialcnfwidth}$.  
  Note that we can assume
  $\smallepsilon \leq 1/2$ since otherwise the lemma is vacuous.
  Suppose
  $\widthlower$ and~$\newnvars$ 
  are parameters such that  
  $\initialcnfwidth\leq\widthlower \leq
  \newnvars^{\frac{1}{2}-\smallepsilon}$ 
  and 
  let $\formf$ be an unsatisfiable 
  CNF formula over
  $\orignvars = \lfloor\newnvars^{\delta\widthlower}\rfloor$
  variables that can be refuted in width $\initialcnfwidth$. 
  To apply \reflem{lem:newexpanderexist} we set
  $\mindegree \defi \frac{5}{\smallepsilon} > 2$ and  
  verify that
  $\smalldelta + \frac1\mindegree = 
  \frac{\smallepsilon}{10\initialcnfwidth} + \frac{\smallepsilon}{5} < 
  \frac{\smallepsilon}{2}$.
  We choose the degree of the expander to be 
  $\expanderdegree \defi
  \Floor{\frac{\widthlower}{2\initialcnfwidth}}$ 
  and set the size guarantee for expanding left vertex sets to 
  $\expansionguarantee \defi 2\widthlower\log \newnvars$. 
  By the bound on $\widthlower$ we have  $\expanderdegree \leq
  \widthlower \leq
  \newnvars^{\frac{1}{2}-\smallepsilon}$. Furthermore, we choose
  $\newnvarszero$ large enough so that
  $
  \expansionguarantee 
  \leq
  2 \newnvars^{\frac{1}{2}-\smallepsilon} \log \newnvars
  \leq 
  {\newnvars}^{\frac{1}{2}}
  $ 
  for 
  \mbox{all $\newnvars\geq\newnvarszero$.}

  Now we have two cases. 
  The first, and interesting, case is when
  $\expanderdegree \geq \mindegree$ holds. Then
  \reflem{lem:newexpanderexist} guarantees that there exists an  
  \nmboundaryexp{\orignvars}{\newnvars}{\expanderdegree}{\expansionguarantee}{2}
  $\expandergraph$.
  Applying \xorification \wrt~$\expandergraph$, we obtain 
  a CNF formula~$\formf\substituted$
  with $\indexnalt$~variables.  
  By \refobs{obs:width-xorified} it holds that
  $\formf\substituted$ has a resolution refutation of width
  $2\expanderdegree\initialcnfwidth\leq\widthlower$. 
  Now suppose that $\refof{\proofstd}{\formf\substituted}$ 
  is a refutation of width $\widthmax$.  Because 
  $\widthmax\leq \widthlower \log\newnvars 
  = \expansionguarantee / 2$ 
  the space lower bound 
  follows from
  \reflem{lem:HardnessCondensingLemma}.  

  The second case is when
  $\expanderdegree < \mindegree$. 
  Then we do not actually need any \xorification but can use the
  original formula. Formally,
  let
  $\expandergraph = 
  (\leftvertexset \disjointunion
  (\rightvertexset\cup\rightvertexset'),E)$ 
  be a matching between two sets $\leftvertexset$ and~$\rightvertexset$ 
  of size
  $\setsize{\leftvertexset}=\setsize{\rightvertexset}=\orignvars$ 
  plus some isolated vertices~$\rightvertexset'$ 
  on the right-hand side such that 
  $\setsize{\rightvertexset\cup\rightvertexset'} = \newnvars$. 
  To check that this is well defined we have to verify that
  $\orignvars\leq \newnvars$,
  which follows from the calculations
  $\orignvars =
  \lfloor\newnvars^{\delta\widthlower}\rfloor = 
  \lfloor\newnvars^{\frac{\smallepsilon}{10\initialcnfwidth}2\initialcnfwidth\expanderdegree}\rfloor 
  \leq 
  \lfloor\newnvars^{\frac{\smallepsilon}{10\initialcnfwidth}2\initialcnfwidth\mindegree}\rfloor
  = 
  \lfloor\newnvars^{\frac{\smallepsilon}{10\initialcnfwidth}2\initialcnfwidth\frac{5}{\smallepsilon}}\rfloor
  = 
  \newnvars
  $.
  In this somewhat convoluted way
  we obtain
  $\formf\substituted=\formf$ (plus some left-over dummy variables)
  and we have 
  $\mbox{$\widthref{\formf\substituted}$} = \widthref{\formf} =
  \initialcnfwidth\leq\widthlower$ as well as $\clspaceof{\proofstd}
  \geq \clspacesym \geq 
  {(\clspacesym-\widthstd-3)}2^{-\widthstd}$.
  The lemma follows.
\end{proof}

\makeatletter{}%
\section{Concluding Remarks}
\label{sec:conclusion}

In this paper we prove that there are CNF formulas over
$\nvars$~variables exhibiting 
an $n^{\bigomega{\widthstd}}$~clause space lower bound for resolution
refutations in width~$\widthstd$. 
This lower bound is optimal (up to constants in the exponent) as every
refutation in width~$\widthstd$ has length, and hence space, at
most $\nvars^{\bigoh{\widthstd}}$.  Our lower bounds do not only hold for
the minimal refutation width~$\widthstd$ but remain valid for
any refutations in 
width asymptotically smaller than $\widthstd\log \nvars$.
Measured in terms of the number of variables~$\nvars$, this is a major
improvement over the previous space-width trade-off result
in~\cite{Ben-Sasson09SizeSpaceTradeoffs}, and provides another example
of trade-offs in the supercritical regime above worst-case recently
identified in~\cite{Razborov16NewKind}.

Regarding possible future research directions, 
a first open problem is whether the range of applicability can be
extended even further so that the space lower bound holds true up to
width $\littleoh{n}$.  It is clear that the lower bound has to break
down at some point, 
since if one is allowed maximal width~$\nvars$ any formula can be
refuted in clause space~$\nvars + 2$ \cite{ET01SpaceBounds}.
A supercritical trade-off on resolution proof depth over width ranging from
$\widthstd$ all the way up to 
$n^{1-\epsilon}/\widthstd$ 
was  shown  in~\cite{Razborov16NewKind}, suggesting that the above
goal might not be completely out of reach.

Another intriguing open problem 
is to prove space trade-offs that are superlinear not only in
terms of the number of variables but measured 
also
in formula size.  Such
lower bounds cannot be obtained by the techniques used in this paper, 
but they are likely to exist as the
following argument shows (see \cite{AlexHertel08Thesis} for a more detailed
discussion).  Suppose that every refutation in width~$\widthstd(n)$
can be transformed into a refutation that has width~$\widthstd(n)$ and
clause space polynomial in the size of the formula.  
Then we can find such a refutation non-deterministically in polynomial space by keeping the current configuration in memory  and guessing the inference steps. 
Thus,
by Savitch’s theorem, finding refutations of width~$\widthstd(n)$ would be in deterministic \PSPACE. 
On the other hand, it has been shown by
\theauthorCB that the problem of finding resolution refutations of
bounded width is \EXPTIME-complete
\cite{Berkholz12ComplexityNarrowProofs}.  Hence, unless
$\EXPTIME=\PSPACE$ there are formulas
where every refutation of minimal width needs clause space that is
superpolynomial in the size of the formula.

Finally, it would be interesting to study if the supercritical
trade-offs between clause space and width in resolution shown in this
paper could be extended to similar trade-offs between monomial space
and degree for polynomial calculus or polynomial 
calculus
resolution
as defined in 
\cite{ABRW02SpaceComplexity,CEI96Groebner}.

\makeatletter{}%

\ifthenelse{\boolean{conferenceversion}}
{\section*{Acknowledgements}}
{\section*{Acknowledgements}}

We wish to thank Alexander~Razborov for patiently explaining
the  hardness condensation technique in~\cite{Razborov16NewKind}
during numerous and detailed discussions.

Part of the work of \theauthorCB was performed while at KTH Royal
Institute of Technology supported by a fellowship within the
Postdoc-Programme of the German Academic Exchange Service (DAAD).
The research of \theauthorJN was supported by the
European Research Council under the European Union's Seventh Framework
Programme \mbox{(FP7/2007--2013) /} ERC grant agreement no.~279611
and by
Swedish Research Council grants 
\mbox{621-2010-4797}
and
\mbox{621-2012-5645}.

\bibliography{refArticles,refBooks,refOther}

\bibliographystyle{alpha}

\appendix

\makeatletter{}%

\newcommand{\leftvertexsubsetsize}{\ell}
\newcommand{\eulernumber}{e}

\theoremstyle{plain}    

\newtheorem*{lem:expanderexists}{Lemma~\ref{lem:newexpanderexist} (restated)}
\newtheorem*{lem:closedset}{Lemma~\ref{lem:ClosedSet} (restated)}

\section{Appendix} 
\label{app:appendix}

In this appendix we give proofs for
\reftwolems{lem:ClosedSet}{lem:newexpanderexist}.
As already mentioned, most of this material appears in a similar form
in~\cite{Razborov16NewKind} (although the exact
parameters are slightly different), and there is also a substantial
overlap with analogous technical lemmas in~\cite{BN16QuantifierDepthECCC}.
In fact, \reflem{lem:ClosedSet} is exactly as stated in 
in~\cite{BN16QuantifierDepthECCC}, but we present a proof below to give
a self-contained exposition of our version of Razborov's hardness
condensation technique adapted to general resolution.

\begin{lem:closedset}
  Let $\expandergraph$ be a bipartite
  \boundaryexpnodeg{\expguarantee}{2}.
  Then for every right vertex set 
  $\rightvertexsubset \subseteq \rightvertexset$ 
  of size
  $\setsize{\rightvertexsubset}\leq \expansionguarantee/2$ 
  there exists a superset
  $\closure(\rightvertexsubset) \supseteq \rightvertexsubset$
  such that 
  $\Setsize{\Ker \bigl( \closure \bigl( \rightvertexsubset \bigr)\bigr)}
  \leq \Setsize{\rightvertexsubset}$
  and the induced subgraph
  $\expandersubgraph{\expandergraph}{\closure(\rightvertexsubset)}$ is
  an
  \boundaryexpnodeg{\expguarantee/2}{1}.
\end{lem:closedset}

\newcommand{\vset}{V}
\newcommand{\uset}{U}
\newcommand{\utilde}{\leftvertexsubset}
\newcommand{\ubar}{\overline{U}}
\renewcommand{\ubar}{U^*}
\newcommand{\finindex}{\tau}

\begin{proof}
  With assumptions as in the lemma,   let 
  $\expandergraph=(\leftvertexset \disjointunion \rightvertexset,E)$ 
  be an 
  \boundaryexpnodeg{\expguarantee}{2} and let
  $\rightvertexsubset\subseteq\rightvertexset$
  be a right vertex set of size
  \mbox{$\setsize{\rightvertexsubset}\leq \expansionguarantee/2$}. 
  We will construct an increasing sequence of right vertex sets
  $\rightvertexsubset=\vset_0\subset \vset_1 \subset \cdots \subset
  \vset_\finindex$ 
  such that for
  $\closure(\rightvertexsubset) = \vset_\finindex$
  it holds that
  $\expandersubgraph{\expandergraph}{\vset_\finindex}$ is an
  \boundaryexpnodeg{\expguarantee/2}{1}.

  If
  $\expandersubgraph{\expandergraph}{\vset_0}$ 
  is an
  \boundaryexpnodeg{\expguarantee/2}{1}, then we can stop right away,
  but otherwise there must exist a left vertex set
  $\uset_1$ of size 
  at most $\expguarantee/2$ 
  such that
  $\Setsize{\boundary^{\expandersubgraph{\expandergraph}{\vset_0}}(\uset_1)}
  \leq \setsize{\uset_1}$.  
  Delete
  $\uset_1$ and all its neighbours
  from~$\expandersubgraph{\expandergraph}{\vset_0}$.  
  If now the resulting graph is an
  \boundaryexpnodeg{\expguarantee/2}{1}, then we are done, 
  but otherwise we repeat this process and
  iteratively delete vertex sets that violate the expansion requirements.
  Formally, for $i\geq 1$ we let $\uset_i$ be any left vertex set of size 
  at most $\expansionguarantee/2$
  such that  
  $
  \Setsize{\boundary^{\expandersubgraph{\expandergraph}{\vset_{i-1}}}(\uset_i)}
  \leq
  \setsize{\uset_i}
  $,
  where we set 
  $
  \vset_{i} \defi \vset_0
  \cup
  \bigcup_{j=1}^{i} \nbhd^{\expandergraph}(\uset_j)
  $
  (and where we note that
  what is deleted at the
  \mbox{$i$th step} is 
  $\nbhd^{\expandersubgraph{\expandergraph}{\vset_{i-1}}}(\uset_i)$
  together with the  kernel~$
  \ker \bigl( 
  \nbhd^{\expandersubgraph{\expandergraph}{\vset_{i-1}}}(\uset_i)
  \bigl)
  $
  of this right vertex set,
  so that after the  \mbox{$i$th step} all of
  $\nbhd^{\expandergraph}(\uset_i)$ 
  and~$\ker(\nbhd^{\expandergraph}(\uset_i))$ has been removed from
  the graph).

  Since all sets~$\uset_i$ constructed above are non-empty,
  this process must terminate  for some
  $i=\finindex$ and the resulting graph
  $\expandersubgraph{\expandergraph}{\vset_\finindex}$ is then an
  \boundaryexpnodeg{\expguarantee/2}{1}
  (if nothing else, an empty graph without vertices vacuously satisfies the
  expansion  condition). However, we need to check that the condition
  $\setsize{\Ker(\vset_\finindex)}\leq \setsize{\vset_0}$
  holds.
  This follows from the next claim.
  
  \begin{claim}
    \label{claim:closure}
    Let 
    $\vset_{-1} = \uset_0 = \emptyset$ and suppose that $i\geq0$. 
    Then for $\uset_i$ and $\vset_i$ as constructed above
    we have the following properties:
    \begin{enumerate}
    \item
      \label{item:proofclosure1} 
      For all $\leftvertexsubset$ such that $\Ker(\vset_{i-1})\cup
      \uset_{i}\subseteq \leftvertexsubset \subseteq \Ker(\vset_i)$ 
      it holds that
      $\Setsize{\boundary^{\expandergraph}(\leftvertexsubset)\setminus\vset_0}
      \leq
      \setsize{\Ker(\vset_i)}$. 
    \item
      \label{item:proofclosure2} 
      The kernel of~$\vset_i$ has size
      $\setsize{\Ker(\vset_i)}\leq \setsize{\vset_0}$.
    \end{enumerate} 
  \end{claim}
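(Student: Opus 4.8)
The plan is to prove Claim~\ref{claim:closure} by induction on $i$, with the following internal flow: item~\ref{item:proofclosure1} for index $i$ will follow from item~\ref{item:proofclosure1} for index $i-1$, and item~\ref{item:proofclosure2} for index $i$ will then follow from item~\ref{item:proofclosure1} for index $i$ together with item~\ref{item:proofclosure2} for index $i-1$. Throughout write $K_i = \Ker(\vset_i)$ and $G_{i-1} = \expandersubgraph{\expandergraph}{\vset_{i-1}}$, so that the left vertices of $G_{i-1}$ are precisely $\leftvertexset\setminus K_{i-1}$ and $\nbhd^{G_{i-1}}(u) = \nbhd^{\expandergraph}(u)\setminus\vset_{i-1}$ for every such vertex. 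First I would record the elementary facts that kernels are monotone, so $K_{i-1}\subseteq K_i$; that $\uset_i\subseteq K_i\setminus K_{i-1}$, since $\nbhd^{\expandergraph}(\uset_i)\subseteq\vset_i$ while $\uset_i$ is a left vertex set of $G_{i-1}$ and hence disjoint from $K_{i-1}$; that $\vset_{-1}=\uset_0=\emptyset$ gives $K_{-1}=\emptyset$ because an $(\expansionguarantee,2)$-expander with $\expansionguarantee\geq1$ has no isolated left vertices; and, crucially, the \emph{collapse property}: for any $\leftvertexsubset_2$ with $\uset_i\subseteq\leftvertexsubset_2\subseteq K_i\setminus K_{i-1}$ one has $\boundary^{G_{i-1}}(\leftvertexsubset_2)\subseteq\boundary^{G_{i-1}}(\uset_i)$, because every $u\in K_i\setminus K_{i-1}$ satisfies $\nbhd^{G_{i-1}}(u)\subseteq\nbhd^{G_{i-1}}(\uset_i)$, so a $G_{i-1}$-unique neighbour of $\leftvertexsubset_2$ is adjacent to some vertex of $\uset_i\subseteq\leftvertexsubset_2$ and is therefore forced to have its unique $\leftvertexsubset_2$-neighbour in $\uset_i$; hence $\setsize{\boundary^{G_{i-1}}(\leftvertexsubset_2)}\leq\setsize{\uset_i}$ by the choice of $\uset_i$.

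For item~\ref{item:proofclosure1} at index $i$, fix $\leftvertexsubset$ with $K_{i-1}\cup\uset_i\subseteq\leftvertexsubset\subseteq K_i$ and write $\leftvertexsubset = K_{i-1}\disjointunion\leftvertexsubset_2$ with $\uset_i\subseteq\leftvertexsubset_2\subseteq K_i\setminus K_{i-1}$. I would partition $\boundary^{\expandergraph}(\leftvertexsubset)\setminus\vset_0$ according to whether a boundary vertex lies in $\vset_{i-1}$. If $v\in\boundary^{\expandergraph}(\leftvertexsubset)$ and $v\notin\vset_{i-1}$, then $v$ is not adjacent to $K_{i-1}$ (whose neighbourhood sits inside $\vset_{i-1}$), so the unique $\leftvertexsubset$-neighbour of $v$ lies in $\leftvertexsubset_2$; passing to the induced subgraph this says $v\in\boundary^{G_{i-1}}(\leftvertexsubset_2)\subseteq\boundary^{G_{i-1}}(\uset_i)$, so there are at most $\setsize{\uset_i}$ such $v$. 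If instead $v\in\boundary^{\expandergraph}(\leftvertexsubset)\cap(\vset_{i-1}\setminus\vset_0)$, then $v\in\nbhd^{\expandergraph}(\uset_j)$ for some $j\leq i-1$, hence $v$ is adjacent to a vertex of $\uset_j\subseteq K_j\subseteq K_{i-1}\subseteq\leftvertexsubset$; as this neighbour must then be the unique $\leftvertexsubset$-neighbour of $v$, we get $v\in\boundary^{\expandergraph}(K_{i-1})\setminus\vset_0$, and by item~\ref{item:proofclosure1} for index $i-1$ applied to $\leftvertexsubset'=K_{i-1}$ there are at most $\setsize{K_{i-1}}$ such $v$. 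Adding up yields $\setsize{\boundary^{\expandergraph}(\leftvertexsubset)\setminus\vset_0}\leq\setsize{K_{i-1}}+\setsize{\uset_i}\leq\setsize{K_i}$, the last step by $\uset_i\subseteq K_i\setminus K_{i-1}$. (The base case $i=0$ is degenerate: $\leftvertexsubset\subseteq K_0$ forces $\boundary^{\expandergraph}(\leftvertexsubset)\subseteq\nbhd^{\expandergraph}(\leftvertexsubset)\subseteq\vset_0$, so the left-hand side is $0$.)

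For item~\ref{item:proofclosure2} at index $i$, I would first show $\setsize{K_i}\leq\expansionguarantee$. Suppose not; since $\setsize{K_{i-1}\cup\uset_i}\leq\setsize{K_{i-1}}+\setsize{\uset_i}\leq\setsize{\vset_0}+\expansionguarantee/2\leq\expansionguarantee$ by item~\ref{item:proofclosure2} for $i-1$ and the construction's bound $\setsize{\uset_i}\leq\expansionguarantee/2$, we may choose $\leftvertexsubset^*$ with $K_{i-1}\cup\uset_i\subseteq\leftvertexsubset^*\subseteq K_i$ and $\setsize{\leftvertexsubset^*}=\expansionguarantee$. Then the bound just proved for item~\ref{item:proofclosure1} gives $\setsize{\boundary^{\expandergraph}(\leftvertexsubset^*)}\leq\setsize{\boundary^{\expandergraph}(\leftvertexsubset^*)\cap\vset_0}+\setsize{K_{i-1}}+\setsize{\uset_i}\leq\setsize{\vset_0}+\setsize{\vset_0}+\expansionguarantee/2\leq 3\expansionguarantee/2$, contradicting $\setsize{\boundary^{\expandergraph}(\leftvertexsubset^*)}\geq2\expansionguarantee$ from the $(\expansionguarantee,2)$-expansion of $\expandergraph$. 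Hence $\setsize{K_i}\leq\expansionguarantee$, and now applying the expansion directly to $K_i$ gives $2\setsize{K_i}\leq\setsize{\boundary^{\expandergraph}(K_i)}=\setsize{\boundary^{\expandergraph}(K_i)\cap\vset_0}+\setsize{\boundary^{\expandergraph}(K_i)\setminus\vset_0}\leq\setsize{\vset_0}+\setsize{K_i}$ by item~\ref{item:proofclosure1}, so $\setsize{K_i}\leq\setsize{\vset_0}$. The base case $i=0$ runs identically with $\vset_{-1}=\uset_0=\emptyset$.

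The step I expect to be the main obstacle is not any single deduction but the accounting in item~\ref{item:proofclosure2}: ensuring $\setsize{K_i}\leq\expansionguarantee$ so that the $(\expansionguarantee,2)$-expansion of $\expandergraph$ can legitimately be invoked on $K_i$ (and on the auxiliary set $\leftvertexsubset^*$), which forces the chain $\setsize{K_{i-1}}\leq\setsize{\vset_0}\leq\expansionguarantee/2$ and $\setsize{\uset_i}\leq\expansionguarantee/2$ to be kept tight throughout. This is also where the expansion factor being exactly $2$ is used: with factor $c>1$ one would only obtain $\setsize{K_i}\leq\setsize{\vset_0}/(c-1)$ and the margin in the $\setsize{K_i}\leq\expansionguarantee$ argument would shrink. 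The genuinely load-bearing idea, by contrast, is the collapse property, which lets the ``new'' part of the boundary (the part outside $\vset_{i-1}$) be charged entirely to $\uset_i$ rather than to the a priori much larger set $K_i\setminus K_{i-1}$.
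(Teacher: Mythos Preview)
Your proposal is correct and follows essentially the same approach as the paper's proof: both are inductions on~$i$ that bound $\setsize{\boundary^{\expandergraph}(\leftvertexsubset)\setminus\vset_0}$ by $\setsize{\Ker(\vset_{i-1})} + \setsize{\uset_i}$, splitting the boundary according to whether vertices lie inside or outside~$\vset_{i-1}$, and then derive Property~2 from Property~1 via the two-case argument on $\setsize{\Ker(\vset_i)} \lessgtr \expansionguarantee$. The only organisational difference is that the paper first reduces $\leftvertexsubset$ to $\ubar = \Ker(\vset_{i-1}) \cup \uset_i$ (using $\bigcup_{j\leq i} \uset_j \subseteq \ubar \subseteq \leftvertexsubset$) before decomposing, whereas you decompose $\leftvertexsubset = \Ker(\vset_{i-1}) \sqcup \leftvertexsubset_2$ directly and handle the surplus $\leftvertexsubset_2 \setminus \uset_i$ via your ``collapse property''; these are two phrasings of the same observation that $\vset_i \setminus \vset_{i-1} \subseteq \nbhd^{\expandergraph}(\uset_i)$.
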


\newcommand{\propclosureone}{Property~\ref{item:proofclosure1}\xspace} 
\newcommand{\propclosuredtwo}{Property~\ref{item:proofclosure2}\xspace} 
  \newcommand{\claimclosure}{Claim~\ref{claim:closure}\xspace}

  We establish Claim~\ref{claim:closure} by induction.
  For the base case $i=0$, 
  \propclosureone  
  holds since
  $\leftvertexsubset
  \subseteq \Ker(\vset_0)$ 
  implies  that
  $\boundary^{\expandergraph}(\leftvertexsubset)\subseteq \vset_0$.  
  For 
  \propclosuredtwo,
  suppose that $\setsize{\Ker(\vset_0)}\leq
  \expansionguarantee$.  
  Then by the expansion of~$\expandergraph$ we have that
  $
  2\setsize{\Ker(\vset_0)} \leq
  \setsize{\boundary^{\expandergraph}(\Ker(\vset_0))}$,
  and in combination with
  $\boundary^{\expandergraph}(\ker(\vset_0)) \subseteq \vset_0$
  this implies 
  $\setsize{\Ker(\vset_0)} \leq \frac{1}{2}\setsize{\vset_0}$.
  If instead
  $\setsize{\Ker(\vset_0)} > \expansionguarantee$,
  then we can find a subset
  $\leftvertexsubset\subseteq \Ker(\vset_0)$ of size
  $\setsize{\leftvertexsubset} = \expansionguarantee$
  for which it holds by expansion that
  $\setsize{\boundary^{\expandergraph}(\leftvertexsubset)}
  \geq
  2\expansionguarantee$.
  But this is a contradiction since
  as argued above we should have
  $\Setsize{\boundary^{\expandergraph}(\leftvertexsubset)}
  \leq 
  \setsize{\vset_0}
  \leq 
  \expansionguarantee/2$.

  For the induction step, suppose that
  \propclosureone  and   \propclosuredtwo both hold
  \mbox{for $i-1$}. 
  Let   us write
  $\ubar = \Ker(\vset_{i-1})\cup \uset_{i}$
  and consider any 
  $\leftvertexsubset$
  such that
  $
  \ubar
  \subseteq \leftvertexsubset \subseteq \Ker(\vset_i)$. 
  We claim that every 
  vertex
  in~$\boundary^{\expandergraph}(\utilde)$ is
  either   in the boundary
  $\boundary^{\expandergraph}(\ubar)$
  or is
  a member of~$\vset_0$.    To see why this is so, observe that since
  $\utilde\subseteq \Ker(\vset_i)$ we have
  $\boundary^{\expandergraph}(\utilde) 
  \subseteq \vset_i 
  = \vset_0 \cup
  \bigcup_{j=1}^{i} \nbhd^{\expandergraph}(\uset_j)$.  
  Furthermore,   note that
  $\bigcup_{j=1}^{i} \uset_j \subseteq
  \ubar\subseteq \utilde$
  holds (which is  due to the fact that
  $\nbhd(\Ker(\rightvertexsubset)) \subseteq \rightvertexsubset$
  for any~$\rightvertexsubset$). 
  Hence,
  for any
  $v \in \boundary^{\expandergraph}(\utilde) \setminus \vset_0$
  it must be the case that
  $v \in \bigcup_{j=1}^{i} \nbhd^{\expandergraph}(\uset_j)$,
  and so the unique neighbour of~$v$ on the left is
  contained in~$\bigcup_{j=1}^{i} \uset_j$ 
  and 
  therefore
  also in $\ubar$, implying that
  $v\in \boundary(\ubar)$. 
  From this we can conclude that
  \begin{equation}
    \label{eq:boundary-U-prime}
    \boundary^{\expandergraph}(\utilde)\setminus \vset_0 \subseteq
    \boundary^{\expandergraph}(\ubar)\setminus \vset_0 
    \eqcomma
  \end{equation}
  and we will use this to show that
  \begin{equation}
    \label{eq:boundary-U-star}
    \Setsize{\boundary^{\expandergraph}(\ubar) \setminus \vset_0}
    =
    \Setsize{\boundary^{\expandergraph}(\Ker(\vset_{i-1}) \cup
      \uset_{i}) \setminus \vset_0}
    \leq 
    \setsize{\Ker(\vset_i)}
  \end{equation}
  in order
  to prove \propclosureone.

  By definition, it holds that 
  every vertex in $\vset_{i-1}\setminus\vset_0$
  has at least one neighbour in $\Ker(\vset_{i-1})$. 
  It follows that for
  $\ubar = \Ker(\vset_{i-1})\cup \uset_{i}$
  all new boundary vertices in
  $
  \boundary^{\expandergraph}(\ubar)
  \setminus
  \boundary^{\expandergraph}(\Ker(\vset_{i-1}))
  $ 
  are either from~$\vset_0$  or from the boundary
  $\boundary^{\expandersubgraph{\expandergraph}{\vset_{i-1}}}(\uset_i)$
  of $\uset_i$ that lies outside of $\vset_{i-1}$.   
  Therefore we have
  \begin{equation}
    \label{eq:boundary-ker-V-iminus1-U-i}
    \boundary^{\expandergraph}(\ubar) \setminus \vset_0
    =
    \boundary^{\expandergraph} \bigl( \Ker(\vset_{i-1}) \cup
    \uset_{i} \bigr)
    \setminus \vset_0 
    \subseteq
    \bigl( \boundary^{\expandergraph}(\Ker(\vset_{i-1})) \setminus \vset_0\bigr)
    \disjointunion
    \boundary^{\expandersubgraph{\expandergraph}{\vset_{i-1}}}(\uset_i)    
    \eqperiod
  \end{equation}
  Since we have chosen $\uset_i$ so that it does not satisfy the
  expansion condition we know that
  \begin{equation}
    \label{eq:boundary-G-induced-U-i}
    \Setsize{\boundary^{\expandersubgraph{\expandergraph}{\vset_{i-1}}}(\uset_i)}
    \leq \setsize{\uset_i}    
  \end{equation}
  and by the inductive hypothesis for  \propclosureone it holds that
  \begin{equation}
    \label{eq:using-IH-prop-1}
    \Setsize{\boundary^{\expandergraph}(\Ker(\vset_{i-1})) \setminus \vset_0}
    \leq
    \setsize{\Ker(\vset_{i-1})} \eqperiod     
  \end{equation}
  Combining
  \refeq{eq:boundary-U-prime}
  with
  \mbox{\refeq{eq:boundary-ker-V-iminus1-U-i}--\refeq{eq:using-IH-prop-1}}
  we conclude that
  \begin{multline}
    \label{eq:induction-step-punchline-prop1}
    \Setsize{\boundary^{\expandergraph}(\leftvertexsubset)
      \setminus \vset_0}
    \leq
    \Setsize{\boundary^{\expandergraph}(\Ker(\vset_{i-1})
      \cup \uset_{i})
      \setminus \vset_0} 
    \leq 
    \\
    \leq 
    \Setsize{
      \bigl( \boundary^{\expandergraph}(\Ker(\vset_{i-1})) \setminus \vset_0\bigr)
    }
    +
    \Setsize{\boundary^{\expandersubgraph{\expandergraph}{\vset_{i-1}}}(\uset_i)}    
    \leq
    \setsize{\Ker(\vset_{i-1})} + \setsize{\uset_i}
    \leq
    \setsize{\Ker(\vset_i)}
    \eqcomma
  \end{multline}
  where the last inequality holds since
  $\Ker(\vset_{i-1})$ and $\uset_i$ are disjoint subsets of
  $\Ker(\vset_i)$.   
  This completes the inductive step for \propclosureone.
  
  To show \propclosuredtwo, let us first assume that
  $\setsize{\Ker(\vset_i)} \leq   \expansionguarantee$.  
  Then by the expansion properties of~$\expandergraph$
  together with
  \propclosureone applied to the set
  $\utilde = \Ker(\vset_i)$ 
  we have 
  \begin{equation}
    \label{eq:prop2-1}
    2\setsize{\Ker(\vset_i)}
    \leq
    \Setsize{\boundary^{\expandergraph}(\Ker(\vset_i))} 
    \leq 
    \setsize{\vset_0} + \setsize{\Ker(\vset_i)}    
    \eqcomma
  \end{equation}
  from which it follows that
  \begin{equation}
    \label{eq:prop2-2}
    \setsize{\Ker(\vset_i)}
    \leq
    \setsize{\vset_0}
    \eqperiod
  \end{equation} 
  If instead
  $\setsize{\Ker(\vset_i)} >
  \expansionguarantee$,  
  then by the inductive hypothesis we know that
  $
  \setsize{\Ker(\vset_{i-1})}
  \leq 
  \setsize{\vset_0}
  \leq 
  \expansionguarantee/2
  $ 
  and by
  construction we have
  $\setsize{\uset_i}\leq\expansionguarantee/2$. 
  Therefore,
  there must exist a vertex set~$\leftvertexsubset$ 
  of size~$\expansionguarantee$
  satisfying the condition
  $\Ker(\vset_{i-1})\cup\uset_i\subseteq\leftvertexsubset\subseteq
  \Ker(\vset_i)$
  in \propclosureone.
  From the expansion properties of~$\expandergraph$ we conclude that
  $\setsize{\boundary(\leftvertexsubset)}\geq2\expansionguarantee$,
  which is a contradiction because 
  for sets~$\leftvertexsubset$ 
  satisfying the conditions  in \propclosureone we
  derived~\refeq{eq:induction-step-punchline-prop1}, 
  which implies that
  $\setsize{\boundary(\leftvertexsubset)}\leq \setsize{\vset_0}
  +\setsize{\Ker(\vset_{i-1})} +\setsize{\uset_i} \leq
  3\expansionguarantee/2$. 
  The claim follows by the induction principle.
\end{proof}

\begin{lem:expanderexists}
  \newexpanderexistTEXT{}
\end{lem:expanderexists}

\begin{proof}
  Let $\leftvertexset$ and $\rightvertexset$ be two disjoint sets of
  vertices of size
  $
  \setsize{\leftvertexset}=\leftsize=
  \Floor{\rightsize^{\smalldelta\expanderdegree}}
  $
  and $\setsize{\rightvertexset}=\rightsize$.   
  For every $u\in\leftvertexset$ we choose $\expanderdegree$ times a
  neighbour $v\in\rightvertexset$ uniformly at random with
  repetitions.  
  This gives us a bipartite graph $\expandergraph = (\leftvertexset
  \disjointunion  \rightvertexset,E)$  of left-degree at most
  $\expanderdegree$.  
  In the sequel we show that $\expandergraph$ is almost surely an
  \boundaryexp{\expdegree}{\expguarantee}{2} as $\rightsize \to \infty$.  
  
  First note that for every set
  $\leftvertexsubset\subseteq\leftvertexset$ all neighbours $v\in
  \nbhd(\leftvertexsubset)\setminus\boundary(\leftvertexsubset)$ that
  are not in the boundary of $\leftvertexsubset$ have at least two
  neighbours in $\leftvertexsubset$.  
  Since there are at most
  $\expanderdegree\setsize\leftvertexsubset -
  \setsize{\boundary(\leftvertexsubset)}$ 
  edges between $\leftvertexsubset$ and
  $\nbhd(\leftvertexsubset)\setminus\boundary(\leftvertexsubset)$, it
  follows that
  $\setsize{\nbhd(\leftvertexsubset) \setminus
    \boundary(\leftvertexsubset)} 
  \leq
  (\expanderdegree\setsize\leftvertexsubset -
  \setsize{\boundary(\leftvertexsubset)})/2$ 
  and hence  
  \begin{multline}
    \label{eq:neighbourhoodbound}
    \setsize{\nbhd(\leftvertexsubset)} 
    =
    \Setsize{\nbhd(\leftvertexsubset) 
      \intersection
      \boundary(\leftvertexsubset)} 
    +
    \Setsize{\nbhd(\leftvertexsubset) \setminus \boundary(\leftvertexsubset)} 
    \leq
    \\
    \leq
    \setsize{\boundary(\leftvertexsubset)} +
    \frac{\expanderdegree\setsize\leftvertexsubset -
      \setsize{\boundary(\leftvertexsubset)}}
    {2} 
    =
    \frac{\expanderdegree\setsize\leftvertexsubset +
      \setsize{\boundary(\leftvertexsubset)}}
    {2} 
    \eqperiod
  \end{multline}
  If $\expandergraph$ is not an
  \boundaryexp{\expdegree}{\expguarantee}{2}, then there is a set
  $\leftvertexsubset$ of size
  $\leftvertexsubsetsize\leq\expansionguarantee$ that has a boundary
  $\boundary(\leftvertexsubset)$ of size at most $2
  \leftvertexsubsetsize$ and from \eqref{eq:neighbourhoodbound} it
  follows that $\setsize{\nbhd(\leftvertexsubset)}\leq
  (1+\expanderdegree/2)\leftvertexsubsetsize$.  
  By a union bound argument we obtain
  \begin{subequations}
    \begin{align}
      &\Pr[\expandergraph 
        \text{ is not an \boundaryexp{\expdegree}{\expguarantee}{2}}]
      \\
      \leq 
      & \sum^{\expansionguarantee}_{\leftvertexsubsetsize=1}      
        \sum_{
        \leftvertexsubset \subseteq [\leftsize] ; \, 
        \setsize{\leftvertexsubset} = \leftvertexsubsetsize
        }
        \Pr\big[\setsize{\boundary(\leftvertexsubset)}\leq 
        2 \leftvertexsubsetsize         
        \big] 
      \\
      \leq 
      &\sum^{\expansionguarantee}_{\leftvertexsubsetsize=1}
        \binom{\leftsize}{\leftvertexsubsetsize}
        \Pr\big[\setsize{\nbhd(\leftvertexsubset)}\leq
        (1+\expanderdegree/2)\leftvertexsubsetsize \text{ for some fixed }
        \setsize{\leftvertexsubset}=\leftvertexsubsetsize\big] 
      \\ 
      \leq 
      &\sum^{\expansionguarantee}_{\leftvertexsubsetsize=1}
        \binom{\leftsize}{\leftvertexsubsetsize}
        \binom{\rightsize}{(1+\expanderdegree/2)\leftvertexsubsetsize}
        \left(
        \frac{(1+\expanderdegree/2)\leftvertexsubsetsize}{\rightsize}
        \right)^{\expanderdegree\leftvertexsubsetsize} 
        \label{eq:binom}
      \\ 
      \leq 
      &\sum^{\expansionguarantee}_{\leftvertexsubsetsize=1} 
        \leftsize^\leftvertexsubsetsize 
        \left(
        \frac
        {\eulernumber\rightsize}
        {(1+\expanderdegree/2)\leftvertexsubsetsize}
        \right)^{(1+\expanderdegree/2)\leftvertexsubsetsize}
        \left(
        (1+\expanderdegree/2)\leftvertexsubsetsize
        \right)^{\expanderdegree\leftvertexsubsetsize}
        \rightsize^{-\expanderdegree\leftvertexsubsetsize} 
        \label{eq:binombound}
      \\
      = 
      &\sum^{\expansionguarantee}_{\leftvertexsubsetsize=1} 
        \leftsize^\leftvertexsubsetsize 
        (\eulernumber\rightsize)^{(1+\expanderdegree/2)\leftvertexsubsetsize}
        \left(
        (1+\expanderdegree/2)\leftvertexsubsetsize
        \right)^{(\expanderdegree/2-1)\leftvertexsubsetsize}
        \rightsize^{-\expanderdegree\leftvertexsubsetsize} 
      \\      
      \label{eq:fourth-from-end}
      \leq      
      &\sum^{\expansionguarantee}_{\leftvertexsubsetsize=1} 
        \rightsize^{\smalldelta\expanderdegree\ell}
        (\eulernumber\rightsize)^{(1+\expanderdegree/2)\leftvertexsubsetsize}
        \left(
        (1+\expanderdegree/2)\leftvertexsubsetsize
        \right)^{(\expanderdegree/2-1)\leftvertexsubsetsize}
        \rightsize^{-\expanderdegree\leftvertexsubsetsize} 
      \\
      \label{eq:third-from-end}
      = 
      & \sum_{\ell=1}^\expansionguarantee
        \rightsize^{\smalldelta\expanderdegree\ell}
        \rightsize^{
        \frac{\log \eulernumber}{\log \rightsize}
        (1+\expanderdegree/2)\leftvertexsubsetsize
        }
        \rightsize^{\frac{1}{\log \rightsize}
        \log\bigl(
        (\expanderdegree/2+1)\ell
        \bigr)
        (\expanderdegree/2-1)\leftvertexsubsetsize}
        \rightsize^{(-\expanderdegree/2+1)\ell}
    \\
      \label{eq:next-to-final}
      \leq  
      &\sum_{\ell=1}^\expansionguarantee
        \rightsize^{
        \bigl(
        \frac{\log \eulernumber}{\log \rightsize} \expanderdegree
        + \frac{1}{\log \rightsize}
        \log(\expanderdegree\expansionguarantee) 
        (\expanderdegree/2-1) 
        - \expanderdegree/2 + 1 +
        \smalldelta\expanderdegree
        \bigr)
        \ell} 
      \\
      \label{eq:final}
      =
      &\sum_{\ell=1}^\expansionguarantee
        \rightsize^{
        \bigl(
        \frac{\log \eulernumber}{\log \rightsize} +
        \frac{1}{\log \rightsize}
        \log(\expanderdegree\expansionguarantee) 
        (1/2-1/\expanderdegree)
        - 1/2 + 1/\expanderdegree +
        \smalldelta\bigr)\expanderdegree\ell}
        \eqcomma
    \end{align}
  \end{subequations}
  where to get
  from line \eqref{eq:binom} to \eqref{eq:binombound} we used that 
  $\binom{n}{k}\leq\left(\frac{\eulernumber{}n}{k}\right)^k$ for
  the Euler number $\eulernumber$,
  from~\refeq{eq:fourth-from-end} to~\refeq{eq:third-from-end}
  we used that 
  $n^{\log a / \log n} = a$,
  and from~\refeq{eq:third-from-end} to~\refeq{eq:next-to-final}
  that $\expanderdegree \geq \mindegree \geq 2$
  and 
  $\ell \leq \expansionguarantee$.
  In order to show that 
  \refeq{eq:final}
  is bounded away from 1, it suffices to 
  demonstrate
  that the   
  expression 
  \begin{equation}
    \frac{\log \eulernumber}{\log \rightsize}+\frac{1}{\log \rightsize}\log(
    \expanderdegree\expansionguarantee) 
    (1/2-1/\expanderdegree) 
    -1/2+1/\expanderdegree +
    \smalldelta
  \end{equation}
  is negative and bounded away from zero.
  Set
  $\lambda = \smallepsilon/2 - 1/\mindegree - \smalldelta>0$ 
  and choose $\rightsizezero = 3^{2/\lambda}$.
  By the upper bounds on $\expansionguarantee$ and $\expanderdegree$ it
  follows that 
  \begin{subequations}
    \begin{align}
      &\log \eulernumber/\log \rightsize+\log(
        \expanderdegree\expansionguarantee) 
        (1/2-1/\expanderdegree) /\log \rightsize
        -1/2+1/\expanderdegree +
        \smalldelta
      \\
      \leq 
      &\log \eulernumber/\log \rightsize+(1/2)\log(
        \rightsize^{\frac12-\smallepsilon}\rightsize^{\frac12}) /\log \rightsize
        -1/2+1/\expanderdegree +
        \smalldelta
      \\
      = 
      &\log \eulernumber/\log \rightsize-\smallepsilon/2 + 1/\expanderdegree +
        \smalldelta
      \\
      \leq
      &\log \eulernumber/\log \rightsize-\smallepsilon/2 + 1/\mindegree +
        \smalldelta
      \\
      =
      &\log \eulernumber/\log \rightsize-\lambda
      \\
      \leq 
      &-\lambda/2
        \eqcomma
    \end{align}
  \end{subequations}
  where the last inequality holds since
  $\rightsize\geq\rightsizezero > \eulernumber^{2/\lambda}$.
  It follows that the
  probability that $\expandergraph$ is not an
  \boundaryexp{\expdegree}{\expguarantee}{2} is bounded by
  \begin{align}
    \sum_{\ell=1}^\expansionguarantee
    \rightsize^{\bigl(-\lambda/2\bigr)\expanderdegree\ell} \leq
    \sum_{\ell=1}^\expansionguarantee
    \rightsizezero^{\bigl(-\lambda/2\bigr)\expanderdegree\ell} \leq
    \sum_{\ell=1}^\infty
    \bigl(\tfrac13\bigr)^{\expanderdegree\ell} \leq \tfrac12
    \eqcomma
  \end{align}
  which establishes the lemma.
\end{proof}

\end{document}